\documentclass[11pt, fleqn]{article}
\usepackage[english]{babel}
\usepackage[utf8]{inputenc}
\usepackage[T1]{fontenc}
\usepackage{lmodern}
\usepackage{amssymb}
\usepackage{color}
\usepackage[lmargin=1.1in,rmargin=1.1in,bottom=1.3in,top=1.3in,twoside=False]{geometry}

\usepackage{amsfonts}
\usepackage{amsmath}
\usepackage{amssymb}
\usepackage{amsthm}
\usepackage[english]{babel}
\usepackage{complexity}
\usepackage{enumerate}
\usepackage{mathrsfs}
\usepackage{graphics}
\usepackage[pagewise]{lineno}
\usepackage{mathtools}
\usepackage{microtype}
\usepackage[defaultlines=4,all]{nowidow}
\usepackage[many]{tcolorbox}
\usepackage{todonotes}
\usepackage{xcolor}
\usepackage{xspace}

\usepackage[hidelinks]{hyperref}
\usepackage[nameinlink]{cleveref}

\usetikzlibrary{calc,arrows, automata, fit, shapes, arrows.meta, positioning}

\tcolorboxenvironment{theorem}{
    colframe=cyan, sharp corners=west, 
    colback=cyan!10,
	rightrule=0pt,
	toprule=0pt,
	bottomrule=0pt,
	top=0pt,
	right=0pt,
	bottom=0pt,
 }
\tcolorboxenvironment{corollary}{
    colframe=blue, sharp corners=west, 
    colback=blue!10,
	rightrule=0pt,
	toprule=0pt,
	bottomrule=0pt,
	top=0pt,
	right=0pt,
	bottom=0pt,
 }
\tcolorboxenvironment{lemma}{
    colframe=red, sharp corners=west,
	colback=red!10,
	rightrule=0pt,
	toprule=0pt,
	bottomrule=0pt,
	top=0pt,
	right=0pt,
	bottom=0pt,
 }

 \tcolorboxenvironment{result}{
    colframe=yellow, sharp corners=west, 
    colback=yellow!10,
	rightrule=0pt,
	toprule=0pt,
	bottomrule=0pt,
	top=0pt,
	right=0pt,
	bottom=0pt,
 }

\renewcommand{\epsilon}{\varepsilon}
\renewcommand{\phi}{\varphi}



\newcommand{\Cc}{\mathscr{C}}

\newcommand{\Oof}{\mathcal{O}}
\renewcommand{\phi}{\varphi}
\renewcommand{\vec}{\mathbf}
\renewcommand{\epsilon}{\varepsilon}
\newcommand{\budget}{b}

\newcommand{\cred}{\textsf{red}}
\newcommand{\cgreen}{\textsf{green}}
\newcommand{\cblue}{\textsf{blue}}




\newcommand{\sol}{\texttt{Sol}}
\newcommand{\soli}{\sol_i}
\newcommand{\solj}{\sol_j}
\newcommand{\solh}{\sol_h}
\newcommand{\solis}{\soli[s]}
\renewcommand{\S}{\texttt{S}}
\newcommand{\Si}{\S_i}
\newcommand{\valpha}{\texttt{A}}
\newcommand{\vbeta}{\texttt{B}}

\theoremstyle{remark}
\newtheorem{theorem}{Theorem}[section]
\newtheorem{corollary}{Corollary}[section]
\newtheorem{definition}{Definition}[section]

\newtheorem{proposition}{Proposition}[section]

\crefname{corollary}{Corollary}{Corollaries}
\crefname{lemma}{Lemma}{Lemmas}
\crefname{section}{Section}{Sections}

\newtheorem{result}{}
\newtheorem*{result*}{}

\begin{document}

\title{On Solution Discovery via Reconfiguration}

\author{Michael R.~Fellows \\
University of Bergen, Norway\\
\texttt{michael.fellows@uib.no}
\and 
Mario Grobler\\ University of Bremen, Germany\\ \texttt{grobler@uni-bremen.de} 
\and 
Nicole Megow\\ University of Bremen, Germany \\\texttt{nicole.megow@uni-bremen.de}
\and 
Amer E.~Mouawad\\
American University of Beirut, Lebanon\\
\texttt{aa368@aub.edu.lb}
\and
Vijayaragunathan Ramamoorthi\\University of Bremen, Germany\\\texttt{vira@uni-bremen.de}
\and
Frances A.~Rosamond\\
University of Bergen, Norway\\
\texttt{Frances.Rosamond@uib.no}
\and
Daniel Schmand \\ University of Bremen, Germany \\\texttt{schmand@uni-bremen.de}
\and 
Sebastian Siebertz\\
University of Bremen, Germany\\
\texttt{siebertz@uni-bremen.de}}

  

\maketitle
\begin{abstract}
%
%
%
The dynamics of real-world applications and systems require efficient methods for improving infeasible solutions or restoring corrupted ones by making modifications to the current state of a system in a restricted way.
We propose a new framework of \emph{\mbox{solution} discovery via reconfiguration} for constructing a feasible solution for a given problem by executing a sequence of small modifications starting from a given state. 
Our framework integrates and formalizes different aspects of classical local search, reoptimization, 
and combinatorial reconfiguration.
We exemplify our framework on a multitude of fundamental combinatorial problems, namely \textsc{Vertex Cover}, \textsc{Independent Set}, \textsc{Dominating Set}, and \textsc{Coloring}. We study the classical as well as the parameterized complexity of the solution discovery variants of 
those problems and explore the boundary between tractable and intractable instances.

\end{abstract}

\section{Introduction}

In many dynamic real-world applications of decision-making problems, feasible solutions must be found starting from a certain predetermined system state. 
This is in contrast to the classical 
academic problems where we 
are allowed to compute a feasible solution from scratch. However, when constructing a new solution from scratch, we have no control over the difference between the current state and the new target one. 
In this work, we develop a new framework, where we aim for modifying some, possibly infeasible, solution to a feasible one via a bounded number of ``small'' modification steps.


Consider for example 
a frequency assignment problem~\cite{AardalHKMS07,Hale1980} where 
agents 
communicate via wireless message transmissions. 
Each agent can broadcast over a fixed frequency. 
Any two nearby 
agents are required to operate over different frequencies, as otherwise their signals would interfere, which must be avoided.
The objective is to assign a given number of frequencies such that no interference occurs. 
This problem can conveniently be modeled as a 
vertex coloring problem on graphs. 
The vertices of the graph represent the agents and two vertices are connected by an edge if they are  close to one another and frequencies of the corresponding agents could interfere. 
Then, the frequencies correspond to colors assigned to the vertices, as adjacent vertices must receive distinct colors.
In the solution discovery variant of the graph coloring problem, 
we are given a non-proper coloring of a graph using a fixed number of colors and a fixed budget. The goal is to 
transform the given non-proper coloring into a proper one such that the number of recoloring steps does not exceed the budget (various definitions of a recoloring step exist; we discuss the most common ones 
later). 
The above agent communication example corresponds to the setting where agents have already been assigned frequencies that cause some interference. This might happen due to various possible 
changes to a running system such as the introduction of new agents, the change of locations, or the increase of the broadcasting range. 
Recomputing a valid coloring from scratch might be undesirable as this may induce many changes in the system. 
Depending on the given infeasible assignment, it might be possible to satisfy the feasibility constraints by just a few controlled changes to the coloring. A change however, may trigger other changes, and the question is whether a ``cheap'' reconfiguration of the system into a valid state is possible. 

As a second motivation for our framework consider, e.g., a group of mobile agents tasked with monitoring (parts of) a country (represented by a graph) for security threats, such as natural disasters. 
Each agent is responsible for monitoring a certain subset of the country (a few cities), and the cities are connected in a way that reflects the dependencies/adjacencies between the different parts of the network. 
The agents in a dominating set would represent the agents that are able to monitor the entirety of the parts of interest.
By identifying a small dominating set in this graph, it is possible to identify/place a small number of agents that are responsible  for maintaining the security of the country. 
In the real world, agents can be retired from or added to the system or the areas of interest to monitor (i.e.\ cities) can also change over time. 
Given that it is most likely expensive to move the agents around (think of agents as being mobile monitoring vehicles), it is desirable to be able to compute a new dominating set in the modified system while accounting for the cost of transformation steps.

Our new framework 
models such solution discovery versions of static (decision) problems. 
Solution discovery problems are similar in flavor to  other approaches transforming one solution to another such as 
\emph{local search}, \emph{reoptimization}, \emph{dynamic algorithms}, and \emph{combinatorial reconfiguration}. 

\emph{Local search} is an algorithmic paradigm that is based on the iterative improvement of (possibly infeasible) solutions by searching for a better solution in a well-defined neighborhood (typically defined by certain local changes). A local search procedure terminates once there is no improved solution in the neighborhood. Strongly related is the framework of {\em reoptimization} where one looks for 
solutions for a problem instance when given solutions to neighboring instances. 
Local search and reoptimization have proven to be very powerful for many problems in theory and practice; see,  e.g.,~\cite{aartsL1997survey,galinier2006survey,DBLP:books/tf/18/BockenhauerHK18}.
In contrast to those models, in a solution discovery problem, 
a sequence of reconfiguration steps may not necessarily find improved solutions at each step, as long as we arrive at a feasible solution within a bounded number of steps.  
%
%

In the area of \emph{dynamic (graph) algorithms}, the dynamic nature of real-world systems is modeled  via a sequence of additions/deletions of vertices/edges in a graph. The goal 
is to output an updated solution much more efficiently than a static algorithm that is computing one from scratch. Dynamic graph algorithms have been studied extensively, also for coloring graphs~\cite{BhattacharyaCHN18}, 
maximum independent sets~\cite{
AssadiOSS18}, 
and minimum vertex covers~\cite{OnakR10,BhattacharyaHI18}. The 
(parameterized) complexity of dynamic algorithms has been studied, e.g., in~\cite{hartung2013incremental,abu2015parameterized,krithikaST2017}.
In contrast to the solution discovery framework, changes to the system happen in a very specific way, while the reconfiguration steps are restricted only indirectly by the computation time (and space) and not by a concrete reconfiguration budget. 


In the \emph{combinatorial reconfiguration} framework~\cite{van2013complexity,nishimura2018introduction}, we are given both an initial solution and a target solution (for a fixed graph) and we aim to transform the initial solution to the target one by executing a sequence of reconfiguration steps chosen from a well-defined restricted collection of steps that is typically described by allowed token moves. 
However, in a solution discovery problem, the target feasible solution is not known (and might not even exist) and the goal is to find any feasible target solution via arbitrary intermediate (possibly infeasible) configurations whose number is bounded by the budget. 

%

\subsection{Solution discovery via reconfiguration}

We focus on fundamental graph problems, where we are given a graph $G$ and an integer $k$ and the goal is to 
find a feasible solution for an instance~$(G,k)$, e.g., finding a vertex cover, independent set, dominating set of size (at most)~$k$, or a proper vertex coloring using at most $k$ colors. 
These are prominent combinatorial problems in artificial intelligence, see e.g.~\cite{CaiHLL18,HebrardK19,HuangLWZM019,LuoHCLZZ19,Shen00EE22,WangCPLY20}, with plenty of applications to real-world problems, 
including feature selection~\cite{XieQ18}, scheduling, planning, resource allocation~\cite{BlumR08}, frequency assignment~\cite{AardalHKMS07}, 
network security~\cite{CaiLL17}, sensor systems~\cite{KavalciUD14}, and many more.



We introduce the solution discovery variant of such problems, where we are given a graph~$G$, a starting configuration of size $k$ (which is not necessarily a feasible solution), and a budget~$\budget \in \mathbb{N}$. The goal is to transform the starting configuration into a feasible solution using at most $\budget$ modification steps. Inspired by the reconfiguration framework, we mainly focus on modification steps consisting of changes along edges of the graph ({\em token sliding}). We briefly discuss other modification steps ({\em token addition/removal resp.~swapping} and {\em jumping}) that have been studied in the literature. 

As mentioned earlier, our framework is inspired by methods for finding a feasible solution starting from a predefined system state such as in local search, dynamic algorithms, and reoptimization, but it restricts the local modification steps as is common in combinatorial reconfiguration. Two applications with natural restrictions on how a solution can be modified by moving an entity along edges in a graph were presented in the previous section. More generally, we model applications where items/agents can be moved only along links in a given physical network such as road network or computer network, or whenever there are restrictions on the movement trajectories. 

We demonstrate our framework by applying it to the \textsc{Vertex Cover}, \textsc{Independent Set}, \textsc{Dominating Set}, and \textsc{Coloring} problems. These are central problems that have been extensively studied from the perspective of local search, reoptimization, and combinatorial reconfiguration. To state our results, 
we assume some familiarity with graph theory, the aforementioned problems, and (parameterized) complexity. 
All notation will be formally defined before it is first used in the following sections. 

Let $\Pi$ be a vertex subset problem, i.e., a problem defined on undirected graphs such that a solution consists of a subset of vertices. The \textsc{$\Pi$-Discovery} problem is defined as follows. 
We are given a graph $G$, a subset \mbox{$S \subseteq V(G)$} of size $k$ (which at this point is not necessarily a solution for $\Pi$), and a budget $\budget$ (as a positive integer). 
We assume that each vertex in $S$ contains a token (which corresponds to an agent).
The goal is to decide if we can move the tokens on~$S$ using at most $\budget$ moves such that the resulting set is a solution for~$\Pi$.

Dependent on the underlying problem, different notions of token moves have been established in the reconfiguration literature. That is, for token configurations $Q, Q' \subseteq V(G)$ we have the following models. 
In the \emph{token sliding model}, we say a token \emph{slides} from $u \in V(G)$ to $v \in V(G)$ if $u \in Q$, $v \not\in Q$, $v \in Q'$, $u \not\in Q'$, and $\{u,v\} \in E(G)$.
In the \emph{token jumping model}, a token \emph{jumps} from $u \in V(G)$ to $v \in V(G)$ if $u \in Q$, $v \not\in Q$, $v \in Q'$, and $u \not\in Q'$.
Finally, in the \emph{token addition/removal model}, a token is \emph{added} to vertex $v \in V(G)$ if $v \not\in Q$ and $v \in Q'$. Similarly, a token is \emph{removed} from vertex $v \in V(G)$ if~$v \in Q$ and~$v \not\in Q'$.

We note that we focus on sliding for vertex subset problems as most of our results for \mbox{\textsc{$\Pi$-Discovery}} are based on it. 
%
By instantiating~$\Pi$ with 
\textsc{Vertex Cover}, \textsc{Independent Set}, or \textsc{Dominating Set}, we obtain the \textsc{Vertex Cover Discovery} (VCD), \textsc{Independent Set Discovery} (ISD), or \textsc{Dominating Set Discovery} (DSD) problem, respectively.
%
Similar to the vertex subset discovery problems, we define the \textsc{Coloring Discovery} problem in~\cref{sec-coloring}.

\subsection{Our results}

We study the classical as well as the parameterized complexity of the aforementioned solution discovery problems and prove the following results. 

\begin{result}
\textbf{\textsc{Vertex Cover Discovery}:} We show that the problem is \textsf{NP}-complete even on planar graphs of maximum degree four and \textsf{W[1]}-hard for parameter $\budget$, even on $2$-degenerate bipartite graphs. On the positive side, we show that the problem is \textsf{XP} for parameter treewidth (that is, polynomial-time solvable on every fixed class of bounded treewidth), \textsf{FPT} for parameter $k$ on general graphs and \textsf{FPT} for parameter~$\budget$ when restricted to structurally nowhere dense classes of graphs.
\end{result}

\begin{result}
\textbf{\textsc{Independent Set Discovery}:}
We show that the problem is \textsf{NP}-complete even on planar graphs of maximum degree four, \textsf{W[1]}-hard for parameter $k + \budget$ even on graphs excluding $\{C_4, \ldots, C_p\}$ as induced subgraphs (for any constant $p$), and  \textsf{W[1]}-hard for parameter $\budget$ even on $2$-degenerate bipartite graphs. On the positive side, we show that the problem is \textsf{XP} for parameter treewidth, \textsf{FPT} for parameter $k$ on $d$-degenerate and nowhere dense classes of graphs as well as \textsf{FPT} for parameter $\budget$ on structurally nowhere dense classes of graphs. 
\end{result}

\begin{result}\textbf{\textsc{Dominating Set Discovery}:}
We show that the problem is \textsf{NP}-complete even on planar graphs of maximum degree five, \textsf{W[1]}-hard for parameter $k + \budget$ even on bipartite graphs, and \textsf{W[1]}-hard for parameter $\budget$ even on $2$-degenerate graphs. On the positive side, we show that the problem is \textsf{XP} for parameter treewidth, \textsf{FPT} for parameter~$k$ on biclique-free and semi-ladder-free graphs as well as \textsf{FPT} for parameter~$\budget$ on structurally nowhere dense classes of graphs.  
\end{result}

\begin{result}
\textbf{\textsc{Coloring Discovery}:}
We show that the problem is \textsf{NP}-complete for every $k\geq3$ even on planar bipartite graphs (that is, para-\textsf{NP}-hard for parameter $k$), \textsf{W[1]}-hard for parameter treewidth, and \textsf{W[1]}-hard for parameter $k+\budget$ on general graphs. On the positive side, we show that the problem is polynomial-time solvable for $k=2$ and \textsf{FPT} parameterized by $k+\budget$ on structurally nowhere dense classes of graphs.
\end{result}

Our results provide an almost complete classification of the complexity of the problems on minor-closed, resp.\ monotone (i.e.\ subgraph-closed) classes of graphs. Concerning the polynomial-time solvable cases, observe that a class of graphs has bounded treewidth if and only if it excludes a planar graph as a minor~\cite{DBLP:journals/jct/RobertsonS86}. Hence, for the vertex subset discovery problems, hardness on planar graphs implies that for minor-closed classes of graphs efficient algorithms can only be expected for classes with bounded treewidth, which is exactly what we establish. \textsc{Coloring Discovery} is \textsf{NP}-complete both on planar graphs and on classes with bounded treewidth for any $k\geq 3$, hence, our tractability result for $k=2$ is the best we can hope for. 

\begin{figure}
\centering
    \begin{tikzpicture}[->, scale=.75, auto=left, remember picture,every node/.style={rectangle},inner/.style={rectangle},outer/.style={rectangle}]


    \node[outer] (TREEWIDTH) at (9,0) [rectangle, fill=white, draw=black] {Bounded treewidth};
    \node[outer] (PLANAR) at (4,0) [rectangle, fill=white, draw=black] {Planar};
    \node[outer] (MINOR) at (6,2) [rectangle, fill=white, draw=black] {$H$-minor-free};
    \node[outer] (DENSE) at (9,4) [rectangle, fill=white, draw=black] {Nowhere dense};
    \node[outer] (DEGENERATE) at (3,4) [rectangle, fill=white, draw=black] {Bounded degeneracy};
    \node[outer] (BICLIQUE) at (6,6) [rectangle, fill=white, draw=black] {Biclique-free};
    \node[outer] (SEMI) at (6,8) [rectangle, fill=white, draw=black] {Semi-ladder-free};
    \node[outer] (SDENSE) at (12,6) [rectangle, fill=white, draw=black] {Structurally nowhere dense};

    \foreach \from/\to in {PLANAR.north/MINOR,TREEWIDTH/MINOR,MINOR/DEGENERATE,MINOR/DENSE, DEGENERATE/BICLIQUE, DENSE/BICLIQUE, BICLIQUE/SEMI, DENSE/SDENSE}
    \draw[-stealth] (\from) -> (\to);

    \end{tikzpicture}
\caption{The considered graph classes. Arrows indicate inclusion.}
\label{fig-graph-classes}
\end{figure}
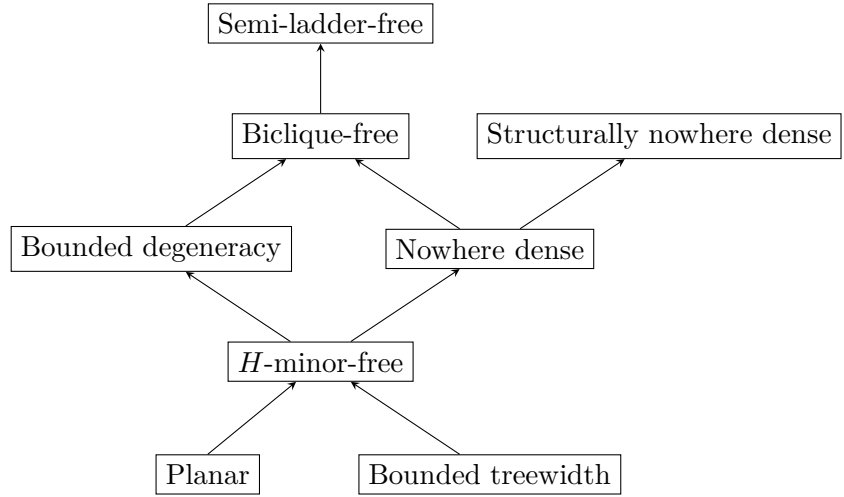

The notions of degeneracy and nowhere density are very general notions of graph sparsity. Besides the theoretical interest in these classes, many real-world networks, such as several social networks (such as physical disease propagation networks), biological networks
(such as gene interactions or brain networks), and informatics networks (such as autonomous systems), turn out to be degenerate or nowhere dense, see e.g.~\cite{brown2020exploring, demaine2019structural,farrell2015hyperbolicity}.
Degenerate and nowhere dense classes are biclique free. It is not easy to find biclique-free classes that are neither degenerate nor nowhere dense -- one interesting example being the class of all $n$-vertex graphs of girth at least $5$ with average degree~$\Oof(n^{1/5})$, see~\cite{simonovits1983extremal}. 
Structurally nowhere dense classes are a dense extension of nowhere dense classes and semi-ladder free classes are a dense extension of biclique-free classes. We refer to~\Cref{fig-graph-classes} for an overview.

We provide an almost complete classification on monotone classes of graphs for the vertex subset discovery problems and the parameter $\budget$; we establish hardness on degenerate classes and tractability on (structurally) nowhere dense classes. 
For parameter~$k$, \textsc{Vertex Cover Discovery} is tractable on all graphs. Similarly, for parameter~$k$, \textsc{Independent Set Discovery} is tractable both on degenerate and on (structurally) nowhere dense classes, and it remains open whether it is tractable on biclique-free classes. \textsc{Dominating Set Discovery} is tractable even on semi-ladder-free, and hence on biclique-free classes. 

For \textsc{Coloring Discovery} more questions remain open. It remains open whether the problem is tractable for parameter $k+\budget$ on $d$-degenerate graphs, parameter $\budget$ on planar graphs, and parameter $k$ on classes with bounded treewidth. 


\section{Preliminaries}\label{sec:prelims}

We denote the set of natural numbers by $\mathbb{N}$.
For $k \in \mathbb{N}$ we define $[k] = \{1, 2, \dots, k\}$.

\paragraph*{Graphs}
We consider finite, simple, loopless, and undirected graphs. For a graph $G$, we denote by $V(G)$ and $E(G)$ the vertex set and edge set of $G$, respectively. Two vertices $u,v\in V(G)$ with $\{u,v\}\in E(G)$ are called adjacent or neighbors. 
The vertices~$u$ and~$v$ are called the endpoints of the edge $\{u,v\}$. 
The degree of a vertex $v$ is equal to the number of  neighbors of $v$. A graph has maximum degree $d$ if all its vertices have degree at most $d$. 

A sequence $v_1,\ldots, v_q$ of pairwise distinct vertices is a path of length $q-1$ if $\{v_i,v_{i+1}\}\in E(G)$ for all $1\leq i< q$. 
A sequence $v_1,\ldots, v_q$ of pairwise distinct vertices is a cycle of length $q$ if $\{v_i,v_{i+1}\}\in E(G)$, for all $1\leq i< q$,  and $\{v_q, v_1\}\in E(G)$. We write $P_q$ to denote a path of length~$q$ and $C_q$ to denote a cycle of length $q$. 



\paragraph*{Parameterized complexity}
A \emph{parameterized problem} is a language $L\subseteq \Sigma^*\times \mathbb{N}$, where $\Sigma$ is a fixed finite alphabet. For an instance $(x,\kappa)\in \Sigma^*\times \mathbb{N}$, $\kappa$ is called the \emph{parameter}.
The problem $L$ is called \emph{fixed-parameter tractable}, \textsf{FPT} for short, if there exists an algorithm that on input $(x,\kappa)$ decides in time $f(\kappa) \cdot |(x,\kappa)|^c$ whether $(x,\kappa)\in L$, for a computable function $f$ and constant~$c$. Likewise, the problem belongs to \textsf{para-NP} if it can be solved within the same time bound by a nondeterministic algorithm. $L$ is \emph{slice-wise polynomial}, \textsf{XP} for short, if there is an algorithm deciding whether $(x,\kappa)$ belongs to $L$ in time $f(\kappa)\cdot |(x,\kappa)|^{g(\kappa)}$, for computable functions $f,g$.

The \emph{\textsf{W}-hierarchy} is a collection of parameterized complexity classes $\textsf{FPT} \subseteq \textsf{W[1]} \subseteq \ldots \subseteq \textsf{W[t]}\subseteq \ldots\subseteq \textsf{para-NP}\cap \textsf{XP}$. We have $\textsf{FPT}\neq\textsf{para-NP}$ if and only if $\textsf{P}\neq\textsf{NP}$, which is a standard assumption. Also the inclusion $\textsf{FPT}\subseteq \textsf{W[1]}$ is conjectured to be strict (and this conjecture is known to be true when assuming the exponential-time hypothesis). 
Therefore, showing intractability in the parameterized setting is usually accomplished by establishing an \textsf{FPT}-reduction from a \textsf{W}-hard problem.  

Let $L,L'\subseteq \Sigma^*\times\mathbb{N}$ be parameterized problems. A \emph{parameterized reduction} from $L$ to $L'$ is an algorithm that, given an instance $(x, \kappa)$ of $L$, outputs an instance $(x',\kappa')$ of $L'$ such that 
$(x, \kappa)\in L \Leftrightarrow (x',\kappa')\in L'$, $\kappa'\leq g(\kappa)$ for some computable function $g$, and the running time of the algorithm is bounded by $f(\kappa) \cdot |(x,\kappa)|^c$ for some computable function $f$ and constant $c$.

We refer to the textbooks~\cite{cygan2015parameterized,DBLP:series/mcs/DowneyF99,FlumGrohe2006} for extensive background on parameterized complexity.

\section{Vertex cover discovery}

In the \textsf{NP}-complete \textsc{Vertex Cover (VC)} problem~\cite{GareyJ79}, we are given a graph $G$ and an integer~$k$ and the problem is to decide whether~$G$ admits a vertex cover $C$ of size at most~$k$, i.e., a set $C$ of at most~$k$ vertices such that for every edge $\{u,v\}\in E(G)$ we have $\{u,v\}\cap C\neq \emptyset$. 

\subsection{Related work}
The \textsc{Vertex Cover} problem admits a $2$-approximation and is hard to approximate within a factor of $(2-\epsilon)$, for any $\epsilon>0$, assuming the unique games conjecture~\cite{DBLP:conf/coco/KhotR03}. 
For an integer $\lambda$, we call two sets $A$ and $B$ $\lambda$-close if their symmetric difference has size at most $\lambda$. 
Local search by exchanging with $\lambda$-close solutions, for $\lambda\in \Oof(1/poly(\epsilon))$, leads to a $(1+\epsilon)$-approximation algorithm for \textsc{Vertex Cover} on graph classes with subexponential expansion~\cite{har2017approximation}. 
The \textsc{Vertex Cover} problem is the textbook example of a fixed-parameter tractable problem~\cite{DBLP:series/mcs/DowneyF99}. The dynamic variant of the problem was studied, e.g., in~\cite{abu2015parameterized}. 

The \textsc{Vertex Cover} problem is also  very well studied under the combinatorial reconfiguration framework~\cite{van2013complexity,nishimura2018introduction}. Recall that in the reconfiguration variant of a (graph vertex subset) problem, we are given a graph $G$ and two feasible solutions~\mbox{$S$ (source)} and~\mbox{$T$ (target)} and the goal is to decide whether we can transform $S$ to $T$ via ``small'' reconfiguration steps while maintaining feasibility (sometimes bounding the number of allowed reconfiguration steps).

In contrast to the decision variant, \textsc{Vertex Cover Reconfiguration (VCR)} is known to be \textsf{PSPACE}-complete on general graphs for the token sliding, token jumping, and token addition/removal models~\cite{DBLP:journals/tcs/ItoDHPSUU11,DBLP:journals/tcs/KaminskiMM12}. 
This remains true even for very restricted graph classes such as graphs of bounded bandwidth/pathwidth/treewidth~\cite{DBLP:journals/jcss/Wrochna18}. 
On the positive side, polynomial-time algorithms are known only for very simple graph classes such as trees~\cite{DBLP:conf/isaac/DemaineDFHIOOUY14}. More positive results (which vary depending on the model) are possible if we consider the parameterized complexity of the problem. We refer the reader to~\cite{bousquet2022survey} for more details. 

\subsection{Notation and definitions}

In the \textsc{Vertex Cover Discovery (VCD)} problem, we are given a graph $G$, a starting configuration $S$, and a positive integer budget $\budget$. 
The goal is to decide whether we can reach a vertex cover of $G$ (starting from $S$) using at most $\budget$ token slides (where we forbid multiple tokens occupying the same vertex). 
Formally, an instance of \textsc{VCD} is a tuple $((G,S,\budget),\kappa)$, where $\kappa$ is the parameter. 
For readability we will simply write $(G,S,\budget)$ and make the parameter explicit in the text. We will always use $k$ to denote the size of $S$, i.e., $k = |S|$, except in \cref{sec-coloring} where $k$ will denote the number of colors. 
Possible choices for the parameter $\kappa$ are $k$, $\budget$, $k+\budget$, or structural parameters such as the treewidth of the input graph.


\subsection{Intractability}
Note that for all solution discovery variants of (graph) vertex subset problems, we can always assume that $\budget \leq n^2$, where $n$ is the number of vertices in the input graph. This follows from the fact that each token will have to traverse a path of length at most~$n$. Hence, all the solution discovery variants of such problems are indeed in $\textsf{NP}$. We first show that the \textsc{Vertex Cover Discovery} problem is \textsf{NP}-complete even on very restricted graph classes. 

\begin{theorem}
\label{thm:VCD_NPcomplete}
The \textsc{Vertex Cover Discovery} problem is \textsf{NP}-complete on the class of planar graphs of maximum degree four. 
\end{theorem}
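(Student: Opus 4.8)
The plan is to establish membership in \textsf{NP} and then \textsf{NP}-hardness via a polynomial-time reduction from a known \textsf{NP}-complete problem. Membership is immediate: as observed just before the statement, we may assume $\budget \leq n^2$, so a nondeterministically guessed sequence of token slides has polynomial length and feasibility of the resulting set as a vertex cover is checkable in polynomial time. The real work is the hardness reduction, and since the target graph class is planar of maximum degree four, I would reduce from a problem already known to be \textsf{NP}-complete on planar graphs of bounded degree --- the natural candidate is \textsc{Vertex Cover} itself on planar graphs of maximum degree three (or \textsc{Planar 3-SAT} / \textsc{Planar Independent Set}), since those are classical \textsf{NP}-complete restrictions.

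The core idea I would pursue is to take an instance $(H,k)$ of \textsc{Vertex Cover} on planar cubic graphs and build a \textsc{VCD} instance $(G,S,\budget)$ so that a cheap discovery sequence exists iff $H$ has a vertex cover of size $k$. The difficulty is that \textsc{VCD} fixes the \emph{number} of tokens to $|S|=k$ and only allows them to \emph{slide} along edges, so I cannot simply place tokens arbitrarily. My approach would be to attach to each vertex $v$ of $H$ a small planar gadget --- a short pendant path or a pendant edge with an extra leaf --- on which a token ``rests'' in its starting position. Sliding such a token one step onto $v$ costs exactly one budget unit and simulates ``selecting'' $v$ into the cover; the gadgets are designed so that the only way to cover all original edges of $H$ within the allotted budget is to slide exactly the tokens corresponding to a vertex cover of $H$ onto their respective vertices. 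I would set $\budget$ to be a tight value (roughly $k$, the number of selecting slides) so that no token can afford to take a longer detour, forcing a one-to-one correspondence. Maintaining planarity and maximum degree four is what dictates the gadget shape: each original vertex of a cubic graph already has degree three, so I can afford to add exactly one pendant structure without exceeding degree four.

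The two directions then run as follows. For the forward direction, given a vertex cover $C$ of size $k$ in $H$, I slide each token from its gadget onto the corresponding vertex of $C$ (one slide per token, total cost $k \leq \budget$), and argue that the resulting configuration covers every edge of $G$: original edges are covered because $C$ is a vertex cover of $H$, and the gadget edges are covered by the construction (this is where the gadget must be engineered so its edges are automatically covered whether or not the token has left, e.g.\ by making the gadget edge incident to $v$ and having $v$ occupied, or by pre-placing a structure). For the reverse direction, I assume a valid discovery sequence of at most $\budget$ slides and extract a vertex cover: the budget is tight enough that essentially every token must make exactly one productive slide, and the set of vertices of $H$ reached by tokens must cover all original edges, yielding a vertex cover of size at most $k$ in $H$.

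The main obstacle I anticipate is the gadget design, specifically reconciling three competing constraints simultaneously: (i) keeping the whole construction planar and of maximum degree four, (ii) ensuring that the gadget's own edges are covered ``for free'' so that the only nontrivial covering obligation is exactly the edge set of $H$, and (iii) making the budget so tight that tokens cannot achieve coverage by wandering off into the gadgets or sliding past the intended vertices. Getting the budget count exactly right --- neither so loose that cheating strategies appear nor so tight that the honest strategy is infeasible --- and verifying that these token-sliding dynamics genuinely encode vertex-cover selection (rather than some unintended configuration) is the delicate heart of the argument; the planarity and degree bookkeeping, while it must be checked, should follow routinely once the gadget is fixed.
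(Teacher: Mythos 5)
Your proposal follows essentially the same route as the paper: a reduction from \textsc{Vertex Cover} on planar graphs of maximum degree three, attaching to each vertex a pendant path gadget holding resting tokens, setting the budget equal to the target cover size $\kappa$ so that at most $\kappa$ tokens can slide onto original vertices, and arguing both directions exactly as you outline. The paper's concrete gadget is a pendant $P_3$ on $\{x_v,y_v,z_v\}$ with tokens on $x_v$ and $y_v$ (the second token permanently covers the tail edges, resolving the ``covered for free'' constraint you correctly identified), so your plan is sound and matches the published argument.
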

\begin{proof}
We give a reduction from \textsc{VC} on planar graphs of maximum degree three, which is known to be \textsf{NP}-complete~\cite{DBLP:journals/siamcomp/Lichtenstein82}. Given an instance $(G,\kappa)$ of \textsc{VC}, where $G$ is a planar graph of maximum degree three, we construct an instance of \textsc{VCD} as follows. 
We create a new graph $H$ which consists of a copy of $G$. Then, for each vertex $v \in V(H)$, we create a new path consisting of three vertices $\{x_v, y_v, z_v\}$ and we connect $v$ to $x_v$. We choose $S = \{x_v, y_v \mid v \in V(G)\}$ and we set the budget $\budget = \kappa$. Note that $k = |S| = 2|V(G)|$. 
This completes the construction of the instance $(H,S,\budget)$ of \textsc{VCD}. It follows from the construction that $H$ is planar and of maximum degree four. We prove that $(G,\kappa)$ is a yes-instance of \textsc{VC} if and only if $(H,S,\budget)$ is a yes-instance of \textsc{VCD}.

First, assume that $G$ has a vertex cover $C$ of size at most~$\kappa$. Then, in $H$ we can simply slide every token on $x_v$ to $v$, where $v \in C$. Since $C$ is of size at most $\kappa = \budget$, we have at most~$\budget$ slides. To see that the resulting set is a vertex cover of $H$, note that every edge $\{v, x_v\}$ is still covered by either $v$ or $x_v$ and the edges $\{x_v, y_v\}$ and $\{y_v, z_v\}$ are still covered by $y_v$. Moreover, all the other edges of $H$ are covered since $C$ is a vertex cover of~$G$. 

For the reverse direction, assume that $(H,S,\budget)$ is a yes-instance of \textsc{VCD}. Since $\budget = \kappa$, we know that at most $\kappa$ tokens can move, i.e., be placed in $H[V(G)]$. Moreover, since the resulting configuration must be a vertex cover of $H$ and $H[V(G)]$ is an induced subgraph of $H$ it follows that the tokens on vertices corresponding to vertices of $G$ must form a vertex cover of $G$ of size at most $\kappa$, as needed. 
\end{proof}

Our next result shows that, from a parameterized perspective, the \textsc{Vertex Cover Discovery} problem remains hard when parameterized by the budget $\budget$ alone, even on $2$-degenerate bipartite graphs.  
Recall that a graph~$G$ is called $d$-degenerate if every subgraph $H$ of $G$ has a vertex of degree at most $d$.

\begin{theorem}
\label{thm:VCD_WHARD}
The \textsc{Vertex Cover Discovery} problem is \textsf{W[1]}-hard when parameterized by the budget $\budget$ even on the class of $2$-degenerate bipartite graphs.
\end{theorem}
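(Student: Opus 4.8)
The plan is to reduce from a W[1]-hard problem whose parameter naturally corresponds to the number of tokens that must move, since the budget $\budget$ bounds exactly the number of slides. A canonical choice is \textsc{Multicolored Clique} (equivalently \textsc{Multicolored Independent Set}), which is W[1]-hard parameterized by the number $t$ of color classes. The idea is to set up a gadget graph in which an optimal discovery sequence is forced to move precisely one token out of each color class and into a ``selection'' vertex, so that the budget $\budget = O(t)$ controls the clique size, while the feasibility constraint (being a vertex cover) enforces that the chosen vertices are mutually compatible (i.e.\ form a clique in the source instance). Because the target is a vertex cover rather than an independent set, I would encode the adjacency constraints by auxiliary edges that remain uncovered unless the selected endpoints are consistent with an edge of the original graph.

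First I would build the bipartite host graph $H$. Starting from a \textsc{Multicolored Clique} instance $(G, t)$ with color classes $V_1, \ldots, V_t$, I would place, for each vertex $v \in V(G)$, a private pendant-path gadget (analogous to the $\{x_v, y_v, z_v\}$ construction in \Cref{thm:VCD_NPcomplete}) that seeds tokens ``nearby'' and creates cheap local slide options. The starting configuration $S$ would be chosen so that every edge of $H$ is already covered \emph{except} a controlled family of ``verification'' edges; sliding a token one step along the gadget toggles whether a given vertex $v$ is ``selected.'' To encode adjacency, for each pair of distinct colors $(i,j)$ I would add an edge gadget whose unique cheap covering requires that the selected vertices in $V_i$ and $V_j$ are the endpoints of an edge of $G$. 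The budget would be set to $\budget = c \cdot t$ for a small constant $c$, so that only $O(t)$ tokens can move and the solution is forced to select exactly one vertex per color class.

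The key correctness argument has two directions. For the forward direction, given a multicolored clique $\{v_1, \ldots, v_t\}$ with $v_i \in V_i$, I would slide the seeded tokens to realize the selection of each $v_i$ and argue that all verification edges become covered within budget, using that the $v_i$ form a clique (so each pairwise edge gadget is satisfied). For the reverse direction, I would argue that any feasible discovery sequence of at most $\budget$ slides can select at most one vertex per color class (by the budget accounting) and must select at least one per class (otherwise some edge gadget stays uncovered), and that the pairwise verification edges force the selected vertices to be pairwise adjacent in $G$, yielding a multicolored clique. Throughout I must verify the two structural claims about $H$: that it is \emph{bipartite} and \emph{$2$-degenerate}. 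Bipartiteness I would secure by a careful two-coloring of the gadget vertices and by routing all verification/adjacency edges between the two sides; $2$-degeneracy I would obtain by building the gadgets out of paths and low-degree attachments so that one can iteratively peel off degree-$\leq 2$ vertices.

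The main obstacle I expect is the simultaneous tension between three requirements: (i) keeping the graph $2$-degenerate, which severely limits how densely I can wire the adjacency checks (I cannot, for instance, attach a high-degree selection vertex to all of $V_i$); (ii) keeping it bipartite; and (iii) making the budget bound \emph{tightly} force exactly one selection per color and correctly verify each of the $\binom{t}{2}$ adjacency constraints. Encoding $\binom{t}{2}$ pairwise checks with only $O(t)$ moves in a sparse bipartite graph is delicate: the natural clique-encoding gadgets tend to need vertices of high degree, so I anticipate the real work lies in designing edge-selection gadgets that share token movement across pairs, or in reducing instead from a variant (such as \textsc{Multicolored Independent Set} on an appropriately structured incidence graph) where the adjacency information is already presented in a degenerate, bipartite form. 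Getting the budget arithmetic to leave \emph{no slack} — so that no token can ``cheat'' by covering an unintended edge — will require the most careful case analysis.
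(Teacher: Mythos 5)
Your proposal is a plan rather than a proof: you correctly identify the target reduction source (a multicolored clique variant) and the three constraints that must be satisfied simultaneously (bipartiteness, $2$-degeneracy, tight budget accounting), but you explicitly defer the central difficulty --- ``encoding $\binom{t}{2}$ pairwise checks with only $O(t)$ moves in a sparse bipartite graph'' --- and this is precisely the step where your plan, as stated, would fail. With budget $O(t)$ and one selection gadget per color class, the $\binom{t}{2}$ adjacency verifications must all be covered ``for free'' by the $t$ selected vertices; but in a $2$-degenerate bipartite graph you cannot attach a selection vertex to an entire color class, and you have no budget left to interact with the quadratically many edge gadgets. You correctly sense this tension but do not resolve it.

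The paper resolves it by abandoning the linear budget: it reduces from \textsc{Clique} with budget $\budget = 2\binom{\kappa}{2} + \kappa$, which is quadratic in $\kappa$ but still a function of the parameter alone, so the reduction remains a valid parameterized reduction. The construction then selects \emph{edges} rather than vertices, via a cascade of forced token deficiencies: a matching of $\binom{\kappa}{2}$ uncovered edges between $V_1$ and $V_2$ forces $\binom{\kappa}{2}$ tokens to slide in from edge-indexed vertices $g_e^i$; the edges $\{g_e^i, h_e^i\}$ then become uncovered and force $\binom{\kappa}{2}$ tokens to slide down from the edge vertices $y_e$; the now-vacated $y_e$ leave their incident edges to the vertex side $V_5$ uncovered, and the remaining budget of $\kappa$ permits only $\kappa$ tokens to slide from pendant vertices $s_u$ onto $x_u$. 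Thus $\binom{\kappa}{2}$ edges of $G$ are selected whose endpoints must all lie among only $\kappa$ selected vertices --- which happens if and only if they form a clique. Every vertex in the construction has degree at most three (apart from the $z_i$ and $x_u$, each of which is shielded by a degree-one neighbor), giving $2$-degeneracy and bipartiteness with no further effort. If you want to salvage your approach, the concrete fix is: allow the budget to be quadratic in the parameter, and replace the per-pair verification gadgets with a single global counting mechanism that forces exactly $\binom{\kappa}{2}$ edge selections and exactly $\kappa$ vertex selections, letting the vertex-cover condition do the adjacency checking.
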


\begin{proof}
We present a parameterized reduction from the \textsc{Clique} problem. 
Let $(G, \kappa)$ be an instance of the \textsc{Clique} problem. 
We construct a graph $H$ from $G$ as follows. 
First, we define the vertex set $V(H)$. 
$V(H)$ is partitioned into six vertex sets $V_1,\ldots, V_6$. 
Let $V_1 = \{r_i \mid i \in [\binom{\kappa}{2}]\}$ and $V_2 = \{z_i \mid i \in [\binom{\kappa}{2}]\}$. For $m=|E(G)|$, the set $V_3$ consists of $2m\binom{\kappa}{2}$ vertices grouped into $m$ sets $V_3^e$ for each $e \in E(G)$. 
For each $e \in E(G)$, let $V_3^e = \{g_e^i, h_e^i \mid i \in [\binom{\kappa}{2}]\}$. 
Let $V_4 = \{y_e \mid e \in E(G)\}, V_5 = \{x_u \mid u \in V(G)\}$ and $V_6 = \{s_u \mid u \in V(G)\}$. 
Now we define the edge set $E(H)$. 
For each $i \in [\binom{\kappa}{2}]$, we add an edge between the vertices $r_i \in V_1$ and $z_i \in V_2$. 
Next, for each $i \in [\binom{\kappa}{2}]$ and $e \in E(G)$, the vertex $g_e^i \in V_3^e$ is adjacent to the vertices $z_i \in V_2$, $h_e^i \in V_3^e$, and $y_e \in V_4$. 
For each edge $e = \{u,v\} \in E(G)$, the vertex $y_e \in V_4$ is adjacent to $x_u, x_v \in V_5$. 
Finally, for each $u \in V(G)$, we add an edge between the vertices $x_u \in V_5$ and $s_u \in V_6$. 
This completes the construction of the graph $H$. We define the starting configuration $S = V_6 \cup V_4 \cup \{g_e^i \in V_3 \mid e \in E(G), i\in [\binom{\kappa}{2}]\}$ of size $k = m+n+m\binom{\kappa}{2}$ and we set $\budget = 2\binom{\kappa}{2} + \kappa$~(see~\Cref{fig:my_label}).

We first prove that the graph $H$ is indeed $2$-degenerate. Recall that a graph $H$ is $d$-degenerate if every induced subgraph of $H$ contains a vertex of degree at most $d$. Consider any induced subgraph $H'$ of $H$. If $H'$ contains a vertex from $V_1$, $V_3$, or $V_6$, then we are done, since vertices in $V_1$ and $V_6$ have degree one in $H$ and vertices of $V_3$ have degree three in $H$ but one of those neighbors is a vertex of degree one. It is not hard to see that any induced subgraph of $H$ not including a vertex of $V_1 \cup V_3 \cup V_6$ must either have an isolated vertex from $V_2$, or a degree-two vertex from $V_4$, or an isolated vertex from~$V_5$ (when no vertex from $V_4$ is included). 

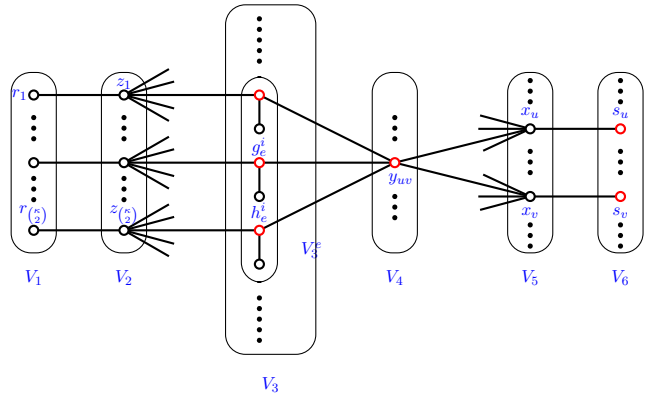
\begin{figure}[ht]
\centering
    \begin{tikzpicture}[scale=0.6, every node/.style={transform shape}]
    \makeatletter
    \tikzset{
        dot diameter/.store in=\dot@diameter,
        dot diameter=3pt,
        dot spacing/.store in=\dot@spacing,
        dot spacing=10pt,
        dots/.style={
            line width=\dot@diameter,
            line cap=round,
            dash pattern=on 0pt off \dot@spacing
        }
    }
    \makeatother

    \coordinate (o) at (0,0);
    \coordinate (r1) at ($(o) + (-5,3)$);
    \coordinate (r2) at ($(o) + (-5,1.5)$);
    \coordinate (r3) at ($(o) + (-5,0)$);
    \coordinate (z1) at ($(o) + (-3,3)$);
    \coordinate (z2) at ($(o) + (-3,1.5)$);
    \coordinate (z3) at ($(o) + (-3,0)$);
    \coordinate (a1) at ($(o) + (0,3)$);
    \coordinate (a2) at ($(o) + (0,1.5)$);
    \coordinate (a3) at ($(o) + (0,0)$);
    \coordinate (y1) at ($(o) + (3,3)$);
    \coordinate (y2) at ($(o) + (3,1.5)$);
    \coordinate (y3) at ($(o) + (3,0)$);
    \coordinate (x1) at ($(o) + (6,3)$);
    \coordinate (x2) at ($(o) + (6,2.25)$);
    \coordinate (x3) at ($(o) + (6,0.75)$);
    \coordinate (x4) at ($(o) + (6,0)$);
    \coordinate (s1) at ($(o) + (8,3)$);
    \coordinate (s2) at ($(o) + (8,2.25)$);
    \coordinate (s3) at ($(o) + (8,0.75)$);
    \coordinate (s4) at ($(o) + (8,0)$);

    \draw[black, thick] (z1) -- (r1);
    \draw[black, thick] (z2) -- (r2);
    \draw[black, thick] (z3) -- (r3);

    \draw[black, thick] (z1) -- ($(z1)+(15:1.15cm)$);
    \draw[black, thick] (z1) -- ($(z1)+(30:1.15cm)$);
    \draw[black, thick] (z1) -- ($(z1)+(-15:1.15cm)$);
    \draw[black, thick] (z1) -- ($(z1)+(-30:1.15cm)$);
    \draw[black, thick] (z2) -- ($(z2)+(15:1.15cm)$);
    \draw[black, thick] (z2) -- ($(z2)+(30:1.15cm)$);
    \draw[black, thick] (z2) -- ($(z2)+(-15:1.15cm)$);
    \draw[black, thick] (z2) -- ($(z2)+(-30:1.15cm)$);
    \draw[black, thick] (z3) -- ($(z3)+(15:1.15cm)$);
    \draw[black, thick] (z3) -- ($(z3)+(30:1.15cm)$);
    \draw[black, thick] (z3) -- ($(z3)+(-15:1.15cm)$);
    \draw[black, thick] (z3) -- ($(z3)+(-30:1.15cm)$);

    \draw[black, thick] (a1) -- (z1);
    \draw[black, thick] (a2) -- (z2);
    \draw[black, thick] (a3) -- (z3);
    \draw[black, thick] (a1) -- ($(a1) + (0,-0.75)$);
    \draw[black, thick] (a2) -- ($(a2) + (0,-0.75)$);
    \draw[black, thick] (a3) -- ($(a3) + (0,-0.75)$);

    \draw[black, thick] (y2) -- (a1);
    \draw[black, thick] (y2) -- (a2);
    \draw[black, thick] (y2) -- (a3);

    \draw[black, thick] (y2) -- (x2);
    \draw[black, thick] (y2) -- (x3);

    \draw[black, thick] (x2) -- ($(x2)+(165:1.15cm)$);
    \draw[black, thick] (x2) -- ($(x2)+(180:1.15cm)$);
    \draw[black, thick] (x2) -- ($(x2)+(205:1.15cm)$);
    \draw[black, thick] (x3) -- ($(x3)+(155:1.15cm)$);
    \draw[black, thick] (x3) -- ($(x3)+(180:1.15cm)$);
    \draw[black, thick] (x3) -- ($(x3)+(195:1.15cm)$);

    \draw[black, thick] (x2) -- (s2);
    \draw[black, thick] (x3) -- (s3);

    \draw[black, thick, fill=white] (r1) circle (0.1cm) node[left, blue] {$r_1$};
    \draw[black, thick, fill=white] (r2) circle (0.1cm); 
    \draw[black, thick, fill=white] (r3) circle (0.1cm) node[above, blue] {$r_{\binom{\kappa}{2}}$};
    \draw[black, thick, fill=white] (z1) circle (0.1cm) node[above, blue] {$z_1$};
    \draw[black, thick, fill=white] (z2) circle (0.1cm);
    \draw[black, thick, fill=white] (z3) circle (0.1cm) node[above, blue] {$z_{\binom{\kappa}{2}}$};
    \draw[red, thick, fill=white] (a1) circle (0.1cm);
    \draw[red, thick, fill=white] (a2) circle (0.1cm) node[above, blue] {$g_e^i$};
    \draw[red, thick, fill=white] (a3) circle (0.1cm);
    \draw[black, thick, fill=white] ($(a1) + (0,-0.75)$) circle (0.1cm);
    \draw[black, thick, fill=white] ($(a2) + (0,-0.75)$) circle (0.1cm) node[below, blue] {$h_e^i$};
    \draw[black, thick, fill=white] ($(a3) + (0,-0.75)$) circle (0.1cm);
    \draw[red, thick, fill=white] (y2) circle (0.1cm) node[below=3pt, blue] {$~~y_{uv}$};
    \draw[black, thick, fill=white] (x2) circle (0.1cm) node[above=3pt, blue] {$x_u$};
    \draw[black, thick, fill=white] (x3) circle (0.1cm) node[below=3pt, blue] {$x_v$};
    \draw[red, thick, fill=white] (s2) circle (0.1cm) node[above=3pt, blue] {$s_u$};
    \draw[red, thick, fill=white] (s3) circle (0.1cm) node[below=3pt, blue] {$s_v$};

    \draw [black, dot diameter=2pt, dot spacing=4pt, dots] ($(r1) + (0, -0.5)$) -- ($(r2) + (0, 0.5)$);
    \draw [black, dot diameter=2pt, dot spacing=4pt, dots] ($(r2) + (0, -0.35)$) -- ($(r3) + (0, 0.65)$);
    \draw [black, dot diameter=2pt, dot spacing=4pt, dots] ($(z1) + (0, -0.5)$) -- ($(z2) + (0, 0.5)$);
    \draw [black, dot diameter=2pt, dot spacing=4pt, dots] ($(z2) + (0, -0.35)$) -- ($(z3) + (0, 0.65)$);
    \draw [black, dot diameter=2pt, dot spacing=4pt, dots] ($(y1) + (0, -0.5)$) -- ($(y2) + (0, 0.5)$);
    \draw [black, dot diameter=2pt, dot spacing=4pt, dots] ($(y2) + (0, -0.75)$) -- ($(y3) + (0, 0.25)$);
    \draw [black, dot diameter=2pt, dot spacing=4pt, dots] ($(a1) + (0, 0.75)$) -- ($(a1) + (0, 1.75)$);
    \draw [black, dot diameter=2pt, dot spacing=4pt, dots] ($(a3) + (0, -1.5)$) -- ($(a3) + (0, -2.5)$);
    \draw [black, dot diameter=2pt, dot spacing=4pt, dots] ($(x2) + (0, -0.5)$) -- ($(x3) + (0, 0.5)$);
    \draw [black, dot diameter=2pt, dot spacing=4pt, dots] ($(x2) + (0, 0.6)$) -- ($(x2) + (0, 1.1)$);
    \draw [black, dot diameter=2pt, dot spacing=4pt, dots] ($(x3) + (0, -1.1)$) -- ($(x3) + (0, -0.6)$);
    \draw [black, dot diameter=2pt, dot spacing=4pt, dots] ($(s2) + (0, -0.5)$) -- ($(s3) + (0, 0.5)$);
    \draw [black, dot diameter=2pt, dot spacing=4pt, dots] ($(s2) + (0, 0.6)$) -- ($(s2) + (0, 1.1)$);
    \draw [black, dot diameter=2pt, dot spacing=4pt, dots] ($(s3) + (0, -1.1)$) -- ($(s3) + (0, -0.6)$);

    \draw[black,rounded corners=7.5pt] ($(s1)+(-0.5,0.5)$) rectangle ($(s4)+(0.5,-0.5)$) node[below=2.25cm, midway, blue] {$V_6$};
    \draw[black,rounded corners=7.5pt] ($(x1)+(-0.5,0.5)$) rectangle ($(x4)+(0.5,-0.5)$) node[below=2.25cm, midway, blue] {$V_5$};
    \draw[black,rounded corners=7.5pt] ($(y1)+(-0.5,0.5)$) rectangle ($(y3)+(0.5,-0.5)$) node[below=2.25cm, midway, blue] {$V_4$};
    \draw[black,rounded corners=7.5pt] ($(a1)+(-0.4,0.4)$) rectangle ($(a3)+(0.4,-1.15)$) node[below=1.25cm, midway, blue] {$~~~~~~~~~~~~~~~~~~V_3^e$};
    \draw[black,rounded corners=7.5pt] ($(a1)+(-0.75,2)$) rectangle ($(a3)+(1.25,-2.75)$) node[below=4.25cm, midway, blue] {$V_3$};
    \draw[black,rounded corners=7.5pt] ($(z1)+(-0.5,0.5)$) rectangle ($(z3)+(0.5,-0.5)$) node[below=2.25cm, midway, blue] {$V_2$};
    \draw[black,rounded corners=7.5pt] ($(r1)+(-0.5,0.5)$) rectangle ($(r3)+(0.5,-0.5)$) node[below=2.25cm, midway, blue] {$V_1$};
    
\end{tikzpicture}
\caption{An illustration of the \textsf{W[1]}-hardness reduction. The red-colored vertices denote the starting configuration.}
\label{fig:my_label}
\end{figure}

We now prove the correctness of the reduction starting with the forward direction. 
Let $C$ be a clique of size $\kappa$ in~$G$. 
Let $S' = (S \cup V_2 \cup \{x_u \mid u \in C\}) \setminus (\{y_{uv} \mid u,v \in C\} \cup \{s_u \mid u \in C\})$. 
We obtain $S' \subseteq V(H)$ from $S$ using at most~$\budget$ token slides. 
We slide the tokens from  any $\binom{\kappa}{2}$ arbitrary vertices in $V_3$ to $V_2$.
Then, we will slide the tokens from $\{y_{uv} \mid u,v \in C\} \subseteq V_4$ to the swapped $\binom{\kappa}{2}$ vertices in $V_3$. 
Next, we slide $\kappa$ tokens from $\{s_u \mid u \in C\}$ to $\{x_u \mid u \in C\}$. 
Since $V_2 \cup \{g_e^i \in V_3 \mid e \in E(G), i\in [\binom{\kappa}{2}]\} \subseteq S'$, the edges incident on these vertices are covered. 
For each $u \in V(G)$, either $x_u \in V_5$ or $s_u \in V_6$ is in $S'$. Therefore, the edges between the sets $V_5$ and~$V_6$ are covered by $S'$. 
Finally, the edges between the vertices in~$V_4$ that are not in $S'$ and~$V_5$ are covered by the vertices corresponding to the clique $C$ since the moved vertices in $V_4$ correspond to the edges of the clique $C$. 

For the reverse direction, we consider the properties of feasible solutions for the \textsc{VCD} problem in~$H$. 
The set $S$ does not contain any vertex from  $V_1 \cup V_2$. Since the edges between $V_1$ and $V_2$ are $\binom{\kappa}{2}$ matching edges, we need to move $\binom{\kappa}{2}$ tokens from $S$ to $V_2$. 
These tokens have to come from $V_3$. 
For each $i \in [\binom{\kappa}{2}]$ and each $e \in E(G)$, either $g_e^i$ or $h_e^i$ should be in a feasible solution. 
Therefore, we have to move $\binom{\kappa}{2}$ tokens from $V_4$ to $V_3$. We are left with only a budget of $\kappa$ to cover the edges between the $\binom{\kappa}{2}$ vertices. 
This is possible only when the $\binom{\kappa}{2}$ edges form a clique. 
\end{proof}

\subsection{Tractability}

We now show that the \textsc{Vertex Cover Discovery} problem becomes fixed-parameter tractable when considering different parameters. Unlike~\Cref{thm:VCD_WHARD}, our next result shows that parameterizing by the number of tokens $k$ instead of the budget $b$ makes the problem tractable. 

\begin{theorem}\label{thm-vc-fpt-k}
The \textsc{Vertex Cover Discovery} problem is fixed-parameter tractable when parameterized by the number of tokens $k$. Moreover, the problem can be solved in time $\mathcal{O}(2^k \cdot n^3)$.
\end{theorem}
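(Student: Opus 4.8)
The plan is to exploit two facts. First, since tokens are only slid and never created or destroyed, any reachable solution is a vertex cover of size \emph{exactly} $k$. Second, for unlabeled tokens the minimum number of slides needed to transform a configuration $S$ into a configuration $T$ (both of size $k$) equals the minimum-weight perfect matching between $S$ and $T$ with respect to the shortest-path metric of $G$. Granting the second fact, the whole problem reduces to searching for a vertex cover $T$ with $|T|=k$ minimizing this matching cost and comparing the optimum with $\budget$.

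First I would reduce the search over target covers to a search over \emph{minimal} vertex covers. If $T$ is any target vertex cover reached by an optimal slide sequence, the tokens occupying $S\cap T$ may be assumed to stay put, and the vertices of $T$ that are not needed for covering (those outside some minimal subcover $T'\subseteq T$) may be taken to be the final positions of tokens that never move. Concretely, to realize a minimal vertex cover $T'$ of size at most $k$ it suffices to bring one token onto each vertex of $T'\setminus S$ while leaving every other token where it started; the extra tokens simply remain in place, and since supersets of vertex covers are vertex covers the resulting configuration of size $k$ is again a vertex cover. A short exchange argument shows that the cheapest way of occupying a minimal subcover $T'$ never costs more than occupying the full cover $T$ (restrict the matching of $T$ to the sources used for $T'\setminus S\subseteq T\setminus S$), so it is enough to range over minimal vertex covers of size at most $k$.

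These are enumerated by the textbook bounded-search-tree algorithm for \textsc{Vertex Cover}: repeatedly pick an uncovered edge $\{u,v\}$ and branch on committing $u$ or $v$, pruning any branch that commits more than $k$ vertices. The search tree has at most $2^k$ leaves, and every minimal vertex cover of size at most $k$ appears as the committed set of some leaf. For each leaf producing a vertex cover $T'$, I would compute the cheapest way to occupy $T'$: using all-pairs shortest-path distances in $G$ (precomputed by one BFS per vertex in time $\Oof(n^3)$), this is an assignment problem matching each vertex of $T'\setminus S$ to a distinct token of $S\setminus T'$ of minimum total distance, solvable in time $\Oof(k^3)$. The instance is a yes-instance iff some leaf yields occupancy cost at most $\budget$, for a total running time $\Oof(n^3 + 2^k k^3) = \Oof(2^k\cdot n^3)$.

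The main obstacle is the second fact above: that a minimum-weight matching is actually \emph{realizable} as a collision-free sequence of slides of exactly the matching weight, so that no extra moves are forced by ``traffic jams.'' I would establish this by a scheduling lemma: fix a minimum-weight matching together with a shortest path for each moving token; as long as the current configuration differs from the target there is an unoccupied vertex, and one can always advance some token by one step of its path into a currently empty vertex without overshooting, so the schedule terminates after exactly (matching weight) moves. The delicate point is verifying that such an advancing token always exists, relabeling indistinguishable tokens whenever two paths would meet; this is the crux that makes the matching lower bound tight, and it is the unlabeled-token analogue of the classical result that minimizing total token movement is a min-cost matching problem.
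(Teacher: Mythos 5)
Your proposal is correct and follows essentially the same route as the paper's proof: enumerate minimal vertex covers of size at most $k$ with the standard $2^k$ branching tree, then decide reachability within the budget via a minimum-weight matching under shortest-path distances, using the same ``switch destinations when paths collide'' argument to show the matching cost is achievable and optimal. The only cosmetic difference is that the paper sets up a perfect matching on all of $S$ versus the cover (with zero-weight self-edges for stationary tokens), whereas you match only $T'\setminus S$ against $S\setminus T'$; these are equivalent, and your explicit treatment of covers of size strictly less than $k$ and of the scheduling lemma is, if anything, slightly more careful than the paper's.
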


\begin{proof}
Our algorithm consists of an enumeration step, which enumerates all minimal vertex covers of $G$ of size at most~$k$. Then, for each vertex cover $C$ of size at most $k$, we solve an instance of \textsc{Minimum Weight Perfect Matching (MWPM)} on bipartite graphs. 

To enumerate all vertex covers of size at most $k$, we follow the standard search-tree algorithm that gradually constructs
minimal vertex covers,
producing all minimal vertex covers of size at most $k$ in its leaves. Consider an instance of \textsc{VC}. To build the search-tree, at each non-leaf
node, the algorithm chooses an edge that has not yet been covered, and branches on the two possible
 ways of covering this edge, including one of the endpoints of the edge in each branch. 
 Since we are not interested in
vertex covers of cardinality more than~$k$, we do not need to search
beyond depth $k$ in the tree, proving an upper bound of~$2^k$ on the number
of leaves, and an upper bound of $\mathcal{O}^*(2^k)$ on the enumeration time. 

It remains to check whether any vertex cover $C$ of size at most $k$ can be reached from~$S$ using at most~$\budget$ token slides. To do so, we proceed as follows. For each set $C$, we construct a complete weighted bipartite graph $H$ where the bipartition of $H$ consists of $(S,C)$. 
That is, we have the vertices of~$S$ on one side and the vertices of $C$ on the other side. 
For each pair of vertices $u \in S$ and $v \in C$, we set the weight of the edge $\{u,v\} \in E(H)$ to the number of edges along a shortest path from $u$ to $v$ in $G$ (we set the weight to $m + \budget + 1$ whenever~$u$ and $v$ belong to different connected components). 
Now we apply an algorithm for \textsc{MWPM} in $H$. If the weight of the perfect matching is $\budget$ or less, then we can move the tokens accordingly. Otherwise, we proceed to the next vertex cover. To see why this is indeed correct, note that if the path a token~$t_1$ would
take to reach its destination has another token $t_2$ on it, we can have the two tokens switch destinations and continue
moving $t_2$. One can check that the number of moves does not exceed the weight of the perfect matching. 
On the other hand, any solution to the discovery problem must move tokens to targets and cannot do better than the total length of the shortest paths in a minimum weight perfect matching. This observation was first made in~\cite{DBLP:journals/siamdm/CalinescuDP08}. It is well-known that the \textsc{MWPM} problem can be solved in $\mathcal{O}(n^3)$ time using, e.g., Edmonds' blossom algorithm~\cite{edmonds1972theoretical}.
\end{proof}

Next, we show that \textsc{VCD} is \textsf{XP} when parameterized by the treewidth of the input graph. In particular, the problem is polynomial-time solvable on every fixed class of bounded treewidth. Note that $k$ in the following theorem can be in $\Oof(n)$ so that $2^{\mathcal{O}(t \log k)}\in n^{\Oof(t)}$. 
We remark that the problem could potentially be even \textsf{FPT} for parameter treewidth, but we were neither able to prove this claim nor prove $\textsf{W}[1]$-hardness. 

\begin{theorem}\label{thm-vc-fpt-by-tw+k}
The \textsc{Vertex Cover Discovery} problem can be solved in time $2^{\mathcal{O}(t \log k)} \cdot n^{\mathcal{O}(1)}$, where $t$ denotes the treewidth of the input graph. 
\end{theorem}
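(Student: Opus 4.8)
The plan is to reduce the problem to a minimum-cost flow computation that is then solved by dynamic programming over a tree decomposition. First I would recall, as established in the proof of \cref{thm-vc-fpt-k} (and originally in~\cite{DBLP:journals/siamdm/CalinescuDP08}), that the minimum number of token slides needed to transform $S$ into a fixed target set $C$ equals the minimum-weight perfect matching between $S$ and $C$ under shortest-path distances, which in turn equals the value of a minimum-cost integral flow sending one unit out of every vertex of $S$ and into every vertex of $C$, where each edge of $G$ is replaced by two oppositely directed unit-cost arcs of unbounded capacity. Hence \textsc{VCD} is equivalent to deciding whether there is a vertex cover $C$ of $G$ with $|C| = k$ whose associated minimum-cost flow has value at most $\budget$. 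A feasible flow exists only if, within every connected component, the number of sinks equals the number of sources, so the constraint $|C| = k$ (and even its per-component refinement) is enforced automatically by flow conservation and need not be tracked separately.

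Next I would set up a DP over a nice tree decomposition of width $t$ (computable within the allotted time). For a node $x$ with bag $B_x$, a state records two things: (i) a subset $C_x \subseteq B_x$ specifying which bag vertices carry a token in the final configuration, and (ii) a balance vector $\phi \colon B_x \to \{-k, \dots, k\}$, where $\phi(v)$ is the net flow (incoming minus outgoing) already routed on the arcs introduced in the subtree rooted at $x$. The value stored for a state is the minimum total cost of such partial flows that additionally satisfy, for every already-forgotten vertex, both flow conservation and the vertex-cover constraint for every already-introduced edge. The crucial observation bounding the table size is that a minimum-cost integral flow decomposes into at most $k$ token-paths, each passing through any vertex at most once, so the net flow at any vertex never exceeds $k$ in absolute value; restricting $\phi$ to $\{-k,\dots,k\}$ therefore loses no optimal solution. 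Consequently the number of states per bag is at most $2^{t+1} \cdot (2k+1)^{t+1} = 2^{\Oof(t \log k)}$.

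The transitions are the standard ones. At an introduce-vertex node for $v$, I guess whether $v \in C_x$ and set $\phi(v) = 0$. At an introduce-edge node for $\{u,w\}$, I first discard the state unless $u \in C_x$ or $w \in C_x$ (the only place the cover condition for this edge is ever checked), and then branch over the net flow $\delta \in \{-k, \dots, k\}$ pushed along the edge, updating $\phi(u)$ and $\phi(w)$ and adding $|\delta|$ to the cost. At a forget-vertex node for $v$ I require $\phi(v) = [v \in C_x] - [v \in S]$ before projecting $v$ out: since all edges incident to $v$ have been introduced by now, this finalizes conservation at $v$. At a join node I add the two balance vectors and the two costs, keeping only states whose $C_x$-parts agree and whose combined balances remain in range. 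At the root (empty bag) the minimum stored value equals the minimum number of slides over all valid target vertex covers; the instance is a yes-instance iff this value is at most $\budget$. Each transition runs in time polynomial in the number of states, $n$, and $k$, yielding total running time $2^{\Oof(t \log k)} \cdot n^{\Oof(1)}$.

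The main obstacle is the correctness of the flow reformulation rather than the DP itself: one must argue carefully that the minimum number of slides—under the constraint that no two tokens ever share a vertex—coincides with the unconstrained min-cost flow value, which relies on the token-switching argument of \cref{thm-vc-fpt-k} to reroute colliding tokens without increasing the move count. The second delicate point is verifying that each global constraint is accounted for exactly once and locally: every edge's coverage is tested at its unique introduce-edge node and every vertex's conservation at its unique forget node, while the bound $|\phi(v)| \le k$ must be justified from the path decomposition of an optimal flow in order to keep the balance component of the state space within the claimed size.
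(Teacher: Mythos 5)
Your proposal is correct, and its combinatorial core coincides with the paper's: a dynamic program over a nice tree decomposition whose states record, for each bag, which bag vertices receive a token in the target cover together with a token-balance function with values in $[-k,k]$, giving $2^{\Oof(t\log k)}$ states per bag. The difference is in how correctness is organized. The paper defines the state semantics directly in terms of sliding reachability (``$C\cup\gamma^-$ can be reached from $(S\cap V(G_i))\cup\gamma^+$ using at most $\ell$ sliding steps'') and carries both the partial solution size $q$ and the used budget $\ell$ as explicit indices of a Boolean table; you instead first invoke the equivalence (from the proof of \cref{thm-vc-fpt-k} and~\cite{DBLP:journals/siamdm/CalinescuDP08}) between the minimum number of slides and a minimum-cost unit-supply/unit-demand flow, and then run a min-cost-value DP on that static flow problem. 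Your route buys two things: the locality of the DP becomes transparent, since flow conservation and edge coverage are each checked exactly once at a forget-vertex or introduce-edge node, and the index $q$ disappears because per-component conservation already forces $|C|=|S|=k$; the price is that the entire burden of handling token collisions and the ``one token per vertex'' constraint is pushed into the flow equivalence, which you correctly flag and which is exactly the token-switching argument the paper also relies on. Your bound $|\phi(v)|\le k$ via the decomposition of an optimal integral flow into at most $k$ simple token-paths is sound and plays the same role as the paper's restriction of $\vbeta$ to $[-k,k]$. Both arguments yield the claimed running time.
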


\begin{proof}
Let $(G, S, \budget)$ be an instance of the \textsc{VCD} problem. 
We first compute a nice tree decomposition $(T,(X_i)_{i\in V(T)})$ of $G$ of width $\Oof(t)$. This is possible in time $2^{\Oof(t)}\cdot n^2$ and we refer to the textbook~\cite{cygan2015parameterized} for the undefined notions of treewidth and nice tree decompositions. We write $G_i$ for the subgraph induced by all vertices in the bags of the subtree rooted at $i$. For each node $i \in V(T)$, we compute a table $\soli$. 
The entries of the table are indexed by four-tuples. We refer to the four-tuples as \emph{states}. A state $s$ at the node $i$ is hence a tuple $(q, \ell, \valpha, \vbeta)$, where

\begin{itemize}
    \item $0 \leq q \leq k$ is an integer that specifies the size of the desired vertex cover solution (the number of tokens) for the sub-problem on $G_i$, 
    \item $0 \leq \ell \leq \budget$ is an integer that specifies the number of sliding steps allowed in the sub-problem on $G_i$,
    \item $\valpha: X_i \to \{0,1\}$ is an indicator function for the vertices in $X_i$ (marking the vertices of $X_i$ that must belong to a solution), and
    \item $\vbeta: X_i \to [-k,k]$ is a function that specifies the number of tokens that slide through each vertex in $X_i$, where a negative value indicates that more tokens slide from~$G_i$ to the rest of the graph, while a positive value indicates the converse. 
    Let $\gamma^+ = \{u \in X_i \mid \vbeta(u) > 0\}$ and $\gamma^- = \{u \in X_i \mid \vbeta(u) < 0\}$. 
    For each vertex $u \in \gamma^+$, we can use $\vbeta(u)$ many additional tokens available at $u$. 
    These tokens are assumed to be available at the vertex $u$ which are slid from the vertices outside $G_i$. 
    Further, for each vertex $u \in \gamma^-$, we have to slide $\vbeta(u)$ many tokens to the vertex $u$ where the purpose of these tokens is that they can later slide to vertices outside $G_i$.
\end{itemize}

We denote by $\Si$ the set of all states of the node $i$. 
For a state \mbox{$s=(q, \ell, \valpha, \vbeta) \in \Si$}, $\solis$ is set to 1 if and only if the following holds: 
There exists a set $C \subseteq V(G_i)$ of size~$q$ such that

\begin{itemize}
    \item $C$ is a vertex cover of $G_i$ where $C \cap X_i = \valpha^{-1}(1)$, and
    \item $C \cup \gamma^-$ can be reached from $(S \cap V(G_i)) \cup \gamma^+$ using at most $\ell$ sliding steps, using $\vbeta(u)$ tokens from  each vertex $u \in \gamma^+$, and sliding $\vbeta(u)$ tokens to each vertex $u \in \gamma^-$.
\end{itemize}

Intuitively speaking, the function $\vbeta$ determines two quantities:

\begin{itemize}
    \item the number of available extra tokens that could reach the separator from outside $G_i$; and 
    \item the number of required tokens that need to reach the separator so that they eventually slide to vertices outside of $G_i$. 
\end{itemize} 

We compute solutions for all possible functions $\vbeta$, functions $\valpha$, solution sizes $0 \leq q \leq k$, and number of sliding steps $0 \leq \ell \leq \budget$. 

\medskip
\noindent {\bf Locally invalid states.} 
A state $s$ is said to be {\em locally invalid} if there exists a vertex $u \in X_i$ such that $N[u] \subseteq V(G_i)$ and $\vbeta(u) \not=0$. 
For any vertex $u$ with $N[u] \subseteq V(G_i)$, $\vbeta(u)$~must be $0$ since the tokens cannot slide from vertex $u$ to vertices outside of~$G_i$. 
Therefore, for a locally invalid state~$s$, we set $\solis=0$. 
Further, the existence of such a state will be no more useful for any states in the parent node. 
If a state is not locally invalid, then it is a {\em locally valid state}. 
Therefore, we consider only locally valid states in our dynamic programming computation. 

\medskip
For each node $i \in V(T)$ and a state $s \in \Si$, we show how to compute $\solis$ based on the type of the node $i$ in $T$. 

\medskip
\noindent {\bf Leaf node.}
Let $i$ be a leaf node with bag $X_i = \emptyset$. The state set $\Si$ is a singleton set with only state $s = (0, 0, \emptyset \to \{0,1\}, \emptyset \to [-k,k])$ with $\solis=1$. 

\medskip
\noindent{\bf Introduce node.} Let $i$ be an introduce node with child $j$ such that $X_i = X_j \cup \{v\}$, for some $v \notin X_j$. 
Observe that $N[v] \cap V(G_i) \subseteq X_i$. 
Let $N_v = N(v) \cap X_i$. 
If $\valpha(v) = 0$ and $N(v) \cap X_i \not\subseteq \valpha^{-1}(1)$, then we set $\solis=0$. 
That is, the edges incident on $v$ are either covered by $v$ or by the neighbors of $v$. 
Therefore, either $v$ is included in the solution or all neighbors of $v$ in $X_i$ are included in the solution. 
The vertex $v$ will be treated as the external vertex when we process the table entries of node $j$. 
We consider all possible ways of sliding tokens through $v$. 
Let $F = \{f: N_v \to [-k, k] \mid \sum_{u \in N_v}f(u) = \vbeta(v)\}$. 
A function $f \in F$ defines the number of tokens given to (or obtained from) the vertex $v$ by each neighbor $u \in N_v$. 
For each $f \in F$, we define $\vbeta_f: X_j\to[-k,k]$ such that, for each $u \in X_j$, we have 

\begin{equation*}
    \vbeta_f(u) = \begin{cases} \vbeta(u) + f(u) & \text{if } u \in N_v, \\ 
\vbeta(u) & \text{otherwise.}\end{cases}
\end{equation*}

For each $f\in F$, let $den(f) = \sum_{u \in N_v} |f(u)|$ and $s_f = (q-\valpha(v), \ell - den(f), \valpha_j, \vbeta_f)$. 
The value $den(f)$ refers to the number of sliding steps required to slide tokens through or to $v$ and $\valpha_j$ denotes the function $\valpha$ restricted to the set $X_j$. 
Finally, the table value for the state $s$ can be computed as

\begin{equation*}
    \solis = \bigvee_{f \in F} \solj[s_f].
\end{equation*}

The time required to compute the value $\solis$ is bounded by $\mathcal{O}(|F|\cdot t) = 2^{\mathcal{O}(t \log k)}$. 

\medskip
\noindent{\bf Forget node.} Let $i$ be a forget node with child $j$ such that $X_i = X_j \setminus \{v\}$, for some $v \in X_j$. 
Observe that $N[v] \subseteq V(G_i) = V(G_j)$. 
Therefore, for any locally valid state $s'=(q', \ell', \valpha', \vbeta') \in \S_j$, $\vbeta'(v) = 0$. 
Let us define $\vbeta_j: X_j \to [-k, k]$ such that,  for each $u \in X_j$, we have 
\[\vbeta_j(u) = \begin{cases} 0 & \text{if } u = v, \\ \vbeta(u) & \text{otherwise.}\end{cases}\]

Next, we try both possibilities of either including or excluding the vertex $v$ from the desired solution. 
For $c \in \{0,1\}$, let us define $\valpha_c: X_j \to \{0,1\}$ such that, for each $u \in X_j$, we have 
\[\valpha_c(u) = \begin{cases} c & \text{if } u = v, \\ \valpha(u) & \text{otherwise.}\end{cases}\]

We compute the value $\solis$ for the state $s$ as

\begin{equation*}
    \solis = \bigvee_{c \in \{0,1\}}\solj[(q, \ell, \valpha_c, \vbeta_j)].
\end{equation*}

The value $\solis$ can be computed in constant ($\mathcal{O}(1)$) time. 

\medskip
\noindent{\bf Join node.} Let $i$ be a join node with children $j$ and $h$ such that $X_i = X_j = X_h$. 
Let $F = \{f:X_i \to [-k,k]\}$. 
The size of the desired solution is $q$, where $\valpha^{-1}(1)$ vertices are contained in $X_i$. 
Therefore, we try all possible ways to get the $q-\valpha^{-1}(1)$ additional vertices from $V(G_i) \setminus X_i$. 
For each $0 \leq q' \leq q-\valpha^{-1}(1)$, let $q_j = q' + \valpha^{-1}(1)$ and $q_h = q-q'$. 

Similarly, we have a budget of $\ell$ sliding steps in total. Therefore, we try all possible ways of splitting $\ell$ into the two subtrees rooted at $j$ and $h$, respectively, where we assign~$\ell'$ to the subtree rooted at $j$ and $\ell - \ell'$ to the subtree rooted at $h$. 
Next, we define the computation of $\vbeta$ functions for $j$ and $h$. 
For each $f \in F$, we define $\vbeta_j^f: X_j \to [-k,k]$ and $\vbeta_h^f:X_h\to[-k,k]$ as follows. 
For each $u \in X_j$, $\vbeta_j^f(u) = \vbeta(u) + f(u)$, and for each $u \in X_h$, $\vbeta_h^f(u) = \vbeta(u) - f(u)$. 
A function $f \in F$ defines the number of tokens that move across the two subtrees (rooted at $j$ and $h$). A positive sign represents token movement  from the subtree rooted at $j$ to the subtree rooted at $h$, and vice versa for a negative sign. 
We compute the value $\solis$ for the state $s$ as 
\begin{equation*}
    \solis = \bigvee_{\substack{0 \leq q' \leq q-\valpha^{-1}(1), \\ 0 \leq \ell' \leq \ell, \\ f \in F}} (\solj[q_j, \ell', \valpha, \vbeta_j^f] \wedge \solh[q_h, \ell-\ell', \valpha, \vbeta_h^f]).
\end{equation*}

The time to compute the value of $\solis$ is bounded by $\mathcal{O}(|F|\cdot k \cdot \budget) = 2^{\mathcal{O}(t \log k)}$. 

\smallskip
Given a graph $G$ on $n$ vertices, a tree decomposition of $G$ has $\mathcal{O}(n)$ nodes. 
For each node~$i$ in the tree decomposition, the table $\soli$ stores values for $2^{\mathcal{O}(t \log k)}$ many states. 
A table entry for a state can be computed in time $2^{\mathcal{O}(t \log k)}\cdot n^{\mathcal{O}(1)}$.
Therefore, the tables for all the nodes in the tree decomposition can be computed in time $2^{\mathcal{O}(t \log k)}\cdot n^{\mathcal{O}(1)}$. 
Let~$r$ be the root node. 
We obtain the final answer from the table entry at the root node. 
That is, $(G, S, \budget)$ is a yes-instance of \textsc{VCD} if and only if $\sol_r[(k,\ell, \emptyset\to\{0,1\}, \emptyset\to[-k,k])] = 1$ for some $\ell \leq \budget$. 
\end{proof}

We now consider the parameter $\budget$ on (structurally) nowhere dense classes of graphs. 
The notion of nowhere dense graph classes has been introduced as a common generalization of several previously known notions of sparsity in graphs such as planar graphs, graphs with forbidden (topological) minors, graphs with (locally) bounded treewidth, or graphs with bounded maximum degree. 
It is known that for every nowhere dense class of graphs there is a function $f$ such that every property definable as a first-order formula~$\phi$ can be decided in time $f(|\phi|, \epsilon) \cdot n^{1+\epsilon}$, for any $\epsilon > 0$~\cite{DBLP:conf/stoc/GroheKS14}. 
Recently, this tractability result was extended to structurally nowhere dense classes~\cite{dreier2023first}, which constitute a dense version of nowhere dense classes. 
We use this meta-theorem to show the existence of a fixed-parameter tractable algorithm for the {\sc Vertex Cover Discovery} problem parameterized by $\budget$ and restricted to structurally nowhere dense graph classes. 

\begin{theorem}\label{thm-vc-nowhere}
The \textsc{Vertex Cover Discovery} problem is fixed-parameter tractable when parameterized by $\budget$ and restricted to structurally nowhere dense classes of graphs. 
\end{theorem}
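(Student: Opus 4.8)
The plan is to reduce the problem to first-order (FO) model checking and invoke the meta-theorem for structurally nowhere dense classes~\cite{dreier2023first}. I would treat the starting configuration $S$ as a unary predicate attached to $G$: adding a unary predicate does not alter the underlying graph, so the class of structures $(G,S)$ remains structurally nowhere dense, and the FO model-checking algorithm applies to such vertex-colored structures in time $f(\phi,\epsilon)\cdot n^{1+\epsilon}$. It therefore suffices to construct, for each fixed $\budget$, a sentence $\phi_\budget$ whose size is bounded by a function of $\budget$ such that $(G,S,\budget)$ is a yes-instance if and only if $(G,S)\models\phi_\budget$.

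The enabling structural observation is that a budget of $\budget$ slides can only perturb a bounded portion of $S$. Recall from the proof of~\cref{thm-vc-fpt-k} (following~\cite{DBLP:journals/siamdm/CalinescuDP08}) that the minimum number of slides transforming $S$ into a target configuration $C$ with $|C|=|S|$ equals the minimum weight of a perfect matching between $S$ and $C$ under shortest-path-distance weights, and that such a matching is realizable by sliding each token along a shortest path, swapping token identities whenever two paths meet. Hence $(G,S,\budget)$ is a yes-instance if and only if there exist disjoint $P\subseteq S$ and $Q\subseteq V(G)\setminus S$ with $|P|=|Q|=r$ and a bijection $p_1\mapsto q_1,\dots,p_r\mapsto q_r$ satisfying $\sum_{i=1}^r \mathrm{dist}_G(p_i,q_i)\le\budget$ for which $C=(S\setminus P)\cup Q$ is a vertex cover of $G$. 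Since each moved token travels distance at least $1$, the distance bound forces $r\le\budget$, so only $\Oof(\budget)$ vertices ever need to be named.

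This description translates directly into FO. I would existentially quantify $r\le\budget$ sources $p_1,\dots,p_r$ and $r$ targets $q_1,\dots,q_r$, together with the quantifier-free constraints $S(p_i)$, $\lnot S(q_i)$, and pairwise distinctness of all the $p_i$ and $q_i$. The budget is encoded as a finite disjunction over all length profiles $(\ell_1,\dots,\ell_r)$ with $\ell_i\ge1$ and $\sum_i\ell_i\le\budget$, asserting for each $i$ the FO-expressible property $\mathrm{dist}_G(p_i,q_i)\le\ell_i$ (this certifies a total distance at most $\budget$, and conversely an optimal witness yields such a profile by setting $\ell_i=\mathrm{dist}_G(p_i,q_i)$). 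Membership in the target set is the formula $\chi_C(w):=\bigl(S(w)\land\bigwedge_{i=1}^r w\neq p_i\bigr)\lor\bigvee_{i=1}^r w=q_i$, and the cover condition becomes $\forall u\,\forall v\bigl(E(u,v)\to(\chi_C(u)\lor\chi_C(v))\bigr)$. Taking the disjunction over all $0\le r\le\budget$ produces $\phi_\budget$, whose size depends only on $\budget$.

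The main point to get right, and the reason the size $k=|S|$ causes no difficulty, is that we never quantify over the potentially linearly many fixed tokens: their contribution to the cover is captured for free by the predicate $S$ inside $\chi_C$, while the universal edge clause simultaneously checks that the newly occupied vertices cover the previously uncovered edges and that vacating the sources uncovers none. Applying the FO model-checking algorithm of~\cite{dreier2023first} to $\phi_\budget$ on the structure $(G,S)$ then decides the instance in \textsf{FPT} time. I expect the only delicate steps to be the clean bookkeeping that the canonical choice $P=S\setminus C$, $Q=C\setminus S$ makes targets disjoint from $S$ (so both directions of the equivalence line up), and the verification that a walk of the prescribed length, rather than a genuine shortest path, is all that the distance predicate requires.
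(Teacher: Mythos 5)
Your proposal is correct and follows essentially the same route as the paper: mark $S$ with a unary predicate, existentially quantify the at most $\budget$ moved tokens and their targets, take a finite disjunction over length profiles summing to at most $\budget$, and verify the vertex cover condition with a quantifier-free description of the modified set, then invoke the FO model-checking meta-theorem of~\cite{dreier2023first}. The only difference is that you spell out the reduction to the matching/shortest-path characterization of slide cost (and the rerouting needed so that targets can be taken outside $S$), which the paper leaves implicit.
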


\begin{proof}
We reduce \textsc{VCD} to the first-order model checking problem, which is known to be fixed-parameter tractable on structurally nowhere dense graph classes~\cite{dreier2023first}. 
It suffices to show that we can construct a first-order formula~$\phi$ expressing the existence of a solution to a \textsc{VCD} instance such that $|\phi| = f(\budget)$, for some function $f$. 
Given an instance $(G,S,\budget)$ of \textsc{VCD}, we mark the set~$S$ by a unary predicate, that is, a unary relation such that $S(v)$ holds whenever $v \in S$. This makes the input configuration accessible to the first-order formula. 
To construct $\phi$ we proceed as follows. Note that we need to express the existence of $q \leq \min\{k,\budget\}$ distinct vertices in $S$ that can reach $q$ other distinct vertices not in $S$ by paths of total length at most $\budget$ such that the resulting set is a vertex cover of $G$. 
To express this, we existentially quantify over at most $q \leq \min\{k,\budget\}$ distinct vertices $u_1$, $\ldots$, $u_{q}$ that belong to $S$ and we existentially quantify over $q$ distinct vertices~\mbox{$v_1$, $\ldots$, $v_{q}$} that belong to $V(G) \setminus S$.
Now, for all $\ell_1,\ldots, \ell_q$ such that $\sum_{1\leq i\leq q}\ell_i\leq \budget$ we verify (again by existential quantification) that there exists a path from $u_i$ to $v_i$ of length $\ell_i$. We make a big disjunction over all such choices for $\ell_1,\ldots, \ell_q$. 
Finally, it remains to check that the set $(S \setminus \{u_i \mid i \in [q]\}) \cup \{v_i \mid i \in [q]\}$ is a vertex cover of the graph. 
This can be accomplished by checking that each edge of the graph has at least one endpoint either in $S\setminus \{u_1,\ldots, u_q\}$ or in $\{v_1,\ldots, v_q\}$. 
We omit the formal presentation of the formula as it offers no further insights. 
\end{proof}

\section{Independent set discovery}\label{sec:isd}

In the \textsc{Independent Set (IS)} problem, we are given a graph~$G$ and an integer $k$ and the problem is to decide whether~$G$ contains an independent set of size at least $k$, where an independent set is a set of pairwise non-adjacent vertices. 

\subsection{Related work}
The \textsc{Independent Set} problem is \textsf{NP}-complete~\cite{karp1972reducibility}, and even \textsf{NP}-complete to approximate within a factor of $n^{1-\epsilon}$, for any $\epsilon>0$~\cite{zuckerman2006linear}. 
The \textsc{Independent Set} problem parameterized by solution size~$k$ is \textsf{W[1]}-complete and hence assumed to not be fixed-parameter tractable~\cite{downey1995fixed}. 

Hence, on general graphs, local search approaches cannot be expected to improve the above stated approximation factor. 
However, in practice we are often dealing with graphs belonging to special graph classes, e.g.\ planar graphs and, more generally, classes with subexponential expansion, where local search is known to lead to much better approximation algorithms, and even to polynomial-time approximation schemes (PTAS), see e.g.~\cite{har2017approximation}. 


The \textsc{Independent Set} problem is also one of the  most studied problems under the combinatorial reconfiguration framework~\cite{van2013complexity,nishimura2018introduction}. 
With respect to classical complexity, results for the \textsc{Independent Set Reconfiguration} problem and the
\textsc{Vertex Cover Reconfiguration} problem are interchangeable; we simply complement vertex sets (recall that each edge has at least one endpoint in a vertex cover, so that the remaining vertices form an independent set). More positive results (quite different than those for~\textsc{Vertex Cover Reconfiguration}) are possible if we consider the parameterized complexity of the problem~\cite{DBLP:journals/algorithmica/MouawadN0SS17}. 

\subsection{Notation and definitions}

In the \textsc{Independent Set Discovery (ISD)} problem, we are given a graph $G$, a starting configuration $S$ consisting of~$k$~tokens, and a positive integer budget $\budget$. 
The goal is to decide whether we can reach an independent set of $G$ starting from~$S$ using at most $\budget$ token slides (we do not allow two or more tokens to occupy the same vertex). Again, we denote an instance of \textsc{ISD} by $(G,S,\budget)$ and make the considered parameter explicit in the text. 

Note that, for \textsc{Independent Set Discovery}, the token jumping model and token addition/removal model boil down to the following problems. 
If $k\leq \budget$ for token jumping or $k \leq 2\budget$ for token/addition removal, then the question is simply whether there exists a solution of size $k$, as in this case we can simply move the tokens to this solution one by one. In other words, the problem boils down to the classical \textsc{Independent Set} problem. 
If $\budget \leq k$ (for token jumping) or $2\budget \leq k$ (for token addition/removal), then the question is whether there exists a solution whose symmetric difference with the initial configuration is at most~$2\budget$. 
This question has been studied in the local search version of \textsc{Independent Set}. 
We therefore focus on the token sliding model. 


\subsection{Intractability}
We first show \textsf{NP}-completeness of \textsc{Independent Set Discovery} even restricted to planar graphs of maximum degree four.

\begin{theorem}
\label{thm:ISD_NPcomplete}
The \textsc{Independent Set Discovery} problem is \textsf{NP}-complete on the class of planar graphs of maximum degree four. 
\end{theorem}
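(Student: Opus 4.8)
The plan is to establish membership in \NP\ and then give a polynomial-time reduction from \textsc{Independent Set} on planar graphs of maximum degree three, which is \textsf{NP}-complete (it is complementary to \textsc{Vertex Cover} on the same graphs, cf.~\cite{DBLP:journals/siamcomp/Lichtenstein82}). Membership in \NP\ follows from the observation made at the start of this section that we may assume $\budget \le n^2$, so a candidate sequence of slides (or, equivalently, target positions together with witnessing shortest paths) can be guessed and verified in polynomial time. For the hardness, given an instance $(G,\kappa)$ with $G$ planar of maximum degree three and $n = |V(G)|$, I would build $H$ by taking a copy of $G$ and attaching to every vertex $v \in V(G)$ a single pendant vertex $x_v$ (adding the edge $\{v,x_v\}$). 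The starting configuration is $S = V(G)$, so $k = n$, and the budget is $\budget = n - \kappa$.

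The graph $H$ is planar and has maximum degree four, since attaching one pendant raises the degree of each original vertex from at most three to at most four, and the pendants themselves have degree one. The guiding intuition is that the pendants act as private ``parking spots'': the only conflicts in the configuration $S = V(G)$ are exactly the edges of $G$, and the cheapest way to vacate a vertex $v$ without creating a new conflict is to slide its token onto the empty pendant $x_v$ at a cost of one slide.

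For the forward direction, if $G$ has an independent set $I$ of size at least $\kappa$, then $D = V(G) \setminus I$ is a vertex cover with $|D| \le n - \kappa = \budget$. Sliding the token on each $v \in D$ to its pendant $x_v$ uses $|D| \le \budget$ slides and leaves the tokens on $I \cup \{x_v : v \in D\}$, which is independent in $H$: the set $I$ is independent, and each occupied pendant $x_v$ is adjacent only to the now-empty vertex $v$. All intermediate configurations keep the tokens on distinct vertices, so this is a valid \textsc{ISD} solution.

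The backward direction is the crux, and the step I expect to be the main obstacle, because a solution may move tokens arbitrarily (even deeper into $G$ and back) rather than in the clean ``one slide per vacated vertex'' pattern. The plan is to handle this with a charging argument that is robust to such movement. Suppose a sequence of $L \le \budget$ slides reaches an independent configuration with final occupancy $O_f$, and let $A = O_f \cap V(G)$; since $A \subseteq O_f$ and $O_f$ is independent, $A$ is an independent set of $G$. Let $U = V(G)\setminus A$ be the vertices of $G$ that are occupied initially but empty at the end. For each $v \in U$ the occupancy drops from $1$ to $0$, so the number of slides leaving $v$ exceeds the number entering $v$ by one; in particular at least one slide leaves $v$. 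As each slide leaves exactly one vertex, these out-slides are pairwise distinct over $v \in U$, whence $|U| \le L \le n - \kappa$. Therefore $|A| = n - |U| \ge \kappa$, so $A$ witnesses an independent set of size at least $\kappa$ in $G$, completing the equivalence.
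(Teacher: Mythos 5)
Your proof is correct, but it takes a genuinely different and noticeably leaner route than the paper. The paper also reduces from \textsc{Independent Set} on planar graphs of maximum degree three, but it attaches to each vertex $v$ a five-vertex path $w_v,x_v,c_v,y_v,z_v$ (joined to $v$ at $c_v$) carrying three mutually conflicting tokens on $x_v,c_v,y_v$, sets the budget to $2n-\kappa$, and argues that each gadget costs two slides to repair internally versus one slide if its token is pushed onto $v$ itself, so that at least $\kappa$ original vertices must receive tokens. You instead place the tokens directly on $V(G)$, give each vertex a single pendant parking spot, and set the budget to $n-\kappa$; the conflicts are then exactly $E(G)$ and the question becomes whether a vertex cover of size at most $n-\kappa$ can be vacated. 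Your backward direction is the stronger part: the net-flow/charging argument (each finally-empty original vertex has out-degree exceeding in-degree in the slide multiset, so it owns a private slide) is robust to arbitrary token movement and is cleaner than the paper's gadget-by-gadget accounting. Both constructions preserve planarity and raise the maximum degree from three to four, so both meet the stated graph class. The only cosmetic points worth tidying are the degenerate case $\kappa=n$ (where $\budget=n-\kappa=0$ is not a positive integer as the problem definition requires; either handle it separately or note that \textsc{IS} remains \textsf{NP}-complete with $\kappa<n$) and the citation for the base problem (the paper cites Mohar for \textsc{IS} on planar subcubic graphs rather than deriving it from Lichtenstein's \textsc{VC} result, though your complementation remark is of course valid).
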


\begin{proof}
We give a reduction from \textsc{IS} on planar graphs of maximum degree three, which is known to be \textsf{NP}-complete~\cite{mohar2001}. Given an instance $(G,\kappa)$ of \textsc{IS}, where $G$ is a planar graph of maximum degree three, we construct an instance of \textsc{ISD} as follows. 
We create a new graph $H$ that initially consists of a copy of $G$. Then, for each vertex $v \in V(H)$, we create a new path on five vertices $w_v, x_v, c_v, y_v, z_v$ and we connect $v$ to $c_v$. We choose $S = \{c_v, x_v, y_v \mid v \in V(G)\}$ and we set the budget $\budget = 2n - \kappa$, where $n = |V(G)|$. Note that $k = 3|V(G)|$. This completes the construction of the instance $(H,S,\budget)$. It is easy to observe that the graph $H$ is planar and of maximum degree four. We prove that $(G,\kappa)$ is a yes-instance of \textsc{IS} if and only if $(H,S,\budget)$ is a yes-instance of \textsc{ISD}. 

First assume that $G$ has an independent set $I$ of size at least~$\kappa$. Then, in $H$ we can slide every token on $c_v$ to $v$, where $v \in I$. For all other vertices $v \notin I$ we slide every token on~$x_v$ to $w_v$ and every token on $y_v$ to $z_v$.
Observe that we need a budget of~$2$ to repair the path on every vertex $v \notin I$, while we need only a budget of $1$ to repair the paths on vertices $v \in I$.
Since~$I$ is of size at least $\kappa$, we need no more than $2n - \kappa = \budget$ slides. To see that the resulting set is an independent set of~$H$, note that for every path on a vertex $v \in I$ we have moved the token from~$c_v$ to $v$ itself. 
As $I$ is an independent set, and the only conflicting neighbor of $x_v$ resp.\ $y_v$ is $c_v$, the tokens from these paths form an independent set. 
The tokens on paths of vertices $v \notin I$ also form an independent set. As the only neighbor of~$w_v$ is $x_v$ and the token has been moved from~$x_v$ to $w_v$, hence there is no conflict. This is also true for $y_v$ and $z_v$. As the neighbors~$x_v$ and $y_v$ of $c_v$ have been freed, and there is no token on $v$ itself, i.e., these paths form an independent set.

For the reverse direction, assume that $(H,S,\budget)$ is a yes-instance of \textsc{ISD}. 
Let $I$ be the resulting independent set. We need to show that \mbox{$|I \cap V(G)|\geq \kappa$}, which then corresponds to an independent set in $G$. 
Assume towards a contradiction that $|I \cap V(G)|=\ell<\kappa$. This implies that $3n-\ell$ tokens are still on the path gadgets. Since every path gadget can contain at most $3$ independent vertices and $I$ is an independent set, at least $n-\ell$ path gadgets contain $3$ tokens. 
It takes at least $2$ slides to keep the $3$ tokens independent while not moving them out of the path. Hence, we require a budget of at least $2n-2\ell$ for these slides. Moreover, each of the $\ell$ tokens on $V(G)$ require at least one slide. In total, we require a budget of $2n-\ell>2n-\kappa$, a contradiction. 
\end{proof}

We next show that the problem is also hard from a parameterized perspective when considering the parameter $k+\budget$. 

\begin{theorem}\label{thm:ISD-whard-general}
The \textsc{Independent Set Discovery} problem is \textsf{W[1]}-hard when parameterized by $k + \budget$ even on graphs excluding $\{C_4, \ldots, C_p\}$ as induced subgraphs, for any constant~$p$.
\end{theorem}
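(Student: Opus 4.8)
The plan is to give a parameterized reduction from \textsc{Clique} (which is \textsf{W[1]}-hard parameterized by the solution size $\kappa$) to \textsc{ISD} parameterized by $k+\budget$, while ensuring the constructed host graph has large induced girth, i.e.\ excludes $\{C_4,\ldots,C_p\}$ as induced subgraphs. The overall strategy mirrors the selection-gadget philosophy already used in \Cref{thm:VCD_WHARD}: we want tokens to be forced to ``select'' $\kappa$ vertices of $G$ and $\binom{\kappa}{2}$ edges of $G$, and feasibility (independence) of the target configuration should be achievable within the small budget only if the selected edges are exactly the edges spanned by the selected vertices, i.e.\ they form a clique. First I would fix a starting configuration $S$ where each token sits at the ``home'' end of a short private path (a dangling gadget) attached to each vertex-candidate and each edge-candidate of $G$; sliding a token one step \emph{into} the graph represents selecting the corresponding vertex or edge. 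The budget $\budget$ is then set to exactly $\kappa + \binom{\kappa}{2}$ (plus whatever constant per-token cost the gadget forces), so that each selection token can move only a constant number of steps, and $k$ is likewise polynomial in $\kappa$ but the relevant \emph{parameter} $k+\budget$ is bounded by a function of $\kappa$ only if one is careful — so I would instead keep only the ``active'' tokens in $S$ and make the rest of the structure impose independence constraints through adjacency, keeping $|S|=k=\Oof(\kappa^2)$ and $\budget=\Oof(\kappa^2)$.

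The key adjacency design is an incidence-checking gadget: for each edge-candidate $e=\{u,v\}$ and each vertex-candidate $w$, I connect the ``selected'' positions so that if a token selects edge $e$ and another token selects a vertex $w\notin e$, the two occupied vertices are non-adjacent (no conflict), but if $e$ is selected and \emph{neither} endpoint $u$ nor $v$ is selected, then some independence constraint is violated, forcing that every selected edge has both endpoints among the selected vertices. Combined with selecting exactly $\kappa$ vertices and $\binom{\kappa}{2}$ edges, a double-counting argument ($\binom{\kappa}{2}$ edges each needing two endpoints among only $\kappa$ selected vertices) forces the selected set to be a clique. I would realize these constraints through adjacencies between the ``selected'' endpoints of the token paths rather than through long gadgets, verifying in each case that the conflict graph among occupied vertices corresponds precisely to the clique condition.

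The main obstacle — and the reason this theorem is stated with the extra induced-subgraph restriction — is simultaneously (i) enforcing the incidence constraints via adjacency and (ii) guaranteeing that $H$ contains no induced cycle of length in $\{4,\ldots,p\}$. The natural way to link an edge-candidate to its two endpoint vertex-candidates creates short cycles. The standard remedy, which I would adopt, is \emph{subdivision}: replace every edge of the checking gadget by a path of length greater than $p$ (or more carefully, subdivide so that any potential short cycle is broken into a long one), at the cost of lengthening the paths that selection tokens must traverse. This means the budget $\budget$ must be scaled up by the subdivision length, but since $p$ is a fixed constant, $\budget$ remains $\Oof(\kappa^2)$ and the parameter $k+\budget$ stays bounded by a function of $\kappa$, so the reduction is still an \textsf{FPT}-reduction. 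The delicate part is checking that subdivision does not create \emph{new} short induced cycles elsewhere and does not allow a token to ``cheat'' by taking an unintended shortcut through a subdivided path, so I would verify that the only budget-$\budget$ token assignments correspond to the intended vertex/edge selections and that the resulting girth (induced) exceeds $p$.

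\begin{proof}[Proof sketch of intended approach]
We reduce from \textsc{Clique}. Given $(G,\kappa)$, build a host graph $H$ with a token ``home'' path for each vertex of $G$ and each edge of $G$; sliding a home token one step inward encodes selecting that vertex or edge. Adjacencies between the inward (selected) positions are designed so that the occupied target set is independent exactly when the $\binom{\kappa}{2}$ selected edges all have both endpoints among the $\kappa$ selected vertices, which by a counting argument forces a clique. To avoid short induced cycles, subdivide each gadget edge into a path of length exceeding $p$; since $p$ is constant this inflates $\budget$ only by a constant factor, keeping $k+\budget=\Oof(\kappa^2)$ and the reduction \textsf{FPT}. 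One then checks that every budget-respecting slide sequence corresponds to a legitimate selection and that $H$ has no induced $C_4,\ldots,C_p$.
\end{proof}
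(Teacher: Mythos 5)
Your high-level plan (\textsc{Clique}-type selection gadgets plus subdivision to control induced cycles) has two concrete problems that I do not think can be patched independently of each other. First, the incidence-checking gadget is never actually constructed, and the constraint it must enforce --- ``if edge $e=\{u,v\}$ is selected then both $u$ and $v$ are selected'' --- is an implication, whereas independence gives you only NAND constraints between pairs of occupied positions. The only adjacency-based way to realize the implication is to make the selected position of $e$ adjacent to the \emph{home} positions of $u$ and $v$ (so that a conflict arises exactly when $e$ is chosen but an endpoint token has not vacated its home), or else to use a chain of forced moves with an exactly tight budget, as the paper does in its $2$-degenerate bipartite reduction. Either way you also need explicit forcing gadgets (pairs of adjacent tokens) to guarantee that any tokens move at all; in your sketch the initial configuration could already be independent, making the instance trivially a yes-instance with budget $0$. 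Second, and more fundamentally, the subdivision remedy is incompatible with the mechanism: subdividing a gadget edge destroys precisely the adjacency that encodes the independence constraint. So the edges carrying constraints must stay unsubdivided, and those are exactly the edges among which short induced cycles appear (the selected position of $e$ together with the home positions of $u$ and $v$ and the selected position of another edge sharing an endpoint readily produces an induced $C_4$ or $C_5$). You are then left having to prove an induced-girth bound for the un-subdividable constraint edges, which is the original difficulty.

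The paper avoids all of this by changing the source problem rather than the gadget: it reduces from \textsc{Multicolored Independent Set}, which is already known to be \textsf{W[1]}-hard on graphs excluding $\{C_4,\ldots,C_p\}$ as induced subgraphs. The construction keeps $H$ equal to $G$ plus, for each color class $V_i$ (a clique), one vertex $u_i$ complete to $V_i$ and one pendant $w_i$; since $V_i\cup\{u_i\}$ is a clique and $w_i$ has degree one, every induced cycle of $H$ of length at least $4$ already lives in $G$, so the class restriction is inherited for free. The adjacent token pair $\{u_i,w_i\}$ forces each $u_i$-token to slide one step into $V_i$ within budget $\budget=\kappa$, and the target is independent iff the chosen vertices form a multicolored independent set; the parameter is $k+\budget=3\kappa$. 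If you insist on reducing from \textsc{Clique} directly while controlling induced cycles inside your own gadget, you would essentially have to re-prove that hardness-on-restricted-classes result from scratch, which is substantially harder than the theorem itself.
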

\begin{proof}
We present a parameterized reduction from the \textsc{Multicolored Independent Set (MIS)} problem, which is known to be \textsf{W[1]}-hard on graphs excluding $\{C_4, \ldots, C_p\}$ as induced subgraphs, for any constant $p$~\cite{DBLP:journals/algorithmica/BonnetBCTW20}. Recall that in the \textsc{MIS} problem we are given a graph~$G$ and an integer $\kappa$, where $V(G)$ is partitioned into $\kappa$ cliques \mbox{$V_1$, $V_2$, $\ldots$, $V_\kappa$}, and the goal is to find a multicolored independent set of size $\kappa$, i.e., an independent set containing one vertex from each set $V_i$, for $i \in [\kappa]$. Given an instance $(G,\kappa)$ of \textsc{MIS}, we construct an instance $(H,S,\budget)$ of \textsc{ISD} as follows. First, let~$H$ be a copy of~$G$. Then, for each $i \in [\kappa]$, we add an edge on two new vertices $\{u_i, w_i\}$ and we make $u_i$ adjacent to all vertices in $V_i$. Finally, we choose $S = \{u_i \mid i \in [\kappa]\} \cup \{w_i \mid i \in [\kappa]\}$ and we set $\budget = \kappa$. Note that $k = |S| = 2\kappa$. 

Assume that $G$ has a multicolored independent set of size $\kappa$. Let $I = \{v_1, \ldots, v_\kappa\}$ denote such a set, where $v_i \in V_i$. Then we can solve the discovery instance by sliding each token on~$u_i$ to the vertex $v_i$, as needed. 
For the reverse direction, since we need to slide all the tokens on vertices $u_i$ and each set $V_i$ can contain only one token, it follows that this is only possible if $G$   has a multicolored independent set of size $\kappa$. 
\end{proof}

In what follows, we further investigate the complexity of the \textsc{Independent Set Discovery} problem when parameterized only by $\budget$ or $k$ instead of $k+\budget$ and restricted to special graph classes. It turns out that the problem remains \textsf{W[1]}-hard when parameterized by $\budget$ and restricted to the class of 2-degenerate bipartite graphs.
\begin{theorem}
The \textsc{Independent Set Discovery} problem is \textsf{W[1]}-hard when parameterized by the budget $\budget$ even on the class of 2-degenerate bipartite graphs.
\end{theorem}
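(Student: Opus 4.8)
The plan is to give a parameterized reduction from \textsc{Clique}, adapting the gadget machinery already developed for the \textsc{VCD} hardness proof in \Cref{thm:VCD_WHARD}. The key observation is the complementarity between vertex covers and independent sets: a set is a vertex cover if and only if its complement is an independent set. However, since \textsc{ISD} counts tokens \emph{inside} the independent set rather than inside the cover, I cannot simply reuse the previous reduction verbatim. Instead, I would start from an instance $(G,\kappa)$ of \textsc{Clique} and build a bipartite graph $H$ in which selecting a $\kappa$-clique of $G$ corresponds to sliding a budget-bounded number of tokens so that the resulting configuration is independent. The central design goal is to force the solution to ``choose'' $\binom{\kappa}{2}$ edges of $G$ together with $\kappa$ incident vertices, and to make feasibility of the token sliding equivalent to those edges forming a clique.

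Concretely, I would introduce for each edge $e\in E(G)$ and each vertex $u\in V(G)$ gadget paths (of length two or three) attached to ``selector'' vertices, exactly as in \Cref{thm:VCD_WHARD}, so that degeneracy and bipartiteness are preserved. The starting configuration $S$ would place tokens on the interior vertices of these path gadgets, each of which has a cheap neighbouring private leaf it can slide onto to become independent. Sliding a token onto such a leaf costs one budget unit and ``deactivates'' the corresponding gadget; the only way to achieve independence cheaply is to slide certain tokens \emph{into} the copy of $G$ so that they land on clique vertices and clique edges. As in the vertex-cover argument, I would set $\budget = 2\binom{\kappa}{2}+\kappa$ (or a closely related value depending on the exact path lengths) so that exactly $\binom{\kappa}{2}$ edge-tokens and $\kappa$ vertex-tokens can be routed into $G$, and the consistency constraints (an edge-token at $e=\{u,v\}$ being independent from its neighbours forces both endpoint vertex-tokens to be selected) ensure that the $\binom{\kappa}{2}$ chosen edges are precisely the edges spanned by $\kappa$ selected vertices.

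The correctness argument then splits into the two standard directions. In the forward direction, given a $\kappa$-clique $C$, I slide the vertex-tokens onto the $\kappa$ vertices of $C$ and the edge-tokens onto the $\binom{\kappa}{2}$ edges of $C$, repair the remaining path gadgets with their leaf slides, and verify the total cost equals $\budget$ and that the result is independent. In the reverse direction, I would argue via a counting/budget argument: each unrepaired path gadget forces wasted slides, so a feasible solution of cost at most $\budget$ must route tokens into $G$ in the intended pattern, and the independence constraints between edge-gadgets and vertex-gadgets then certify that the $\binom{\kappa}{2}$ selected edges induce a clique on the $\kappa$ selected vertices.

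The main obstacle I anticipate is engineering the gadgets so that three properties hold simultaneously: (i) the graph is genuinely $2$-degenerate and bipartite, (ii) the budget is tight enough that no ``illegal'' token routing can sneak in under budget, and (iii) independence (rather than covering) is the feasibility condition enforced along every gadget edge. The switch from covering to independence inverts the role of each gadget vertex, so the delicate part is ensuring that the cheap ``repair'' slides genuinely restore independence without accidentally enabling a shortcut that certifies a non-clique. I would handle this by giving every interior gadget token a private degree-one escape vertex, so the repair cost is forced and uniform, and by routing the $\binom{\kappa}{2}+\kappa$ ``expensive'' slides through a bottleneck that can only be satisfied consistently when the chosen edges and vertices form a clique; the $\{C_4,\dots\}$-type structural considerations from \Cref{thm:ISD-whard-general} do not arise here since we only need $2$-degeneracy and bipartiteness, which the path-gadget construction delivers directly.
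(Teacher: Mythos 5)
Your high-level plan -- a budget-tight reduction from a clique-type problem that forces the selection of $\kappa$ vertices and $\binom{\kappa}{2}$ edges -- does match the paper's strategy (the paper reduces from \textsc{Multicolored Clique} rather than \textsc{Clique}, which is immaterial). But the one piece you explicitly defer, namely the gadget that enforces ``edge selected $\Rightarrow$ both endpoints selected'' under the \emph{independence} constraint, is precisely the technical core of the proof, and your sketch does not contain a mechanism that would make it work. In the \textsc{VCD} reduction the edge vertex $y_e$ is directly adjacent to $x_u$ and $x_v$, and covering makes this adjacency do the consistency work for free. Under independence that same adjacency is poison: a token landing on an edge vertex adjacent to tokened endpoint vertices destroys independence rather than certifying anything. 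Your proposed fixes (private degree-one escape leaves, a ``bottleneck'') do not explain how a token arriving at an edge gadget transmits a constraint to the two vertex gadgets at its endpoints. The paper's resolution is a relay: each edge vertex $e_{uv}$ is joined to $y_u$ and to $y_v$ by paths with two internal vertices, the inner ones ($W$) pre-loaded with tokens. Sliding a token onto $e_{uv}$ forces the two adjacent $W$-tokens to retreat one step outward onto $Z$, where they become adjacent to $y_u$ and $y_v$ -- so the tokens initially on $y_u$ and $y_v$ must already have vacated to $x_u$ and $x_v$. The budget $3\binom{\kappa}{2}+2\kappa$ (three slides per selected edge, two per selected vertex) is exactly what makes this cascade tight; your proposed $2\binom{\kappa}{2}+\kappa$ is the \textsc{VCD} budget and does not account for the relay moves.

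A second gap is the lower-bound side of the budget argument. You say ``each unrepaired path gadget forces wasted slides,'' but the paper's reverse direction does not charge for unrepaired gadgets at all: it instead places pendant edges $\{u_i,w_i\}$ and $\{u_{i,j},w_{i,j}\}$ with tokens on \emph{both} endpoints, so that every such edge unconditionally forces at least one slide, and the surrounding tokens (on all $y_v$ and on all of $W$) force the additional slides per forcing gadget. This gives an exact floor of $2\kappa+3\binom{\kappa}{2}$ slides matching the budget, which is what rules out illegal routings. Without a concrete accounting of this kind -- which depends entirely on the gadget you have not specified -- the reverse direction of your reduction cannot be completed.
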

\begin{proof}
We present a parameterized reduction from the \textsc{Multicolored Clique} problem, which is a well-known \textsf{W[1]}-hard problem~\cite{cygan2015parameterized}. Recall that in the \textsc{Multicolored Clique} problem we are given a graph $G$ and an integer~$\kappa$, where $V(G)$ is partitioned into $\kappa$ independent sets $V_1$, $V_2$, $\ldots$, $V_\kappa$, and the goal is to find a multicolored clique of size $\kappa$, i.e., a clique containing one vertex from each set~$V_i$, for $i \in [\kappa]$. 
Given an instance $(G,\kappa)$ of \textsc{Multicolored Clique}, we construct an instance $(H,S,\budget)$ of \textsc{ISD} as follows. We first describe the graph~$H$. For each vertex set $V_i$, we create a new set $X_i$, where each vertex $v \in V_i$ is replaced by a path on $3$ vertices which we denote by $x_v$, $y_v$, and $z_v$. 
Moreover, we add a vertex $u_i$ that is connected to all vertices~$z_v$, for $v \in V_i$. 
Finally, we add a vertex $w_i$ that is only adjacent to vertex~$u_i$. 
For $i<j \in [\kappa]$, we use $E_{i,j}$ to denote the set of edges between vertices in $V_i$ and vertices in $V_j$ (in the graph~$G$). For each~$E_{i,j}$, we create a new set of vertices, which we denote by $Y_{i,j}$, that contains one vertex $e_{uv}$ for each edge $\{u,v\} \in E_{i,j}$. 
We additionally add an edge (consisting of two new vertices) $\{w_{i,j}, u_{i,j}\}$, where~$u_{i,j}$ is also adjacent to every vertex in $Y_{i,j}$ (but not part of $Y_{i,j})$. For each vertex $e_{uv} \in Y_{i,j}$, $i<j \in [\kappa]$, we connect $e_{uv}$ to~$y_u$ via a path consisting of two new vertices and we connect $e_{uv}$ to~$y_v$ via a path consisting of two new vertices. 
Let $X = \bigcup_{i\in[\kappa]}{X_i}$ and $Y = \bigcup_{i,j\in[\kappa]}{Y_{i,j}}$. We use $W$ to denote the set of all vertices along paths from~$Y$ to~$X$ that are at distance one from some vertex in $Y$ and we use $Z$ to denote the set of all vertices along such paths that are at distance two from some vertex in $Y$. Finally, let $S = W \cup \{u_i \mid i \in [\kappa]\} \cup \{w_i \mid i \in [\kappa]\} \cup \{u_{i,j} \mid i<j \in [\kappa]\} \cup \{w_{i,j} \mid i<j \in [\kappa]\} \cup \{y_v \mid v \in V(G)\}$ and let $\budget = 3\binom{\kappa}{2} + 2\kappa$ and $k = 2\kappa + 2\binom{\kappa}{2} + n + 2m$ (see \Cref{fig-is-hard-degenerate}).

\begin{figure}
\centering
  \begin{tikzpicture}[every node/.style={inner sep = 2pt}, scale=0.6]

  \foreach \v [count=\i] in {{0, 0}, {0, -1.5}, {0, -3.5}, {0, -5.5}, {0, -7.5}} {
	\fill[red] (\v)           circle (3pt) node(b\i) {};
	\fill      (\v)++ (0,1)   circle (2pt) node(a\i) {};
	\fill      (\v)++ (1.5,0) circle (2pt) node(c\i) {};

    \draw (a\i) -- (b\i) -- (c\i);
  }

  \draw[thin] (-.333, 1.333) rectangle (1.833, -3.833) node[right] {$X_i$};
  \draw[thin] (-.333, -4.166) rectangle (1.833, -7.833) node[right] {$X_j$};

  \node at (0.75, -2) {$\vdots$};
  \node at (0.75, -6) {$\vdots$};

  \fill[red] (3, -1.5) circle (3pt) node (u1) {} node[above,black] {$u_i$};
  \draw (c1) -- (u1);
  \draw (c2) -- (u1);
  \draw (c3) -- (u1);

  \fill[red] (4, -1.5) circle (3pt) node (w1) {} node[above,black] {$w_i$};
  \draw (u1) -- (w1);

  \fill[red] (3, -6) circle (3pt) node (u2) {} node[above,black] {$u_j$};
  \draw (c4) -- (u2);
  \draw (c5) -- (u2);

  \fill[red] (4, -6) circle (3pt) node (w2) {} node[above,black] {$w_j$};
  \draw (u2) -- (w2);

  \foreach \v [count=\i] in {{-3, 0.5}, {-3, -2.5}, {-3, -5.5}} {
	\fill (\v)          circle (2pt) node(x\i) {};
	\fill (\v)++ (0,-1) circle (2pt) node(y\i) {};

    \node[fit=(x\i) (y\i),ellipse,draw,thin] (e\i) {};

    \fill[red] (\v)++ (-1,-0.5) circle (3pt) node(u1\i) {};
    \fill[red] (\v)++ (-2,-0.5) circle (3pt) node(w1\i) {};

    \draw (u1\i) -- (x\i);
    \draw (u1\i) -- (y\i);
    \draw (u1\i) -- (w1\i);
  }

  \node[below left = 0cm and 0cm of e1] {$Y_{i,j}$};
  \node[above = 0cm of u11] {$u_{i,j}$};
  \node[above = 0cm of w11] {$w_{i,j}$};

  \foreach \u/\v in {x1/b1, y1/b1, x1/b4, y1/b5} {
    \draw (\u) -- (\v) node[pos = 0.33, fill, red, circle, inner sep = 2*0.75pt] {} node[pos = 0.66, fill, circle, inner sep = 1.5*0.75pt] {};
  }  

  \end{tikzpicture}
\caption{An illustration of the \textsf{W[1]}-hardness reduction for \textsc{Independent Set Discovery} on 2-degenerate bipartite graphs.}
\label{fig-is-hard-degenerate}
\end{figure}
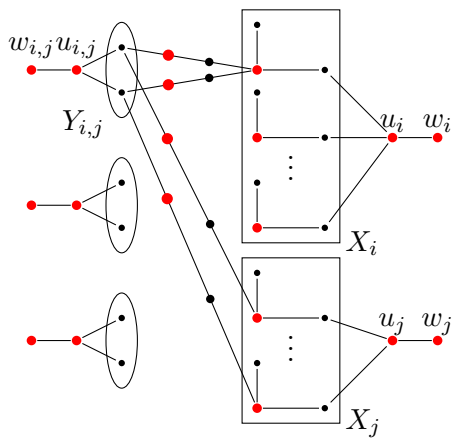

It is not hard to see that the graph $H$ is indeed bipartite by construction. To see that the graph $H$ is $2$-degenerate it suffices to note that after deleting all $y_v$ vertices, $u_i$ vertices, and $u_{i,j}$ vertices, we get a graph of maximum degree two and none of the deleted vertices are adjacent. Hence every subgraph of $H$ either has a vertex of the form $w_i$, $w_{i,j}$, $x_v$, $z_v$, or a vertex in $W \cup Y$ which all have degree at most~2, or no such vertex implying that every other vertex has degree at most~2. We claim that $(G, \kappa)$ is a yes-instance of \textsc{Multicolored Clique} if and only if $(H,S,\budget)$ is a yes-instance of \textsc{ISD}.

First assume that $(G,k)$ is a yes-instance and let $C = \{v_1, v_2, \ldots v_\kappa\}$ denote a multicolored clique in $G$, where each vertex $v_i$ belongs to $V_i$, for $i \in [\kappa]$. 
We construct a sequence of slides transforming $S$ to an independent set as follows. For each $i$, we slide the token on~$y_{v_i}$ to $x_{v_i}$ and then slide the token on $u_i$ to $z_{v_i}$. 
This requires a total of $2\kappa$ slides. Next, for each pair $i,j$ with $i < j$, we slide the token on $u_{i,j}$ to the vertex $e_{v_iv_j}$ and then slide the two tokens in $W$ to their neighbors in $Z$. 
This requires a total of $3\binom{\kappa}{2}$ slides which gives us a total of $\budget = 3\binom{\kappa}{2} + 2\kappa$ slides. Since~$C$ is a multicolored clique in $G$, it follows that the resulting configuration is indeed an independent set of $H$. 

For the reverse direction, assume that $(H,S,\budget)$ is a yes-instance of \textsc{ISD}. 
Since we have two adjacent tokens on $u_i$ and $w_i$,  for each $i \in [\kappa]$, we know that we need at least one move for each $i$. 
Moreover, since every vertex $y_v$ contains a token, we know that we need an extra slide for each $i$. 
Hence, we need a minimum of $2\kappa$ slides for the edges of the form $\{u_i, w_i\}$. Similarly, for each pair $i,j$ with $i < j$, we have two adjacent tokens on~$w_{i,j}$ and $u_{i,j}$. Moreover, all vertices in $W$ contain tokens. 
Hence, for each pair $i,j$ with $i < j$ we need at least three slides for a total of $3\binom{\kappa}{2}$ slides for the edges of the form $\{u_{i,j}, w_{i,j}\}$. Hence, there must exist $\kappa$ vertices $y_{v_i}$ and $\binom{\kappa}{2}$ vertices $e_{v_iv_j}$ adjacent to those vertices in order to successfully slide the tokens on the $u_i$ and $u_{i,j}$ vertices away from their neighbors~$w_i$ and~$w_{i,j}$ that contain tokens. 
\end{proof}

\subsection{Tractability}
Our positive results for \textsc{Independent Set Discovery} parameterized by $k$ make use of the notion of independence covering 
families introduced 
in~\cite{DBLP:journals/talg/LokshtanovPSSZ20}. 
We 
formally define such families and the various algorithms for extracting them on different graph classes.

\begin{definition}\label{def-family}
For a graph $G$ and $k \geq 1$, a family of independent sets of $G$ is called an \emph{independence covering family for $(G, k)$}, denoted by $\mathcal{F}(G, k)$,
if for every independent set~$I$ in~$G$ of size at most~$k$, there exists $J \in \mathcal{F}(G, k)$ such that~$I \subseteq J$.
\end{definition}

\begin{theorem}[\cite{DBLP:journals/talg/LokshtanovPSSZ20}]\label{thm-family-degenerate-nowhere}
Given a $d$-degenerate or a nowhere dense class of graphs $\Cc$, a graph $G \in \Cc$, and $k \geq 1$, there exists a deterministic algorithm that runs in time
$f(k) \cdot n^{\mathcal{O}(1)}$ and outputs an independence covering family for $(G, k)$ of size at most $g(k) \cdot n^{\mathcal{O}(1)}$, 
where $f(k)$ and $g(k)$ depend on $k$ and the class $\Cc$ but are independent of the size of the graph. 
\end{theorem}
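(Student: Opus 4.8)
The plan is to treat the two cases of \Cref{thm-family-degenerate-nowhere} --- $d$-degenerate classes and nowhere dense classes --- separately, since the right combinatorial tool differs, while the overall strategy is the same: design a procedure that outputs a \emph{single} independent set together with the guarantee that any fixed independent set $X$ with $|X|\le k$ is fully contained in the output of at least one run, and then bound both the number of runs and the construction time. This is precisely the result of~\cite{DBLP:journals/talg/LokshtanovPSSZ20}, so I would reconstruct their two constructions rather than invent new ones.

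For the degenerate case, first I would fix a degeneracy ordering $\prec$ of $G$, so that every vertex $v$ has a set $\mathrm{back}(v)$ of at most $d$ neighbours preceding it. The key observation is that if $X$ is independent with $|X|\le k$, then $B(X):=X\cup\bigcup_{x\in X}\mathrm{back}(x)$ has size at most $k(d+1)$. I would then invoke a perfect hash family (Naor--Schulman--Srinivasan splitters) $\mathcal{H}$ of functions $V(G)\to[k(d+1)]$ of size $2^{\mathcal{O}(k(d+1))}\log n$ such that for every vertex set of size at most $k(d+1)$ some $h\in\mathcal{H}$ is injective on it. For each $h\in\mathcal{H}$ and each colour subset $A\subseteq[k(d+1)]$ with $|A|\le k$, I define the candidate set $I_{h,A}=\{v\in V(G): h(v)\in A \text{ and } h(u)\notin A \text{ for every } u\in\mathrm{back}(v)\}$, and I take the family to consist of all these sets.

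Two checks then complete this case. First, $I_{h,A}$ is always independent: if $u\prec v$ are adjacent and both lie in $I_{h,A}$, then $u\in\mathrm{back}(v)$ with $h(u)\in A$, contradicting the defining condition for $v$. Second, the family covers every small $X$: choosing $h$ injective on $B(X)$ and $A=h(X)$, each $x\in X$ has $h(x)\in A$, while each back-neighbour $u$ of $x$ satisfies $u\notin X$ (as $X$ is independent) and hence $h(u)\notin A$ by injectivity, so $x\in I_{h,A}$ and thus $X\subseteq I_{h,A}$. The family has size at most $|\mathcal{H}|\cdot(k+1)\binom{k(d+1)}{k}=g(k)\cdot\log n$ (with $d$ a constant of the class) and is computable within $f(k)\cdot n^{\mathcal{O}(1)}$ time, which settles the degenerate case.

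The hard part will be the nowhere dense case, where the degeneracy can grow with $n$ (only $d=n^{o(1)}$ is guaranteed), so substituting $d=n^{o(1)}$ above yields a hash family of size $2^{\mathcal{O}(k\cdot n^{o(1)})}$ and a colour-subset count of $n^{o(k)}$ --- a super-polynomial bound and an exponent depending on $k$, rather than the required $g(k)\cdot n^{\mathcal{O}(1)}$. Overcoming this requires exploiting the sparsity of $\Cc$ far more carefully, using structural tools specific to nowhere dense classes (such as sub-polynomial weak colouring numbers combined with neighbourhood covers, or uniform quasi-wideness and the splitter game) to confine the colouring to a bounded-size local structure around $X$, so that the number of relevant colours remains a function of $k$ alone. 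This delicate localisation is exactly the technical core carried out in~\cite{DBLP:journals/talg/LokshtanovPSSZ20}, and I would follow their argument here rather than reprove it from scratch.
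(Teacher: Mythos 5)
The paper does not prove this statement at all: it is imported verbatim from \cite{DBLP:journals/talg/LokshtanovPSSZ20} and used as a black box, so there is no in-paper argument to compare yours against. That said, your reconstruction of the degenerate case is correct and complete: with a degeneracy ordering, $|B(X)|\le k(d+1)$, a splitter-based perfect hash family injective on $B(X)$, and the sets $I_{h,A}$ defined by ``$h(v)\in A$ and no back-neighbour of $v$ has colour in $A$,'' both checks go through exactly as you state (independence because an edge $\{u,v\}$ with $u\prec v$ would put $u\in\mathrm{back}(v)$ with $h(u)\in A$; covering because independence of $X$ forces $\mathrm{back}(x)\cap X=\emptyset$, so injectivity of $h$ on $B(X)$ keeps all back-neighbours out of $A=h(X)$). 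The size and time bounds $g(k)\log n$ and $f(k)\cdot n^{\mathcal{O}(1)}$ follow for any fixed $d$. This is essentially the construction of Lokshtanov et al.

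The genuine gap is the nowhere dense case, which you do not prove. You correctly diagnose why the degenerate argument breaks (the degeneracy of an $n$-vertex graph from a nowhere dense class is only $n^{o(1)}$, so the hash family blows up to $2^{\mathcal{O}(k\cdot n^{o(1)})}$ and the colour-subset count to $n^{o(k)}$), and you name plausible replacement tools (uniform quasi-wideness, weak colouring numbers, neighbourhood covers), but naming the tools is not the same as running the localisation argument that makes the family size $g(k)\cdot n^{\mathcal{O}(1)}$. As a standalone proof of the full theorem this half is missing; since the paper itself defers entirely to the citation, this is an honest and explicitly flagged deferral rather than a hidden error, but you should either carry out the nowhere dense construction or state clearly that this case is taken from the reference without proof.
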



We are now ready to prove the following theorem:

\begin{theorem}\label{thm-is-fpt-covering}
The \textsc{Independent Set Discovery} problem is fixed-parameter tractable when parameterized by $k$ for every class $\Cc$ of graphs that admits independence covering families of size $g(k)\cdot n^{\Oof(1)}$ computable in time $f(k)\cdot n^{\Oof(1)}$, where $f$ and $g$ are computable functions. In particular, the problem is fixed parameter tractable on $d$-degenerate and nowhere dense classes of graphs. 
\end{theorem}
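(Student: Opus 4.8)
The plan is to combine the independence covering family from \Cref{thm-family-degenerate-nowhere} with the minimum-weight-perfect-matching (MWPM) idea already used in the proof of \Cref{thm-vc-fpt-k}. The key observation is that a target independent set reachable from $S$ has size exactly $k$ (token sliding preserves the number of tokens), so it suffices to enumerate, over all members $J$ of an independence covering family $\mathcal{F}(G,k)$, every size-$k$ independent subset $I \subseteq J$ and then test whether $I$ is reachable from $S$ within budget $\budget$.

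First I would invoke \Cref{thm-family-degenerate-nowhere} (more generally, the hypothesis of the theorem) to compute, in time $f(k)\cdot n^{\Oof(1)}$, an independence covering family $\mathcal{F}(G,k)$ of size $g(k)\cdot n^{\Oof(1)}$. By \Cref{def-family}, any independent set $I$ of size at most $k$ is contained in some $J\in\mathcal{F}(G,k)$. Since every feasible target configuration for \textsc{ISD} is an independent set of size exactly $k$, it is enough to search for a reachable target inside the union of the sets in $\mathcal{F}(G,k)$. The subtle point is that a given $J$ may have size larger than $k$, so I cannot simply take $I=J$; I must select which $k$ vertices of $J$ form the target. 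I would handle this not by enumerating all $\binom{|J|}{k}$ subsets (which need not be \textsf{FPT}) but by folding the choice into the matching step: for each $J$, build a complete weighted bipartite graph with $S$ on one side and $J$ on the other, where the weight of an edge $\{u,v\}$, for $u\in S$ and $v\in J$, is the length of a shortest $u$–$v$ path in $G$ (set to $m+\budget+1$ if $u,v$ lie in different components). A \emph{minimum-weight} assignment of the $k$ tokens of $S$ to $k$ distinct vertices of $J$ automatically picks the best target subset; this is precisely a minimum-weight perfect matching saturating $S$ in this bipartite graph, solvable in $\Oof(n^3)$ time. If for some $J$ the optimum is at most $\budget$, we answer yes.

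The correctness argument mirrors \Cref{thm-vc-fpt-k}: if some independent set $I$ of size $k$ is reachable from $S$ using at most $\budget$ slides, then $I\subseteq J$ for some $J\in\mathcal{F}(G,k)$, and the total length of the shortest paths realizing any assignment of tokens to $I$ is a lower bound on the number of slides, so the matching optimum for this $J$ is at most $\budget$; conversely, as noted in the discussion after \Cref{thm-vc-fpt-k}, whenever the matching weight is at most $\budget$ the tokens can actually be slid into place (resolving collisions by having two tokens on a common path swap their destinations), and the chosen target is an independent set because it is a subset of the independent set $J$. The running time is $g(k)\cdot n^{\Oof(1)}$ matching computations, each in $\Oof(n^3)$, plus the $f(k)\cdot n^{\Oof(1)}$ preprocessing, which is $\textsf{FPT}$ in $k$.

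The main obstacle to watch for is the gap between \emph{a matching of weight at most $\budget$ exists} and \emph{the tokens can physically be slid without conflict}. In the vertex-cover setting this was benign, but for independent sets one must confirm that the collision-resolution trick (swapping destinations of two tokens sharing a path) never forces an intermediate configuration that is disallowed, and, crucially, that it does not change the \emph{final} target set — only the identity of which token occupies which target vertex. Since sliding along edges allows arbitrary (possibly non-independent) intermediate configurations and only the final set must be independent, this rerouting argument goes through exactly as before, and the number of slides equals the matching weight. I would therefore present this as the one place meriting an explicit sentence, and otherwise defer to the reasoning already established for \Cref{thm-vc-fpt-k}.
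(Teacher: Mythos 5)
Your proposal is correct and follows essentially the same route as the paper: compute the independence covering family, and for each member $J$ build the complete weighted bipartite graph on $(S,J)$ with shortest-path weights and test whether a minimum-weight matching saturating $S$ has weight at most $\budget$, using the destination-swapping argument to realize the matching as a slide sequence. Your explicit remark that the matching step implicitly selects the best size-$k$ subset of a possibly larger $J$ is a fair clarification of a point the paper leaves slightly implicit, but it is not a different argument.
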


\begin{proof}
Given an instance $(G,S,\budget)$ of \textsc{ISD} where $G\in \Cc$, we start by computing an independence covering family $\mathcal{F}(G, k)$ of size $g(k)\cdot n^{\Oof(1)}$ in time $f(k)\cdot n^{\Oof(1)}$, which is possible by assumption (or by \Cref{thm-family-degenerate-nowhere} for $d$-degenerate and nowhere dense classes of graphs). 
Let $\mathcal{F}(G, k) = \{J_1, J_2, \ldots\}$ denote the resulting family. 
Let $J \in \mathcal{F}(G, k)$. We construct a complete weighted bipartite graph~$H_{S,J}$ as follows. 
Let $S$ be the vertices $S$ on one side and $J$ be the vertices on the other side. We set the weight of each edge $\{u,v\}$ to be the number of edges along a shortest path from~$u$ to~$v$ in~$G$ (we set the weight to $m + \budget + 1$ whenever $u$ and $v$ belong to different components). 

It remains to show that $(G,S,\budget)$ is a yes-instance if and only if there exists a $J \in \mathcal{F}(G, k)$ such that $|J| \geq k$ and the minimum weight perfect matching in $H_{S,J}$ has weight at most $\budget$. 
Assuming the previous claim, the algorithm then follows by simply iterating over each $J$ of size at least~$k$, constructing the graph $H_{S,J}$, and then computing a minimum weight perfect matching in $H_{S,J}$. 
If we find a matching of weight at most $\budget$ then we have a yes-instance; otherwise we have a no-instance. 

Assume that $(G,S,\budget)$ is a yes-instance. Then, there exists an independent set $I$ of size $k$ that can be reached from~$S$ by at most $\budget$ token slides. 
By the definition of independence covering families, there exists $J \in \mathcal{F}(G, k)$ such that $I \subseteq J$. 
Moreover, since $I$ is reachable from~$S$ in at most $\budget$ slides, it must be the case that the weight of a perfect matching in 
$H_{S,I}$ is at most $\budget$. 
Hence, the minimum weight perfect matching in~$H_{S,J}$ has weight at most $\budget$, as needed. 

Now assume that there exists a $J \in \mathcal{F}(G, k)$ of size at least~$k$ such that the minimum weight perfect matching in~$H_{S,J}$ has weight at most $\budget$. 
Recall that, by the definition of independence covering families, $J$ is an independent set in~$G$. 
Hence, any subset $I$ of $J$ of size $k$ is an independent set of size $k$ in $G$. 
Let $I$ denote the set of vertices that are matched to some vertex in $S$ in the minimum weight perfect matching. 
As we just described, $I$ is an independent set of size $k$ in~$G$. 
Hence, it remains to show that we can reach $I$ from $S$ using at most $\budget$ slides. 
If the path a token~$t_1$ would
take to reach its destination has another token $t_2$ on it, we can have the two tokens switch destinations and continue moving $t_2$. 
One can check that the number of moves does not exceed the weight of the perfect matching. 

It is well-known that \textsc{Minimum Weight Perfect Matching} can be solved in $\mathcal{O}(n^3)$ time using either the blossom algorithm~\cite{edmonds1972theoretical} or the Hungarian algorithm~\cite{kuhn1955hungarian}.
\end{proof}

Using the same dynamic programming techniques as those used for the  \textsc{Vertex Cover Discovery} problem on graphs of bounded treewidth (\Cref{thm-vc-fpt-by-tw+k}) and reduction to first-order model checking (\Cref{thm-vc-nowhere}) on structurally nowhere dense classes, we can show the following results: 

\begin{theorem}\label{thm-is-fpt-by-tw+k}
The \textsc{Independent Set Discovery} problem can be solved in time \mbox{$2^{\mathcal{O}(t \log k)} \cdot n^{\mathcal{O}(1)}$}, where $t$ denotes the treewidth of the input graph. 
\end{theorem}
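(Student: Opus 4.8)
The plan is to reuse \emph{verbatim} the token-flow dynamic program over a nice tree decomposition developed for \Cref{thm-vc-fpt-by-tw+k}, changing only the feasibility predicate that the target configuration must satisfy. Concretely, I would first compute a nice tree decomposition $(T, (X_i)_{i \in V(T)})$ of width $\Oof(t)$ in time $2^{\Oof(t)} \cdot n^2$, and for each node $i$ maintain a table $\soli$ indexed by the same four-tuple states $s = (q, \ell, \valpha, \vbeta)$, with $0 \le q \le k$, $0 \le \ell \le \budget$, $\valpha : X_i \to \{0,1\}$, and $\vbeta : X_i \to [-k,k]$. The sets $\gamma^+ = \{u \in X_i \mid \vbeta(u) > 0\}$ and $\gamma^- = \{u \in X_i \mid \vbeta(u) < 0\}$ and their meaning (extra tokens arriving from outside $G_i$, resp.\ tokens that must be pushed to the separator to leave $G_i$ later) are unchanged. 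The only modification is in the semantics of $\solis = 1$: there must exist a set $I \subseteq V(G_i)$ of size $q$ that is an \emph{independent set} of $G_i$ with $I \cap X_i = \valpha^{-1}(1)$, such that $I \cup \gamma^-$ is reachable from $(S \cap V(G_i)) \cup \gamma^+$ using at most $\ell$ slides with the flow prescribed by $\vbeta$.

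The crucial observation is that the entire token-movement bookkeeping -- the split functions $f$ at introduce and join nodes, the cost $den(f)$, the division of the budget $\ell$ and of the solution size $q$ among children, and the local-validity condition ($\vbeta(u) = 0$ whenever $N[u] \subseteq V(G_i)$) -- depends only on how agents traverse the graph and \emph{not} on which feasibility predicate the target set satisfies. Hence the leaf, forget, and join recurrences, together with their running-time bounds, carry over unchanged. The single place that differs is the introduce node: where the VCD algorithm rejects a state because an edge incident to the newly introduced vertex $v$ is left uncovered (i.e.\ $\valpha(v) = 0$ and $N(v) \cap X_i \not\subseteq \valpha^{-1}(1)$), the ISD algorithm instead rejects a state that violates independence, namely one with $\valpha(v) = 1$ and $\valpha(u) = 1$ for some $u \in N(v) \cap X_i$; in all surviving states the flow recurrence $\solis = \bigvee_{f \in F} \solj[s_f]$ is applied exactly as before. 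Because $N[v] \cap V(G_i) \subseteq X_i$ at an introduce node, checking independence only against the neighbours of $v$ present in $X_i$ suffices to certify independence of the whole set $I$ in $G_i$, since every edge of $G$ is realised in some bag and is thus tested when its later endpoint is introduced.

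The complexity analysis is then identical to that of \Cref{thm-vc-fpt-by-tw+k}: each node stores $2^{\Oof(t \log k)}$ states (the factor $(2k+1)^{\Oof(t)}$ coming from $\vbeta$), each entry is computed in $2^{\Oof(t \log k)} \cdot n^{\Oof(1)}$ time, and summing over the $\Oof(n)$ nodes yields the claimed $2^{\Oof(t \log k)} \cdot n^{\Oof(1)}$ bound. The instance $(G, S, \budget)$ is a yes-instance of \textsc{ISD} if and only if $\sol_r[(k, \ell, \emptyset \to \{0,1\}, \emptyset \to [-k,k])] = 1$ for some $\ell \le \budget$ at the root $r$ (note that sliding preserves the token count, so the target independent set has size exactly $k$). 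I expect the only genuine point to verify -- rather than a real obstacle -- is the correctness argument that independence is captured purely locally at introduce nodes and that this local check is fully decoupled from the token-flow machinery; once this decoupling is spelled out, soundness and completeness of the recurrences follow by the same inductive argument as for vertex cover.
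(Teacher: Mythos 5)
Your proposal is correct and follows exactly the route the paper intends: the paper proves this theorem by invoking the same token-flow dynamic program over a nice tree decomposition as for \textsc{Vertex Cover Discovery}, and your write-up correctly identifies that the only change is swapping the coverage check at introduce nodes for an independence check (rejecting states with $\valpha(v)=1$ and $\valpha(u)=1$ for some $u\in N(v)\cap X_i$), with all token-movement bookkeeping and the complexity analysis carrying over unchanged. No gaps.
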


\begin{theorem}\label{thm-is-nowhere-fpt}
The \textsc{Independent Set Discovery} problem is fixed-parameter tractable when parameterized by the budget~$\budget$ and restricted to structurally nowhere dense classes of graphs. 
\end{theorem}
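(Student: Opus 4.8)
The plan is to follow the exact template of \Cref{thm-vc-nowhere}: reduce \textsc{ISD} to the first-order model checking problem, which is fixed-parameter tractable on structurally nowhere dense classes by the meta-theorem of~\cite{dreier2023first}, and exhibit a first-order sentence $\phi$ with $|\phi| = f(\budget)$ whose truth in $G$ (with $S$ accessible) is equivalent to $(G,S,\budget)$ being a yes-instance. As before, I would mark the starting configuration $S$ by a unary predicate $S(\cdot)$ so that the formula can refer to which vertices initially carry tokens.

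The key structural observation is the one already underlying \Cref{thm-is-fpt-covering}: the minimum number of slides needed to transform $S$ into a target independent set $F$ equals the weight of a minimum-weight perfect matching between $S$ and $F$ in the shortest-path metric, and in an optimal matching every vertex of $S \cap F$ is matched to itself at zero cost. Hence only the tokens on $S \setminus F$ actually move, each travels a path of length at least one, and their destinations are exactly the vertices of $F \setminus S$, which lie outside $S$. Since the total path length is at most $\budget$, at most $q \le \min\{k,\budget\}$ tokens move. I would therefore existentially quantify over $q$ distinct source vertices $u_1,\dots,u_q \in S$ and $q$ distinct destination vertices $v_1,\dots,v_q \notin S$, taking a disjunction over $q \in \{0,1,\dots,\budget\}$.

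To encode that these moves are realizable within the budget, I would, exactly as in \Cref{thm-vc-nowhere}, take a disjunction over all tuples $(\ell_1,\dots,\ell_q)$ with $\sum_i \ell_i \le \budget$ and, for each, verify by existential quantification that $G$ contains a path of length $\ell_i$ from $u_i$ to $v_i$; each such path uses at most $\budget$ further quantified vertices, and the number of admissible tuples is bounded by a function of $\budget$. It then remains to assert that the resulting set $F = (S \setminus \{u_1,\dots,u_q\}) \cup \{v_1,\dots,v_q\}$ is independent; note that $|F| = k$ holds automatically since the $u_i$ are distinct members of $S$ and the $v_i$ are distinct non-members. Independence splits into three bounded checks: that no two of the $v_i$ are adjacent (a constant-size conjunction), that no $v_j$ is adjacent to a surviving token of $S$, and that no two surviving tokens of $S$ are adjacent. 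The latter two are expressed by universal quantification over the predicate $S$, e.g. $\forall a\, \bigl(S(a) \wedge \bigwedge_{i}(a \ne u_i) \to \neg E(a, v_j)\bigr)$ and $\forall a \forall b\, \bigl(S(a) \wedge S(b) \wedge a \ne b \wedge \bigwedge_i (a \ne u_i) \wedge \bigwedge_i (b \ne u_i) \to \neg E(a,b)\bigr)$, each of size bounded by a function of $\budget$.

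The main point to get right — and the only genuine departure from the vertex cover argument — is this independence test: because $k$ (and hence $|S|$) can be as large as $n$, one cannot enumerate the surviving tokens explicitly, so the non-adjacency of the unboundedly large set $S \setminus \{u_1,\dots,u_q\}$ must be captured by a fixed number of universal quantifiers ranging over the marked predicate $S$. Once this is in place, every component of $\phi$ has size bounded by a function of $\budget$ alone, giving $|\phi| = f(\budget)$, and feeding $\phi$ to the model-checking algorithm of~\cite{dreier2023first} yields the claimed fixed-parameter tractable algorithm.
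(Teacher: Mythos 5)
Your proposal is correct and follows essentially the same route as the paper, which proves this theorem by the reduction to first-order model checking of \Cref{thm-vc-nowhere}: mark $S$ with a unary predicate, existentially quantify the $q\leq\budget$ moved tokens and their destinations together with paths of total length at most $\budget$, and verify feasibility of the resulting configuration in first-order logic. Your explicit treatment of the independence test via universal quantification over the predicate $S$ (rather than the edge-covering check used for vertex cover) is exactly the adjustment the paper's argument requires.
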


\section{Dominating set discovery}

In the \textsf{NP}-complete \textsc{Dominating Set (DS)} problem~\cite{GareyJ79}, we are given a graph $G$ and an integer~$k$ and the problem is to decide whether~$G$ admits a dominating set~$D$ of size at most $k$, i.e., a set~$D$ of at most $k$ vertices such that every vertex in $V(G)$ is either in~$D$ or has a neighbor in $D$. 

\subsection{Related work}
The \textsc{Dominating Set} problem can be approximated up to a factor of $\log n$ by a simple greedy algorithm~\cite{johnson1973approximation,lovasz1975ratio} and this factor asymptotically cannot be improved unless $\textsf{P} =  \textsf{NP}$~\cite{dinur2014analytical}. 
Local search by exchanging with $\lambda$-close solutions for $\lambda\in \Oof(1/poly(\epsilon))$ leads to a $(1+\epsilon)$-approximation on classes with subexponential expansion~\cite{har2017approximation}. 
The problem is \textsf{W[2]}-complete~\cite{downey1995fixed}. The dynamic variant of the problem was studied e.g.\ in~\cite{abu2015parameterized}. 

For the \textsc{Dominating Set Reconfiguration} problem, it is known that the problem under the token jumping model is \textsf{PSPACE}-complete on split graphs, bipartite graphs, graphs of bounded 
bandwidth, and planar graphs of maximum degree six~\cite{DBLP:journals/tcs/HaddadanIMNOST16}. On the positive side, the problem is linear-time solvable in  trees, interval graphs, and cographs. Bonamy et al.~\cite{DBLP:journals/dam/BonamyDO21} show that under the token sliding model \textsc{Dominating Set Reconfiguration} is \textsf{PSPACE}-complete on split, bipartite, and bounded treewidth graphs, and polynomial-time solvable on dually chordal graphs and cographs. Bousquet and Joffard~\cite{DBLP:conf/fct/BousquetJ21} show that the problem is polynomial-time solvable on circular-arc graphs and \textsf{PSPACE}-complete on circle graphs.

\subsection{Notation and definitions}
In the \textsc{Dominating Set Discovery (DSD)} problem, we are given a graph $G$, a starting configuration $S$ consisting of $k$ tokens, and a positive integer budget $\budget$. The goal is to decide whether we can reach a dominating set of $G$ of size $k$ (starting from $S$) using at most $\budget$ token slides. We denote an instance of \textsc{Dominating Set Discovery} by $(G,S,\budget)$ and make the parameter explicit in the text.

\subsection{Intractability}

We first show that the problem is \textsf{NP}-complete even on the class of planar graphs of maximum degree five. 

\begin{theorem}
\label{thm:DS_NPcomplete}
The \textsc{Dominating Set Discovery} problem is \textsf{NP}-complete on the class of planar graphs of maximum degree five. 
\end{theorem}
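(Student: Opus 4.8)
The plan is to establish \textsf{NP}-hardness by a reduction from \textsc{Dominating Set} restricted to planar graphs of maximum degree three, which is \textsf{NP}-complete; membership in \textsf{NP} is already guaranteed by the earlier observation that $\budget\le n^2$ in any solution discovery instance. This parallels the reductions behind \Cref{thm:VCD_NPcomplete} and \Cref{thm:ISD_NPcomplete}, where a local gadget is attached to each vertex of the source graph so that the starting configuration keeps the gadgets feasible while the remaining freedom encodes the original problem.

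Concretely, given an instance $(G,\kappa)$ I would build $H$ by taking a copy of $G$ and attaching to every vertex $v\in V(G)$ a small gadget carrying two tokens: one \emph{stationary} token placed at distance two from $v$, so that it dominates every gadget vertex but \emph{not} $v$ itself, and one \emph{movable} token whose only budget-efficient useful move is to slide onto $v$. The gadget must be engineered so that (i) after the movable token leaves, the stationary token still dominates the whole gadget, and (ii) a token placed on $v$ dominates $v$ together with all neighbors of $v$ in $G$, exactly as a dominating-set vertex would. With two tokens per gadget the token count is $k=2|V(G)|$, and I would set $\budget=c\cdot\kappa$, where $c$ is the distance the movable token travels to reach $v$. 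Since the gadget attaches to $v$ through at most two edges, $H$ stays planar and has maximum degree at most $5$; verifying these two structural facts is routine.

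For the forward direction, from a dominating set $D$ of size at most $\kappa$ I would slide, for each $v\in D$, the movable token of $v$'s gadget onto $v$; this costs $c$ slides per vertex, stays within $\budget$, dominates all of $V(G)$ because $D$ is dominating, and keeps every gadget dominated by its stationary token. For the reverse direction I would argue that within budget $\budget=c\kappa$ at most $\kappa$ movable tokens can be routed onto original vertices; since the stationary tokens never dominate any vertex of $G$ and remain pinned inside their gadgets, the only vertices of $H$ that can dominate $V(G)$ are the original vertices carrying tokens, so these must form a dominating set of $G$ of size at most $\kappa$.

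The main obstacle is the gadget design together with the accompanying budget accounting. In \textsc{Vertex Cover Discovery} a token merely has to cover the incident edges, but for domination \emph{any} token adjacent to $v$ already dominates $v$, so the real difficulty is ruling out a ``cheat'' in which a gadget dominates its own attachment vertex cheaply and thereby lets one avoid building a genuine dominating set of $G$. The gadget must therefore be shaped so that dominating $v$ is impossible unless the stationary token abandons part of the gadget, which is precisely what forces the movable token all the way onto $v$; getting this interaction right, and choosing $\budget$ so that the honest route is strictly optimal, is the technical heart of the proof, whereas planarity and the degree bound follow by inspection.
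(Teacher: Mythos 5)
Your high-level plan coincides with the paper's: reduce from \textsc{Dominating Set} on planar graphs of maximum degree three, attach a two-token gadget to each vertex $v$, set $\budget = c\kappa$ with $c$ the travel distance, and argue both directions exactly as you outline. However, you have left the gadget itself unconstructed, and the gadget is the entire technical content of the proof --- you say as much yourself (``getting this interaction right \dots is the technical heart''). Worse, the partial specification you do give points in the wrong direction: if the stationary token alone dominates \emph{every} gadget vertex, then the movable token is completely unconstrained, and nothing stops it from parking on an attachment vertex adjacent to $v$ at cost $1$, dominating $v$ without ever entering $V(G)$ --- precisely the cheat you identify but do not rule out. With that cheat available, the reverse direction's key claim (``the only vertices of $H$ that can dominate $V(G)$ are the original vertices carrying tokens'') fails, since $v$ is adjacent to its attachment vertices in $H$.

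The paper's gadget resolves this by making the \emph{movable} token, not the stationary one, responsible for part of the gadget: attach a path $w_v, x_v, y_v, z_v$ with $v$ adjacent to $w_v$, plus an extra vertex $u_v$ adjacent to both $v$ and $x_v$, and place tokens on $x_v$ and $y_v$. The two private vertices $u_v$ and $w_v$ satisfy $N[u_v]\cap N[w_v]=\{v,x_v\}$, and the token on $y_v$ (which must stay in $\{y_v,z_v\}$ to dominate $z_v$) cannot reach them; a counting argument (each gadget needs two tokens, and there are exactly $2|V(G)|$ tokens) pins both tokens to their gadget. Hence the movable token must end on $x_v$ or on $v$, the latter costing exactly $2$ slides, which gives $\budget=2\kappa$ and at most $\kappa$ tokens entering $V(G)$. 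Until you exhibit a gadget with an analogous ``two private neighbors whose closed neighborhoods intersect only in $\{x_v, v\}$'' property and supply the token-counting argument that pins tokens to gadgets, the reduction is not complete.
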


\begin{proof}
We give a reduction from \textsc{DS} on planar graphs of maximum degree three, which is known to be \textsf{NP}-complete~\cite{GareyJ79}. Given an instance $(G,\kappa)$ of \textsc{DS}, where $G$ is a planar graph of maximum degree three, we construct an instance of \textsc{DSD} as follows. We create a new graph $H$ which consists of a copy of $G$. Then, for each vertex $v \in V(H)$, we create a new path consisting of vertices $\{w_v, x_v, y_v, z_v\}$ and we connected $v$ to $w_v$. Then, we create an additional new vertex $u_v$ and we connect $u_v$ to both $v$ and $x_v$. 
We choose $S = \{x_v, y_v, \mid v \in V(G)\}$ and we set the budget $\budget = 2\kappa$. Note that $k = 2|V(G)|$. This completes the construction of the instance $(H,S,\budget)$ of \textsc{DSD}. It follows from the construction that the graph $H$ is planar and of maximum degree five.

Assume that $G$ has a dominating set $D$ of size at most $\kappa$. Then, in $H$ we can slide every token on $x_v$ to $v$, where $v \in D$. 
Since~$D$ is of size at most $\kappa$, we need no more than $2\kappa$ slides. To see that the resulting set is a dominating set of $H$, note that every pair of vertices $w_v$ and $u_v$ is  dominated by either $x_v$ or~$v$. Moreover, every pair of vertices~$y_v$ and~$z_v$ is dominated by~$y_v$. The vertex $x_v$ is either dominated by $x_v$ or $y_v$ (depending on whether $v \in D$ or not). Since~$D$ is a dominating set of $G$, all vertices $v$ are also dominated, as needed. 

For the reverse direction, assume that $(H,S,\budget)$ is a yes-instance of \textsc{DSD}. 
Note that moving a token on $x_v$ to either $w_v$ or $u_v$ leaves a none dominated vertex. Moreover, to dominate all vertices $\{u_v, w_v, x_v, y_v, z_v\}$ we need at least two tokens which implies that in the resulting configuration, for each $v$, we either have a token on $v$ or a token on $x_v$; the second token is either on $y_v$ (if there is a token on $v$ and no token on $x_v$) or possibly $e_v$ (if the token on $x_v$ does not slide to $v$). Since $\budget = 2\kappa$ and the distance from $x_v$ to $v$ is two, we can have at most $\kappa$ tokens slide to vertices corresponding to vertices of $G$. Such vertices must form a dominating set of $G$, as needed. 
\end{proof}

We next show that the problem is also hard from a parameterized perspective when parameterized by $k+\budget$. 
\begin{theorem}
The \textsc{Dominating Set Discovery} problem is \textsf{W[2]}-hard when parameterized by $k + \budget$ even on the class of bipartite graphs.
\end{theorem}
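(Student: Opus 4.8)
The plan is to give a parameterized reduction from \textsc{Dominating Set} (equivalently, from \textsc{Set Cover}), which is \textsf{W[2]}-complete parameterized by solution size~\cite{downey1995fixed}, producing a bipartite instance of \textsc{DSD} whose parameter $k+\budget$ is linear in the source parameter~$\kappa$. Let $(G,\kappa)$ be a \textsc{Dominating Set} instance with $V(G)=\{v_1,\dots,v_n\}$, and view it through its closed-neighbourhood incidence: we must ``select'' $\kappa$ vertices whose closed neighbourhoods cover $V(G)$. I would build a bipartite graph $H$ with a side $B=\{s_1,\dots,s_n\}$ of \emph{selection vertices} (where placing a token on $s_j$ encodes choosing $v_j$) and a side $A$ containing an \emph{element vertex} $a_i$ for each $v_i\in V(G)$, with $a_i\sim s_j$ iff $v_i\in N_G[v_j]$. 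Into $A$ I would also put $\kappa$ \emph{hub} vertices $h_1,\dots,h_\kappa$ and a single \emph{guard} vertex $c$, each made adjacent to \emph{all} of $s_1,\dots,s_n$. The starting configuration is $S=\{c,h_1,\dots,h_\kappa\}$, so $k=\kappa+1$, and I set $\budget=\kappa$; all edges run between $A$ and $B$, so $H$ is bipartite, the construction is polynomial, and $k+\budget=2\kappa+1$ is bounded in~$\kappa$.

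For the forward direction, given a size-$\kappa$ dominating set $\{v_{j_1},\dots,v_{j_\kappa}\}$ of $G$, I would slide each hub token $h_\ell$ to $s_{j_\ell}$ (one slide each, so exactly $\budget$ slides in total), reaching $D=\{c\}\cup\{s_{j_1},\dots,s_{j_\kappa}\}$. One then checks this is a dominating set of $H$: every element vertex $a_i$ is adjacent to some chosen $s_{j_\ell}$ since the selected closed neighbourhoods cover $V(G)$; every selection vertex $s_j$ is dominated by the token on $c$; the guard $c$ is in $D$; and each hub $h_\ell$ is dominated by any chosen $s_{j_\ell}$ (hubs are adjacent to all selection vertices). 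The crucial role of $c$ is to dominate the \emph{unchosen} selection vertices, which would otherwise be left uncovered once the hub tokens vacate $A$.

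The reverse direction is where the real care is needed, and I expect it to be the main obstacle: a yes-instance of \textsc{DSD} might in principle park tokens on element vertices (encoding illegitimate ``singleton'' selections) or keep some hub tokens in place, so I must argue that any reachable dominating set still yields a size-$\le\kappa$ cover. The key observation is that in the start configuration \emph{no} element vertex is dominated, since $c$ and the hubs are adjacent only to selection vertices; hence every element $a_i$ must, in the target $D$, be dominated by a token sitting on itself or on an adjacent selection vertex. Since the only start positions are $c$ and the hubs (neither selection nor element vertices), a token landing on a selection vertex costs at least one slide and a token landing on an element vertex costs at least two. Writing $p$ and $r$ for the numbers of target tokens on selection and element vertices, the slide budget gives $p+2r\le\budget=\kappa$. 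The selection vertices carrying tokens cover every element except the at most $r$ elements that carry their own token; replacing each such private token by any closed neighbourhood containing that vertex yields a dominating set of $G$ of size at most $p+r\le p+2r\le\kappa$. I would finish by noting the budget/parameter bookkeeping confirms this is a genuine \textsf{FPT}-reduction, establishing \textsf{W[2]}-hardness on bipartite graphs.
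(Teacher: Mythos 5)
Your construction is essentially the paper's: both build the bipartite closed-neighbourhood incidence graph between a ``selection'' copy and an ``element'' copy of $V(G)$, park the movable tokens on hub vertices adjacent to the whole selection side, add a guard vertex to dominate the selection side in the target, and set $\budget=\kappa$. Your argument is correct. The one genuine (small) divergence is in how the reverse direction is secured. The paper pins its guard vertex $u$ by attaching $\kappa+1$ pendant neighbours, which forces a token to remain on $u$ and lets it argue ``WLOG the token stays on $u$ and no tokens end on the element side,'' normalising the solution before reading off a dominating set. You omit the pendant gadget entirely and instead do a direct distance/budget count: since all start positions lie on the element side of the bipartition, a token ending on a selection vertex costs at least one slide and one ending on an element vertex costs at least two, so $p+2r\le\kappa$, and the extracted dominating set has size $p+r\le\kappa$. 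This is a clean simplification --- it shows the pendant gadget is not needed, because the correctness of the extraction never depends on where the guard or hub tokens end up, only on how the element vertices get dominated. Both routes give the same parameter bound and the same conclusion; yours is marginally leaner, the paper's is marginally more robust to variations of the problem in which distances inside one side of the bipartition could be short.
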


\begin{proof}
We present a parameterized reduction from the \textsc{DS} problem, which is known to be \textsf{W[2]}-hard on general graphs. Given an instance $(G,\kappa)$ of \textsc{DS}, we construct an instance $(H,S,\budget)$ of \textsc{DSD} as follows. First, $H$ contains two copies of the vertex set of~$G$. We denote these two sets by~$L$ and~$R$. We connect each vertex $v \in L$ to $N[v] = \{v' \in R \mid \{v,v'\} \in E(H)\} \cup \{v\}$ in~$R$, that is, we connect $v$ to each vertex in its closed neighborhood in $R$. Then, we set $k=\kappa +1$ and add $k$ new vertices $v_1$, $\ldots$, $v_k$, such that each $v_i$ is adjacent to all vertices in $L$. We further add one more vertex $u$ such that~$u$ has $\kappa + 1$ pendent neighbors $\{w_1, \ldots, w_{\kappa + 1}\}$ attached to it. We make $u$ adjacent to all vertices in $L$. We choose $S = \{u, v_1, v_2, \ldots, v_k\}$ (hence $k = \kappa + 1$) and we set $\budget = \kappa$. This completes the construction. The graph $H$ is indeed bipartite with bipartition $(L \cup \{w_i \mid i \in [\kappa + 1]\}, R \cup \{u\} \cup \{v_i \mid i \in \kappa\})$.

Assume that $G$ has a dominating set of size $\kappa$. Let $D$ denote such a set. Then we can solve the discovery instance by sliding each token on $v_i$ to some copy of a vertex of $D$ in $L$. 
Those vertices will dominate $R \cup \{v_i \mid i \in \kappa\}$ while vertex $u$ dominates $L \cup \{w_i \mid i \in [\kappa + 1]\}$. For the reverse direction, we can assume without loss of generality that, in the resulting configuration, we have one token on $u$ and no tokens on $R$. 
This follows from the fact that not having a token on $u$ requires moving $\kappa + 1 > b$ tokens to its neighbors $\{w_1, \ldots, w_{\kappa + 1}\}$, which is impossible. Moreover, every token in $R$ can only dominate itself since $L$ is already dominated by $u$. 
Hence, every token in $R$ can instead be placed using less moves on the copy of the same vertex in $L$ and potentially dominate more vertices. Putting it all together, we can assume that the resulting configuration contains a token on $u$ and at most $\kappa$ tokens in $L$ that must dominate all vertices of~$R$, as needed.
\end{proof}

For the next result, we use the standard reduction from \textsc{Vertex Cover} to \textsc{Dominating Set} to reduce \textsc{Vertex Cover Discovery} on $2$-degenerate bipartite graphs to \textsc{Dominating Set Discovery} on $2$-degenerate graphs. 

\begin{theorem}
The \textsc{Dominating Set Discovery} problem is \textsf{W[1]}-hard when parameterized by the budget $\budget$ even on the class of $2$-degenerate graphs.
\end{theorem}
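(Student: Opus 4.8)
The plan is to adapt the \textsf{W[1]}-hardness reduction from \Cref{thm:VCD_WHARD} (which proves \textsc{VCD} is \textsf{W[1]}-hard parameterized by $\budget$ on $2$-degenerate bipartite graphs) by composing it with the classical gadget reduction from \textsc{Vertex Cover} to \textsc{Dominating Set}. Concretely, I would start from the graph $H$ produced in \Cref{thm:VCD_WHARD} together with its starting configuration $S$ and budget $\budget$, and transform it into a \textsc{DSD} instance $(H', S', \budget')$. The standard \textsc{VC}-to-\textsc{DS} reduction works edge by edge: for each edge $\{x,y\}$ of the base graph one introduces a new vertex $e_{xy}$ adjacent to both $x$ and $y$, so that a vertex set is a vertex cover precisely when it dominates all these new edge-vertices; one must also attach pendant gadgets to each original vertex to ensure the original vertices are themselves dominated and that isolated vertices cannot be ``missed'' by the domination requirement.

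The main steps, in order, would be: first, construct $H'$ from $H$ by adding an edge-vertex $e_{uv}$ for every edge $\{u,v\} \in E(H)$, joined to both endpoints, plus the pendant attachments that force each original vertex to be dominated. Second, define $S'$ to coincide with $S$ (the tokens stay on the same vertices of the $H$-part), and set $\budget' = \budget$, keeping the budget bounded by a function of $\kappa$ so the reduction is a valid parameterized reduction for parameter $\budget$. Third, argue degeneracy: since $H$ is already $2$-degenerate and the added edge-vertices have degree exactly two while the pendants have degree one, I would check that every induced subgraph of $H'$ still contains a vertex of degree at most two, so $H'$ remains $2$-degenerate. Fourth, prove the equivalence: a configuration reachable in $\budget'$ slides dominates all the edge-vertices of $H'$ iff it covers every edge of $H$, and the pendant gadgets force the tokens to remain on the ``real'' vertices rather than wander into the domination gadgets, so that a discovered dominating set of $H'$ corresponds exactly to a discovered vertex cover of $H$ within the same budget.

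The hard part will be ensuring that the token-sliding dynamics survive the gadget transformation without giving the solver a cheaper route. In the plain \textsc{VC}-to-\textsc{DS} reduction the equivalence is static, but here we must guarantee that sliding a token into a newly added edge-vertex or pendant never helps dominate the graph more cheaply than covering the corresponding edge in $H$; this requires the pendant gadgets to be large or numerous enough that no token can afford to leave its intended trajectory, while still keeping $\budget'$ bounded in $\kappa$. I would attach enough pendants to $u$-type or gadget vertices so that abandoning them forces more than $\budget'$ slides, mirroring the $\kappa+1 > \budget$ trick used in the bipartite \textsf{W[2]}-hardness proof above. A secondary subtlety is the exact accounting of slide lengths: because domination gadgets lengthen some shortest paths, I would verify that the forward direction still realizes a dominating set within budget $\budget' = \budget$, adjusting $\budget'$ by an additive constant multiple of $\kappa$ if the gadget distances demand it, which preserves the \textsf{W[1]}-hardness for parameter $\budget$.
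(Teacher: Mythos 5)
Your core idea is exactly the paper's: compose the \textsf{W[1]}-hardness reduction of \Cref{thm:VCD_WHARD} with the classical edge-vertex reduction from \textsc{Vertex Cover} to \textsc{Dominating Set}, adding a new vertex $e_{uv}$ adjacent to both endpoints of every edge, keeping $S$ and $\budget$ unchanged, and observing that the new degree-two vertices preserve $2$-degeneracy. The paper's proof is essentially just this, plus two observations that dispose of the ``dynamics'' worries you raise: the added vertices do not change any distance between original vertices (so no budget adjustment is needed and the forward direction goes through with $\budget' = \budget$), and any token that ends on a subdivision vertex $e_{uv}$ can without loss of generality be redirected to $u$ or $v$ at no greater sliding cost while dominating a superset of vertices, so tokens never profitably ``wander'' into the gadget.

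The one genuine problem in your plan is the pendant gadgets. They are not part of the standard reduction and they are not needed here: since the graph $H$ from \Cref{thm:VCD_WHARD} has no isolated vertices, any vertex cover of $H$ automatically dominates every original vertex (each vertex either lies in the cover or sees the other endpoint of one of its incident edges in the cover), so domination of the original vertices comes for free once the edge-vertices are dominated. Worse, attaching a pendant $p_v$ to \emph{every} original vertex $v$ would actively break the reduction: $p_v$ is dominated only if $v$ or $p_v$ carries a token, which forces a token onto (or next to) every original vertex --- including, e.g., the degree-one vertices $r_i$ of $V_1$, which the intended solution dominates only via $z_i$ and never places a token on. The budget $\budget = 2\binom{\kappa}{2}+\kappa$ cannot pay for this, so the forward direction of your equivalence would fail. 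Drop the pendants entirely and replace that ingredient with the two observations above, and your proof coincides with the paper's.
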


\begin{proof}
Given an instance $(G,S,\budget)$ of \textsc{VCD}, we create an instance $(H,S,\budget)$ of \textsc{DSD} where $H$ is obtained from $G$ by adding a new vertex $e_{uv}$ for each edge $\{u,v\} \in E(G)$ and connecting $e_{uv}$ to both $u$ and $v$. Since we do not increase the distances between vertices of $G$ and we can assume, without loss of generality, that a dominating set of~$H$ will not contain subdivision vertices, i.e., vertices of the form $e_{uv}$, the proof follows. Note that since we start with a $2$-degenerate bipartite graph $G$ and we simply add vertices of degree two (forming triangles) it follows that $H$ is also $2$-degenerate (but not bipartite).
\end{proof}

\subsection{Tractability}

Our positive results for \textsc{Dominating Set Discovery} parameterized by $k$ make use of the notion of $k$-domination cores, a notion that to the best of our knowledge goes back to Dawar and Kreutzer~\cite{DawarK09}. 

\begin{definition}
Let $G$ be a graph and $k\geq 1$. A set $C\subseteq V(G)$ is a \emph{$k$-domination core} if every set of size at most $k$ that dominates $C$ also dominates $G$. 
\end{definition}

Bounded size domination cores do not exist for general graphs, however, they do exist for many important graph classes, see e.g.~\cite{kreutzer2018polynomial,telle2019fpt}, most generally for semi-ladder free graphs~\cite{FabianskiPST19}. 

\begin{theorem}[\cite{FabianskiPST19}]\label{thm-dom-core}
Let $\Cc$ be a class of graphs with bounded semi-ladder index. Then there exists a function $f$ such that for every graph $G\in \Cc$ and $k\in \mathbb{N}$ we can compute in polynomial time a $k$-domination core of $G$ of size $f(k)$.  
\end{theorem}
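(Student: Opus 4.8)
The plan is to construct the core by an iterative pruning process and to control its final size by an extremal argument that converts an oversized core into a long semi-ladder. Recall that $C$ is a $k$-domination core precisely when every transversal of size at most $k$ of the closed-neighbourhood hypergraph $\{N[v] : v \in C\}$ is already a transversal of $\{N[v] : v \in V(G)\}$; equivalently, dominating $C$ by at most $k$ vertices forces the domination of all of $G$. I would begin from the trivial core $C = V(G)$ and repeatedly delete a vertex whose removal preserves the core property, halting once no single vertex is removable. Call such a $C$ \emph{irreducible}. The argument then splits into a combinatorial part bounding the size of any irreducible core by a function $f(k)$, and an algorithmic part showing that the pruning, including the detection of removable vertices, runs in polynomial time.

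For the combinatorial bound, assume $C$ is irreducible, so that for every $v \in C$ the set $C \setminus \{v\}$ is not a core. Since $C$ itself is a core, this can only happen because there is a set $D_v$ with $|D_v| \le k$ that dominates $C \setminus \{v\}$ but not $v$; fix one such \emph{witness} $D_v$ for each $v$. Fix also a linear order on $C$; for a pair $\{v,w\}$ with $v < w$, the witness $D_v$ dominates $w$, so some element of $D_v$ lies in $N[w]$, and I colour $\{v,w\}$ by the position, within a fixed enumeration of $D_v$, of the first such element, using $k$ colours. Ramsey's theorem for $k$-coloured pairs then yields a large monochromatic subsequence $v_1 < \dots < v_t$ and a single position $p$ such that, writing $d_i$ for the $p$-th element of $D_{v_i}$, the vertex $d_i$ lies in $N[v_j]$ for all $i < j$. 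Because no element of $D_{v_i}$ dominates $v_i$, we also have $d_i \notin N[v_i]$. Setting $a_i = d_i$ and $b_i = v_i$ produces exactly a semi-ladder of length $t$, with $a_i \notin N[b_i]$ and $a_i \in N[b_j]$ for $i < j$. Bounding $t$ by the semi-ladder index $\ell_0$ of the class and inverting the Ramsey bound gives $|C| < R_k(\ell_0 + 1) =: f(k)$, a function of $k$ and the class parameter.

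On the algorithmic side, the naive removability test — does some set of size at most $k$ dominate $C \setminus \{v\}$ while avoiding $N[v]$? — is itself a domination-type question and cannot be solved directly. Instead, the plan is to certify removals through an efficiently checkable \emph{local} redundancy condition: using the bounded number of distinct closed-neighbourhood traces $N[v] \cap A$ (the neighbourhood complexity enforced by the semi-ladder bound), one identifies a vertex of $C$ whose relevant domination requirement is subsumed by those of other core vertices, which can be detected in polynomial time by scanning vertices and their neighbourhood interactions. One must then prove, via the same semi-ladder extraction, that whenever $|C| > f(k)$ such a certificate necessarily exists. Since each accepted removal strictly shrinks $C$, at most $n$ rounds occur, giving an overall polynomial running time and a final core of size at most $f(k)$.

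I expect the main obstacle to be exactly this algorithmic transfer: the semi-ladder argument is purely existential and only guarantees that \emph{some} vertex is removable, whereas the algorithm needs a polynomially detectable certificate of removability so that pruning can proceed without solving an intractable domination subproblem at each step. The combinatorial size bound, by contrast, reduces cleanly to the Ramsey-plus-semi-ladder extraction described above.
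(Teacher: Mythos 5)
First, note that the paper does not prove this statement at all: it is imported verbatim from the cited reference \cite{FabianskiPST19} and used as a black box in the subsequent FPT algorithm for \textsc{Dominating Set Discovery}. So there is no in-paper proof to compare against; your proposal has to stand on its own.

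On its own terms, your combinatorial half is correct and clean. The observation that an irreducible core $C$ yields, for each $v\in C$, a witness $D_v$ with $|D_v|\le k$ dominating $C\setminus\{v\}$ but with $D_v\cap N[v]=\emptyset$ is right (and correctly uses that $C$ itself is a core), and the $k$-colouring of pairs by the position of the first dominator in $D_v$, followed by Ramsey, does extract a semi-ladder: the monochromatic witnesses $d_i$ satisfy $d_i\notin N[v_i]$ and $d_i\in N[v_j]$ for $i<j$, which after reindexing matches the semi-ladder pattern, so $|C|<R_k(\ell_0+1)$. This proves \emph{existence} of a core of size $f(k)$. The genuine gap is exactly the one you flag and then leave open: the algorithm. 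Deciding whether $C\setminus\{v\}$ is still a core is itself a quantification over all $\le k$-element sets, giving only an $n^{O(k)}$ test, which does not yield the uniform polynomial running time the theorem asserts (and which the paper needs for its FPT application). Your sketch of a ``local redundancy condition'' via neighbourhood traces is a placeholder, not an argument; nothing in the proposal establishes that a polynomially checkable removability certificate exists whenever $|C|>f(k)$. The standard way to close this (used in the domination-core literature) is to certify that $z$ is removable by exhibiting $k+1$ vertices $v_1,\dots,v_{k+1}\in C\setminus\{z\}$ whose sets $N[v_i]\setminus N[z]$ are pairwise disjoint: any $\le k$-element set dominating all the $v_i$ must then, by pigeonhole, intersect $N[z]$. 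One then shows, again by a semi-ladder or Ramsey-type extraction, that such a configuration is guaranteed to exist (and is greedily findable in polynomial time) once $|C|$ exceeds a suitable bound depending on $k$ and the semi-ladder index. Without supplying this certificate and its existence proof, your argument establishes only the existential half of the theorem.
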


\begin{theorem}
The \textsc{Dominating Set Discovery} problem is fixed-parameter tractable when parameterized by $k$ and restricted to semi-ladder-free graphs. 
\end{theorem}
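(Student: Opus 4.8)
The plan is to combine the bounded-size $k$-domination core guaranteed by \Cref{thm-dom-core} with the minimum-weight matching idea already used for \Cref{thm-vc-fpt-k}. Given an instance $(G,S,\budget)$ of \textsc{Dominating Set Discovery} with $G$ semi-ladder-free, I would first compute in polynomial time a $k$-domination core $C \subseteq V(G)$ of size $f(k)$. By the defining property of $C$, any set $D$ with $|D|\leq k$ dominates $G$ as soon as it dominates $C$, so the task reduces to finding a set of $k$ distinct vertices that dominates the \emph{bounded} set $C$ and is reachable from $S$ using at most $\budget$ slides.

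To exploit that $C$ is small, I would classify every vertex $v$ by its \emph{projection} $P(v) = N[v]\cap C \subseteq C$. There are at most $2^{|C|}=2^{f(k)}$ distinct projections, and a set $D$ dominates $C$ if and only if $\bigcup_{v\in D}P(v)=C$; hence whether $D$ dominates $G$ depends only on the multiset of projections of its elements. I would therefore iterate over all functions $\sigma\colon [k]\to\mathcal{T}$, where $\mathcal{T}=\{P(v)\mid v\in V(G)\}$ is the family of realized projections, keeping exactly those with $\bigcup_{i\in[k]}\sigma(i)=C$. Each such $\sigma$ prescribes, for the $k$ target vertices, how many must have each projection type. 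Since $|\mathcal{T}|$ depends only on $k$, the number of such functions is at most $2^{f(k)\cdot k}$, a function of $k$ alone.

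For a fixed $\sigma$ it remains to decide whether $S$ can reach, within budget $\budget$, a set of $k$ \emph{distinct} target vertices whose projection types match those prescribed by $\sigma$, while minimizing the number of slides. I would model this as a minimum-cost flow problem: after precomputing all pairwise shortest-path distances in $G$, introduce a source feeding one unit into each token of $S$, an arc of cost $\mathrm{dist}_G(s,v)$ and capacity one from each token $s$ to each vertex $v$, a unit-capacity arc from each $v$ to a node representing its projection type $P(v)$ (so that each vertex is used at most once), and an arc from each type $\tau$ to the sink whose capacity equals the number of slots of type $\tau$ in $\sigma$. A minimum-cost integral flow of value $k$ then yields the cheapest way to route the $k$ tokens to $k$ distinct targets realizing $\sigma$; the instance is a yes-instance precisely when some covering $\sigma$ admits such a flow of cost at most $\budget$. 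The overall running time is $g(k)\cdot n^{\Oof(1)}$.

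The routine parts are the flow construction and its analysis. The correctness of the two-way reduction is then immediate from the core property: any budget-respecting dominating set of size $k$ dominates $C$, so its induced type-function covers $C$ and is enumerated; conversely any target set realizing a covering $\sigma$ dominates $C$ and has size $k$, hence dominates $G$. I expect the main obstacle to be verifying that the minimum-cost flow really computes the minimum number of token \emph{slides} under the rule forbidding two tokens on one vertex, rather than merely an abstract assignment cost. This is exactly where I would invoke the destination-swapping argument already used in \Cref{thm-vc-fpt-k} and \Cref{thm-is-fpt-covering} (originally from~\cite{DBLP:journals/siamdm/CalinescuDP08}): whenever the shortest path of one token meets another token, the two swap destinations, so collisions along the routing paths are avoided without increasing the total number of moves, making the flow cost equal to the true slide count.
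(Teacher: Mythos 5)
Your proposal is correct and follows essentially the same route as the paper: compute a bounded-size $k$-domination core via \Cref{thm-dom-core}, reduce domination to covering the core through projection classes, enumerate the $f(k)$-bounded family of target "types," and then compute the cheapest token routing via shortest-path distances together with the destination-swapping argument of~\cite{DBLP:journals/siamdm/CalinescuDP08}. The only cosmetic difference is that you resolve the choice of concrete representatives and the distinctness of targets with a min-cost flow, whereas the paper restricts to minimal dominating sets with one representative per projection class and uses a minimum-weight perfect matching.
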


\begin{proof}
Let $(G,S,\budget)$ denote an instance of \textsc{DSD}. 
We first compute a domination core~$C$ of size~$f(k)$, which is possible due to \Cref{thm-dom-core}. 
We then compute the projection classes towards $C$. That is, we compute a family of sets/classes of vertices such that any two vertices $u,v \in V(G) \setminus C$ belong to the same set/class if and only if $N(u) \cap C = N(v) \cap C$. The number of projection classes is also bounded in terms of $k$ only; in fact a trivial upper bound is $2^{f(k)}$ but better bounds are known. 
By the definition of domination cores, we can assume, without loss of generality, that any minimal dominating set of $G$ of size at most $k$ contains at most one vertex from each projection class. 
Hence, we can now enumerate all minimal dominating sets of size at most $k$ by treating each projection class as a single vertex. Since both $|C|$ and the number of projection classes is bounded by a function of $k$, this brute-force enumeration can be accomplished in time bounded by some function of $k$. Let $D$ denote a dominating set consisting of vertices from $C$ as well as vertices representing projection classes. 
We now construct the following bipartite graph~$H$. On one side we have the vertices of $S$ and on the other side we have the vertices of $D$. We connect each $u \in S$ to a vertex $v \in D \cap C$ and assign as a weight the length of a shortest path from $u$ to $v$.  
We connect each $u \in S$ to a vertex $v \in D \setminus C$ and assign as a weight the length of a shortest path from $u$ to any vertex in the corresponding projection class; the weight being zero if~$u$ belongs to the projection class. The rest of the proof proceeds just as in the case of \textsc{ISD} or \textsc{VCD}, i.e., we look for perfect matching of weight at most $\budget$ whose existence implies that can we slide tokens along edges of shortest paths between pairs of matched vertices. 
\end{proof}

Again, using the same dynamic programming techniques as those used for the \textsc{Vertex Cover Discovery} problem on graphs of bounded treewidth (\Cref{thm-vc-fpt-by-tw+k}) and reduction to first-order model checking (\Cref{thm-vc-nowhere}) on structurally nowhere dense classes, we get the following results: 

\begin{theorem}\label{thm-ds-fpt-by-tw+k}
The \textsc{Dominating Set Discovery} problem can be solved in time \mbox{$2^{\mathcal{O}(t \log k)} \cdot n^{\mathcal{O}(1)}$}, where $t$ denotes the treewidth of the input graph. 
\end{theorem}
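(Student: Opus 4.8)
The plan is to reuse almost verbatim the dynamic programming scheme developed for \textsc{VCD} in the proof of \Cref{thm-vc-fpt-by-tw+k}. I would again compute a nice tree decomposition $(T,(X_i)_{i\in V(T)})$ of width $\Oof(t)$, process it bottom-up, and keep for each node $i$ a table $\soli$ indexed by states $s=(q,\ell,\valpha,\vbeta)$. The integer $q\le k$ records the size of the partial solution inside $G_i$, the integer $\ell\le \budget$ records the number of slides spent in $G_i$, and the function $\vbeta\colon X_i\to[-k,k]$ carries exactly the same token-flow bookkeeping as before: it records the net number of tokens crossing the separator $X_i$, with $\gamma^+$ and $\gamma^-$ being the vertices that receive tokens from outside $G_i$ and that must gather tokens destined to leave $G_i$. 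This part of the state is completely oblivious to the concrete feasibility predicate, so it can be carried over unchanged. The only genuine modification is that the condition ``$C$ is a vertex cover of $G_i$'' must be replaced by ``$D$ dominates $G_i$ up to deferred obligations on the separator'', the subtlety being that a vertex cover is a per-edge local constraint whereas domination is a per-vertex constraint that may be satisfied non-locally.

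To capture domination I would enrich the Boolean indicator $\valpha$ into a three-valued labelling $\valpha\colon X_i\to\{\mathsf{in},\mathsf{dom},\mathsf{und}\}$, the standard three colours of the dominating-set treewidth dynamic program: $\mathsf{in}$ marks a bag vertex placed in the solution $D$, $\mathsf{dom}$ marks a bag vertex not in $D$ already dominated by $D\cap V(G_i)$, and $\mathsf{und}$ marks a bag vertex not in $D$ whose domination is deferred to a neighbour introduced higher up in $T$. The entry $\solis$ is then $1$ if and only if there is a set $D\subseteq V(G_i)$ of size $q$ with $D\cap X_i=\valpha^{-1}(\mathsf{in})$ that dominates every vertex of $V(G_i)\setminus X_i$ and every $\mathsf{dom}$-vertex of $X_i$, and such that $D\cup\gamma^-$ is reachable from $(S\cap V(G_i))\cup\gamma^+$ with at most $\ell$ slides under the $\vbeta$-semantics. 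The transitions follow the textbook dominating-set rules layered on top of the existing slide accounting: at an introduce node I would guess the colour of the new vertex $v$, reject a $\mathsf{dom}$-guess unless $v$ has an $\mathsf{in}$-neighbour in the bag and reject an $\mathsf{und}$-guess if $v$ does have one, and, crucially, when $v$ is coloured $\mathsf{in}$ allow it to upgrade its $\mathsf{und}$-neighbours in the child to $\mathsf{dom}$; at a forget node I would forbid forgetting an $\mathsf{und}$-vertex, since its domination obligation can no longer be met, exactly as the \textsc{VCD} proof forbids $\vbeta(v)\neq 0$ for an internal vertex.

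The main obstacle, and the only place where the two ingredients genuinely interact, is the join node. There I must simultaneously (i) split the partial-solution size $q$ between the children while not double-counting the $\valpha^{-1}(\mathsf{in})$ bag vertices, exactly as in the \textsc{VCD} join; (ii) split the slide budget $\ell$ into $\ell'$ and $\ell-\ell'$; (iii) split the token flow through the separator via the same auxiliary function $f\colon X_i\to[-k,k]$ with $\vbeta_j^f=\vbeta+f$ and $\vbeta_h^f=\vbeta-f$; and (iv) combine the domination status, where a $\mathsf{dom}$-vertex of the parent may be dominated from either side, so its child colours range over $\{(\mathsf{dom},\mathsf{und}),(\mathsf{und},\mathsf{dom}),(\mathsf{dom},\mathsf{dom})\}$ while $\mathsf{in}$- and $\mathsf{und}$-vertices keep their colour in both children. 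Verifying that these four independent splits compose into a correct recurrence is the delicate bookkeeping step; once it is done, the state count per node is $3^{\Oof(t)}\cdot(2k+1)^{\Oof(t)}\cdot(k+1)\cdot(\budget+1)=2^{\Oof(t\log k)}\cdot n^{\Oof(1)}$, each join entry is computed by iterating over $q'$, $\ell'$, $f$, and the $\mathsf{dom}$-splits in time $2^{\Oof(t\log k)}\cdot n^{\Oof(1)}$, and summing over the $\Oof(n)$ nodes yields the claimed $2^{\Oof(t\log k)}\cdot n^{\Oof(1)}$ running time, with the answer read off at the root exactly as for \textsc{VCD}.
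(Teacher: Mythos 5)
Your proposal is correct and is exactly the adaptation the paper intends: the paper proves this theorem only by pointing back to the \textsc{VCD} dynamic program of \Cref{thm-vc-fpt-by-tw+k}, and your version keeps that slide-accounting machinery ($q$, $\ell$, $\vbeta$, the locally-valid-state filtering, and the join-node flow split) unchanged while swapping the per-edge cover indicator for the standard three-colour ($\mathsf{in}$/$\mathsf{dom}$/$\mathsf{und}$) domination labelling with the usual introduce/forget/join rules. The state count remains $2^{\Oof(t\log k)}\cdot n^{\Oof(1)}$, so the claimed running time follows as you argue.
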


\begin{theorem}
The \textsc{Dominating Set Discovery} problem is fixed-parameter tractable when parameterized by $\budget$ and restricted to structurally nowhere dense classes of graphs. 
\end{theorem}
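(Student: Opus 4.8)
The plan is to mimic the proof of \Cref{thm-vc-nowhere} and reduce \textsc{DSD} to the first-order model checking problem, which by~\cite{dreier2023first} is fixed-parameter tractable on structurally nowhere dense classes of graphs. It therefore suffices to build, for a given instance $(G,S,\budget)$, a first-order sentence $\phi$ with $|\phi| = f(\budget)$ for some computable function $f$, such that $G$ (suitably enriched) satisfies $\phi$ if and only if $(G,S,\budget)$ is a yes-instance. As before, I would mark the starting configuration $S$ by a fresh unary predicate, so that membership in $S$ becomes accessible to the formula.

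The key observation is the same as in the vertex cover case: a token that ends up on a vertex different from its start must use at least one slide, so in any solution the number of tokens that actually move is bounded by $q \le \min\{k,\budget\} \le \budget$. Hence the formula existentially quantifies over $q$ distinct source vertices $u_1,\dots,u_q \in S$ and $q$ distinct target vertices $v_1,\dots,v_q \notin S$, and then certifies reachability exactly as in \Cref{thm-vc-nowhere}: for every tuple $(\ell_1,\dots,\ell_q)$ with $\sum_{i}\ell_i \le \budget$ we verify by existential quantification that there is a path of length $\ell_i$ from $u_i$ to $v_i$, taking a big disjunction over all such tuples. Since each $\ell_i \le \budget$ and $q\le \budget$, this part of the formula has size bounded purely in terms of $\budget$.

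The only genuine difference from \Cref{thm-vc-nowhere} lies in the final feasibility check: instead of verifying that each edge is covered, we must verify that the set $D = (S \setminus \{u_1,\dots,u_q\}) \cup \{v_1,\dots,v_q\}$ is a dominating set of $G$. Writing the membership predicate $w \in D$ as $\big(S(w) \wedge \bigwedge_{i\in[q]} w \neq u_i\big) \vee \bigvee_{j\in[q]} w = v_j$, which is a first-order formula of size $\mathcal{O}(q)$, domination is then expressed by the single universally quantified subformula $\forall w\,\big( w \in D \;\vee\; \exists w'\,(E(w,w') \wedge w' \in D)\big)$. This adds only bounded quantifier rank and a formula whose length again depends only on $\budget$, so the overall sentence has size $f(\budget)$, as required.

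Correctness of the equivalence mirrors \Cref{thm-vc-nowhere}: from a solution one reads off the (at most $\budget$) moved tokens as the $u_i$, their destinations as the $v_i$, and the slide trajectory lengths as the $\ell_i$, which satisfies $\phi$; conversely, a satisfying assignment yields walks realizing the required slides, serialized into at most $\budget$ token slides via the standard ``swap destinations'' argument recalled in \Cref{thm-vc-fpt-k}. I expect no substantive obstacle here: the entire difficulty is the bookkeeping of ensuring the sentence length is controlled by $\budget$ alone, which the bound $q\le\budget$ guarantees, and I would omit the explicit formula as it offers no further insight.
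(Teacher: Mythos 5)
Your proposal is correct and is essentially the paper's own proof: the paper explicitly obtains this theorem by "the same reduction to first-order model checking" as in \Cref{thm-vc-nowhere}, i.e., marking $S$ with a unary predicate, existentially quantifying the at most $q\leq\min\{k,\budget\}$ moved tokens and their targets, certifying the slide paths by a disjunction over length tuples summing to at most $\budget$, and replacing the edge-coverage check by the first-order domination check exactly as you describe. No gaps.
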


\section{Coloring discovery}\label{sec-coloring}
Until now the feasible solutions for all considered base problems are subsets of vertices satisfying some problem specific properties. In fact, our framework is much richer and can be applied to a broader class of problems. We exemplify the richness of our solution discovery framework by turning to the very fundamental \textsc{Coloring} problem, where possible solutions are no longer subsets of vertices.

A \emph{$k$-coloring} of a graph $G$ is a mapping $\phi\colon V(G) \rightarrow [k]$. A $k$-coloring is said to be \emph{proper} if whenever $\{u, v\} \in E(G)$ then $\phi(u) \neq \phi(v)$. In the \textsf{NP}-complete \textsc{Coloring} problem~\cite{GareyJ79}, we are given a graph $G$ and an integer $k$ and the goal is to decide whether~$G$ admits a proper $k$-coloring. 

\subsection{Related work}

The reconfiguration variant of the \textsc{Coloring} problem is a central problem in the combinatorial reconfiguration literature~\cite{van2013complexity,nishimura2018introduction}. 
There are many possible definitions of adjacency relations between feasible (and infeasible) colorings. The reconfiguration steps we consider (cf.\ token models) are defined next. 
%
In the \emph{color flipping model}, in each step, we can change the color of any vertex to any color in the color set $[k]=\{1, \dots, k\}$.
In the \emph{color swapping model}, a $(u,v)$-color-swap allows for the swap of colors between two  arbitrary vertices $u \in V(G)$ and $v \in V(G)$, where $u \neq v$. 
Finally, in the \emph{color sliding model}, a $(u,v)$-color-slide is a $(u,v)$-color swap along an edge of the graph. In other words, in color sliding we can only perform a $(u,v)$-color-swap if $\{u, v\} \in E(G)$.

The color flipping model is one of the most studied models in the reconfiguration variant of the \textsc{Coloring} problem. Rather surprisingly, in this model the \textsc{Coloring Reconfiguration} problem is solvable in polynomial time for $k\leq 3$~\cite{cereceda2011finding} and \textsf{PSPACE}-complete for $k\geq 4$~\cite{bonsma2009finding}. The color sliding model was studied in~\cite{bonnet2018complexity}. 
To the best of our knowledge, the color swapping model has not been considered in the reconfiguration framework. 

The discovery variant of the \textsc{Coloring} problem has already been studied (under different names) in the color flipping model (\textsc{$k$-Fix}~\cite{garnero2018fixing}) and the color swapping model (\textsc{$k$-Swap}~\cite{de2019complexity}). Garnero et al.~\cite{garnero2018fixing} show that, for color flipping, the discovery variant is \textsf{NP}-complete, even for bipartite planar graphs. 
Moreover, they show that the problem is \textsf{W[1]}-hard when parameterized by~$\budget$, even for bipartite graphs, whereas it is fixed-parameter tractable when parameterized by $k + \budget$. Interestingly, the latter is not true in the color swapping model, where the problem is \textsf{W[1]}-hard for any fixed $k\geq 3$ when parameterized by $\budget$~\cite{de2019complexity}. 
It is also shown that both problems \textsc{$k$-Fix} and \textsc{$k$-Swap} are \textsf{W[1]}-hard when parameterized by the treewidth of the input graph.
Finally, it is shown that the \textsc{$k$-Fix} problem, i.e., discovery in the color flipping model,  is polynomial-time solvable for $k\leq 2$ colors but this question was left open for the color swapping and color sliding models~\cite{garnero2018fixing}. Related to our work is also the notion of chromatic villainy~\mbox{\cite{clark2006chromatic}}. 
Given the above, we mainly consider (unless stated otherwise) \textsc{Coloring Discovery} in the color sliding model.

\subsection{Notation and definitions}

In the \textsc{Coloring Discovery (CD)} problem, we are given a (non-proper) $k$-coloring $\phi\colon V\rightarrow [k]$ of a graph $G=(V,E)$ and a budget $\budget \in \mathbb{N}$. The task is to decide whether there is a transformation of the given coloring into a proper $k$-coloring by using at most $\budget$ recoloring steps. 
We denote instances of \textsc{Coloring Discovery} by $(G,\phi, \budget)$ and make the parameter explicit in the text. We remark that we could add an additional input parameter $q$ if we want to reach a coloring with (at most) $q$ colors. This however is not relevant in the color sliding model that we focus on; color sliding preserves the number of colors in the graph.


\subsection{Intractability}


Even proving membership in \textsf{NP} is non-trivial in the color sliding model. We show the crucial property that, to slide the correct color to a vertex, we need at most $2n$ color slides along edges. Hence, we can assume $\budget\leq 2n^2$ and a recoloring sequence is a polynomial-time checkable certificate. 

\begin{proposition}\label{prop-in-np}
\textsc{Coloring Discovery} under the color sliding model is in \textsf{NP}.
\end{proposition}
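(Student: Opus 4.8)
I need to show that \textsc{Coloring Discovery} under the color sliding model is in \NP. The non-obvious part, as the text flags, is that a priori the budget $\budget$ could be enormous relative to $n$, so I cannot immediately certify a yes-instance by exhibiting a recoloring sequence: such a sequence might be too long to write down in polynomial space. The crux is therefore to establish a polynomial bound on the number of slides that suffice whenever a solution exists, so that a witnessing sequence is a polynomial-size certificate whose validity can be checked in polynomial time.

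\medskip

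\noindent\textbf{The plan.} The plan is to argue that if some proper $k$-coloring $\psi$ is reachable from the initial coloring $\phi$ by color slides, then it is reachable using at most $2n$ slides \emph{per color difference}, and in fact that the whole transformation can be realized within $O(n^2)$ slides. The key structural observation is that a color slide along an edge $\{u,v\}$ swaps the colors of $u$ and $v$; hence every slide preserves the multiset of colors used on $V(G)$. Consequently, a target proper coloring $\psi$ is reachable from $\phi$ only if $\psi$ is a permutation of $\phi$ (same color-count vector), and when it is reachable, reaching it amounts to routing colors to their destinations along edges, where a slide acts exactly like a transposition of tokens (colors) sitting on adjacent vertices. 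I would model each color class as a set of indistinguishable tokens that must be moved from their current vertices to the vertices prescribed by $\psi$, with moves restricted to swaps along edges.

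\medskip

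\noindent\textbf{Key steps.} First I would show: for any two colorings $\phi,\psi$ that agree as multisets and such that $\psi$ is reachable from $\phi$, there is a reachable realization using few slides. Concretely, fix a bijection matching vertices whose colors must change, and observe that moving a single color to its correct vertex along a shortest path costs at most $n-1$ slides, but doing so may disturb colors already placed. The standard fix—also used earlier in this paper for token sliding (see the argument in the proof of \Cref{thm-vc-fpt-k})—is that when the path a token would traverse is blocked by another token, the two tokens swap roles, so a shortest-path routing for one token simultaneously makes progress for the displaced one. Iterating this, each vertex needs its color corrected at most once it is "finalized," and a careful accounting shows the total number of slides is bounded by $2n$ per vertex that requires correction, hence $O(n^2)$ overall. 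This yields the promised bound $\budget \le 2n^2$: if a yes-instance admits any solution, it admits one with at most $2n^2$ slides, so it suffices to search for short solutions.

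\medskip

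\noindent\textbf{Conclusion of membership.} Given the bound, a certificate for a yes-instance is a sequence of at most $2n^2$ slides, each specified by an edge of $G$; this has size polynomial in the input. A verifier checks in polynomial time that each listed pair is an edge of $G$, applies the swaps in order starting from $\phi$, and confirms that the resulting $k$-coloring is proper. Since both the certificate length and the verification are polynomially bounded, \textsc{Coloring Discovery} is in \NP.

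\medskip

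\noindent\textbf{Main obstacle.} The hard part will be the bookkeeping in the routing argument: because colors (unlike distinctly labeled tokens) are indistinguishable within a color class, I must be careful that the swap-when-blocked trick genuinely terminates and does not inflate the count. The cleanest way to handle this is to first reduce to the labeled setting by fixing an arbitrary target assignment of vertices within each color class, and then invoke the token-sliding routing fact; any labeled solution projects to a valid color-slide solution of no greater length. I expect the rest—preservation of the color multiset, polynomial certificate size, polynomial-time verification—to be routine.
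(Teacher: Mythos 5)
Your overall strategy matches the paper's: show that any reachable proper coloring is reachable within $2n^2$ slides, then use the slide sequence itself as a polynomial-size, polynomial-time-checkable certificate. The multiset-preservation observation and the verification step are fine. The gap is in the one step that actually carries the proof, namely the claim that a single vertex can be given its correct color using at most $2n$ slides with no net disturbance to the rest of the graph. You justify this by importing the ``swap destinations when the path is blocked'' trick from the matching argument of \Cref{thm-vc-fpt-k}, but that trick lives in a different model: there, tokens occupy a proper subset of the vertices and a slide moves a token onto an \emph{empty} vertex, so a blocked token can simply hand over its remaining route. In the color-sliding model every vertex carries a color and every slide is a swap, so routing a color from $v$ to $u$ along a path of length $\ell$ necessarily displaces \emph{all} the intermediate colors by one position; there is no unoccupied vertex to absorb the displacement, and the displaced colors need not belong to the color class you are routing, so ``swapping roles'' gives no control over where they end up. Your fallback --- reduce to labeled tokens and ``invoke the token-sliding routing fact'' --- still points at the wrong fact; what you actually need is a permutation-routing (token-swapping) bound, which you neither prove nor correctly cite.

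The missing ingredient is an explicit gadget, and it is short: to exchange the colors of $u$ and $v$ along a path $(u,w_1,\dots,w_\ell,v)$, first slide forward along every edge of the path (after which $v$ holds $u$'s original color and each intermediate vertex holds the original color of its successor), then slide backward along all edges except the last one (after which $u$ holds $v$'s original color and every intermediate vertex is restored). This costs at most $2n$ slides, exchanges exactly the two endpoint colors, and leaves everything else unchanged; since any reachable target coloring has the same color multiset as $\phi$ on each connected component, at most $n$ such exchanges suffice, giving the $2n^2$ bound. With that argument inserted in place of the appeal to \Cref{thm-vc-fpt-k}, your proof becomes the paper's proof of \Cref{prop-in-np}.
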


\begin{proof} 
We show that we can assume that $\budget$ is polynomially bounded in $n$ and use a recoloring sequence as a certificate. 
In fact, we need at most $2n$ color slides along edges in order to fix the color of a vertex. 
To see this, consider a path from~$u$ to~$v$, where we aim to color~$u$ using~$v$'s color. Let the path be $(u,w_1,w_2,\dots,w_\ell,v)$. 
We swap the colors of $u$ and $v$ by the following sequence of slides:
\begin{enumerate}
    \item Start by sliding $(u,w_1),(w_1,w_2),(w_2,w_3),\dots,(w_\ell,v)$ sequentially. After performing these slides, $v$ has $u$'s original color and all other vertices have the original color of their successor on the path. 
    \item Afterwards, perform slides $(w_{\ell-1},w_\ell),(w_{\ell-2},w_{\ell-1}), \dots, (w_1,w_2), (u,w_1)$. After these slides $u$ gets the original color of $v$, $v$ still has $u$'s original color, and all other vertices are recolored to their original color. 
\end{enumerate}
Thus, if a recoloring sequence exists, at most~$2n^2$ color sliding are sufficient.
\end{proof}

We show that the \textsc{Coloring Discovery} problem is \textsf{NP}-complete for $k \geq 3$ in all three models, even on planar bipartite graphs. For the color flipping and color swapping models this has been shown recently~\cite{garnero2018fixing,de2019complexity}. Our proof unifies the hardness reduction for all three models.
We show \textsf{NP}-hardness by a reduction from the \textsc{List \mbox{$k$-Coloring} (LikC)} problem, which asks for a given graph $G= (V, E)$ and a list $L(u) \subseteq [k]$ of colors assigned to each vertex $u \in V(G)$, whether there is a proper coloring $\phi$ of $G$ with $\phi(u) \in L(u)$ for every vertex $u$ of $G$. The \textsc{LikC} problem is known to be \textsf{NP}-complete for $k\geq 3$ even for planar bipartite $3$-regular graphs~\cite{CHLEBIK20061960}.

%
\begin{theorem}\label{thm-col-npcomp}
\textsc{Coloring Discovery} with $k \geq 3$ colors is \textsf{NP}-complete under all three models, even on the class of planar bipartite graphs.
\end{theorem}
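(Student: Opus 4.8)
The plan is to prove both membership in \textsf{NP} and \textsf{NP}-hardness, with a single construction that works simultaneously for the color flipping, color swapping, and color sliding models. Membership in \textsf{NP} for the sliding model is exactly \Cref{prop-in-np}. For flipping, the budget may be capped at $n$, since one can recolor each vertex at most once directly to its target color; for swapping, it may be capped at $n-1$, since any reachable coloring is obtained by a sequence of color transpositions realizing the permutation that sends the initial color positions to the target ones. In all three models a recoloring sequence is therefore a polynomial-size certificate, so the problem lies in \textsf{NP}.

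For hardness I would reduce from \textsc{List $k$-Coloring} restricted to planar bipartite $3$-regular graphs, which is \textsf{NP}-complete for every $k\geq 3$. Given an instance $(G,L)$, I build an instance $(H,\psi,\budget)$ of \textsc{Coloring Discovery} as follows. The graph $H$ contains a copy of $G$ and, for each vertex $u\in V(G)$, a collection of pendant gadgets encoding the list $L(u)$: for every forbidden color $c\in[k]\setminus L(u)$ a rigidly precolored \emph{enforcer} gadget presents a $c$-colored vertex adjacent to $u$ (forbidding $u$ from taking color $c$ in any proper coloring), while for every permitted color $c\in L(u)$ a small \emph{supply} gadget keeps a spare $c$-colored vertex available next to $u$. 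The initial coloring $\psi$ colors all gadgets properly and colors the $G$-copy so that it is \emph{not} proper, and $\budget$ is tuned to the exact number of recoloring steps needed to recolor \emph{only} the $G$-copy into a list-respecting proper coloring. Since every gadget is attached as a pendant tree at a single vertex of $G$, no new cycles are created, so $H$ is automatically planar and bipartite whenever $G$ is.

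Correctness then splits into the two standard directions. If $(G,L)$ admits a proper list coloring $\phi$, I recolor each $u\in V(G)$ to $\phi(u)\in L(u)$: in the flipping model this is one flip per vertex, and in the swapping and sliding models I realize the same net effect by swapping $u$ with (respectively sliding along the short path to) the adjacent supply vertex holding color $\phi(u)$, so that the step count matches $\budget$ in every model. Conversely, any transformation staying within $\budget$ cannot afford to disturb the enforcers, so in the final proper coloring each $u$ is still adjacent to all of its forbidden colors; hence the restriction to $V(G)$ is a proper coloring with $\phi(u)\in L(u)$ for every $u$, i.e.\ a valid list coloring.

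The main obstacle is the color sliding model. Because slides (and swaps) only permute colors and never create or destroy one, the color multiset is invariant on each connected component, so I must guarantee that every permitted color is \emph{physically present} next to $u$ to be slid in — this is precisely the role of the supply gadgets — and, more delicately, I must design the enforcer gadgets and the budget so that the tight bound $\budget$ genuinely forbids the adversary from using cheap slides to rearrange or neutralize an enforcer as a way to sneak a forbidden color onto $u$. Balancing $\budget$ so that it simultaneously (i) suffices for the intended list-coloring moves and (ii) is strictly too small for any enforcer-disturbing shortcut, \emph{in all three models at once}, is the crux of the argument; by contrast, planarity and bipartiteness come for free from the pendant-tree attachment.
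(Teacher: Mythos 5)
Your high-level plan coincides with the paper's proof: \NP-membership via polynomially bounded recoloring sequences, and \NP-hardness by reducing from \textsc{List $k$-Coloring} on planar bipartite graphs, attaching to each vertex $u$ pendant ``supply'' vertices carrying the permitted colors and pendant ``enforcer'' vertices carrying the forbidden ones, so that planarity and bipartiteness come for free. The gap sits exactly where you locate the crux, and it is a real one: a \emph{single} ``rigidly precolored'' $c$-colored pendant neighbour enforces nothing, because no vertex is rigid in any of the three models --- one flip, one swap, or one slide moves that enforcer out of the way, after which $u$ may legally adopt the forbidden color $c$. Your claim that a transformation within budget ``cannot afford to disturb the enforcers'' is therefore false for the construction as described, for any budget large enough to carry out the intended recoloring of the $G$-copy. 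The missing idea is brute multiplicity: for each $c\notin L(u)$ attach $\budget+1$ pendant neighbours of color $c$ to $u$. If $u$ ends with color $c$ in a proper coloring, all $\budget+1$ of them must have lost color $c$, and in each model a single step removes color $c$ from at most one of these pairwise non-adjacent pendants, so more than $\budget$ steps are required.

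With that fix the budget need not be ``exact'' as you propose (the intended cost depends on the unknown list coloring and on the model anyway); it suffices that $\budget$ uniformly upper-bounds the cost of the intended solution in all three models while each enforcer family has $\budget+1$ members. The paper takes $\budget=n(n+1)$: besides the supply vertices $u_1,\dots,u_{k-1}$ it attaches, for each allowed color $c\in L(u)$, $n$ further $c$-colored neighbours of $u$, each with a private pendant partner of a different color, so that realizing $\psi(u)=c$ costs one flip/swap/slide at $u$ plus $n$ local swaps to clear these conflicts, i.e.\ at most $n+1$ steps per vertex of $G$ in every model. The reverse direction is then exactly your argument, made quantitative by the $\budget+1$ multiplicity. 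Your remaining points --- that each permitted color must be physically adjacent to $u$ for the sliding model, and that pendant-tree attachment preserves planarity and bipartiteness --- are handled identically in the paper.
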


\begin{proof}
For $k\geq 3$, let $( G, L = (L_u \subseteq [k] \mid u \in V(G)))$ be an instance of the \textsc{LikC} problem, where the graph $G$ is a planar bipartite graph.
We construct an instance $(H, \phi, \budget)$ of \textsc{CD} with $\budget = n(n+1)$, where $H$ is a planar bipartite graph and $\phi$ is an initial (non-proper) $k$-coloring of its vertices. 
Let us define a function $f:[k] \to [k]$ such that $f(c) = c+1$ for $c<k$ and $f(k)=1$. We start with a graph $H$ which is a copy of $G$. 
Every vertex in $H$ is colored \textsf{k}.
For each vertex $u \in V(G)$, we add $k-1$ neighbors $u_1, \dots, u_{k-1}$ to the corresponding vertex $u$ in $H$. For each $i \in [k-1]$, the new vertex $u_i$ is colored with color \textsf{i}, respectively.
For each color $c \in L(u)$, we add $n$ neighbors to $u$ which we denote by $\{x_c^i \mid i \in [n]\}$ and which are assigned color $c$. Each vertex~$x_z^i$ will have a neighbor~$y_z^i$ which is assigned color $f(c)$. 
For each color $c \notin L(u)$, we add $\budget+1$ neighbors to $u$ denoted by $\{x_c^i \mid i \in [\budget+1]\}$ and assign them color $c$. This completes the construction of the graph $H$. 
The graph $H$ is a planar bipartite graph since $G$ is a planar bipartite graph. \Cref{fig:k3hardness} shows the corresponding gadgets for $k=3$ and the color set $\{\cgreen, \cblue, \cred\}$.

Now we prove the correctness of the reduction. 
First, we show that if $(G, L)$ is a yes-instance of the \textsc{LikC} problem then $(H, \phi, \budget)$ is a yes-instance of the \textsc{CD} problem. 
Let $\psi: V(G) \to [k]$ be a feasible solution for the instance $(G, L)$. 
We show how to find a proper coloring of $H$ using at most $\budget$ steps starting from the (infeasible) coloring $\phi$. 
For each vertex $u \in V(G)$, if $\psi(u) \not= \textsf{k}$ we swap the colors between $u$ and $u_{\psi(u)}$. 
This swap ensures that $H[V(G)]$ is properly colored. 
Also, for each vertex $u \in V(G)$ and $i \in [k-1]$ the vertices $u_i$ agree with the proper coloring locally, i.e., there is no monochromatic edge in the subgraph induced by these vertices. 
Further, for each $i \in [n]$, we swap the colors between $x_{\psi(u)}^i$ and $y_{\psi(u)}^i$. This produces a proper $k$-coloring with at most $1+n$ color swaps per vertex in $V(G)$.

We finish the proof by showing that a yes-instance of the \textsc{CD} problem corresponds to a yes-instance of the \textsc{LikC} problem. If it is possible to achieve a proper coloring in the constructed graph, we also obtain a proper $k$-coloring of the graph~$G$ by definition. We only need to show that we never use a color $c$ for some vertex~$u$ that is not in the list~$L(u)$. However, if $\phi(u)=c$, we must have recolored all vertices~$x_c^i$ for $i \in [\budget+1]$, which is a contradiction to the budget of $\budget$.

Note that given a yes-instance of \textsc{LikC} the budget of \mbox{$n(n+1)$} is sufficient in the color sliding, the color swapping, and the color flipping model. Conversely, we cannot recolor all vertices $x_c^i$ for $i \in [\budget+1]$ with a budget of $\budget$ in the color flipping and color swapping model.
\begin{figure}
    \centering
    \begin{tikzpicture}[scale=0.6, every node/.style={transform shape}]
    \makeatletter
    \tikzset{
        dot diameter/.store in=\dot@diameter,
        dot diameter=3pt,
        dot spacing/.store in=\dot@spacing,
        dot spacing=10pt,
        dots/.style={
            line width=\dot@diameter,
            line cap=round,
            dash pattern=on 0pt off \dot@spacing
        }
    }
    \makeatother

    \coordinate (x) at (-5,0);
    \coordinate (u) at ($(x) + (2,0)$);
    \coordinate (g) at ($(u) + (-1,0.5)$);
    \coordinate (b) at ($(u) + (-1,-0.5)$);
    \coordinate (r1) at ($(u) + (0,1.5)$);
    \coordinate (r2) at ($(u) + (1,1.5)$);
    \coordinate (g1) at ($(u) + (1.5,0.5)$);
    \coordinate (g2) at ($(u) + (1.5,-0.5)$);
    \coordinate (b1) at ($(u) + (0,-1.5)$);
    \coordinate (b2) at ($(u) + (1,-1.5)$);
    \coordinate (r1p) at ($(u) + (0,2.5)$);
    \coordinate (r2p) at ($(u) + (1,2.5)$);

    \draw[black, thick] (u) -- (g);
    \draw[black, thick] (u) -- (b);
    \draw[black, thick] (u) -- (r1);
    \draw[black, thick] (u) -- (r2);
    \draw[black, thick] (u) -- (g1);
    \draw[black, thick] (u) -- (g2);
    \draw[black, thick] (u) -- (b1);
    \draw[black, thick] (u) -- (b2);
    \draw[black, thick] (r1p) -- (r1);
    \draw[black, thick] (r2p) -- (r2);

    \draw[black, fill=red, thick] (u) circle (0.1cm);
    \draw[black, fill=green, thick] (g) circle (0.1cm) node[left] {$u_\cgreen$};
    \draw[black, fill=blue, thick] (b) circle (0.1cm) node[left] {$u_\cblue$};
    \draw[black, fill=red, thick] (r1) circle (0.1cm) node[left] {$x_\cred^1$};
    \draw[black, fill=red, thick] (r2) circle (0.1cm) node[right] {$x_\cred^n$};
    \draw[black, fill=green, thick] (g1) circle (0.1cm) node[above] {$x_\cgreen^1$};
    \draw[black, fill=green, thick] (g2) circle (0.1cm) node[below] {$~~~~~~x_\cgreen^{\budget+1}$};
    \draw[black, fill=blue, thick] (b1) circle (0.1cm) node[left] {$x_\cblue^1$};
    \draw[black, fill=blue, thick] (b2) circle (0.1cm) node[right] {$x_\cblue^{\budget+1}$};
    \draw[black, fill=green, thick] (r1p) circle (0.1cm) node[left] {$y_\cred^1$};
    \draw[black, fill=green, thick] (r2p) circle (0.1cm) node[right] {$y_\cred^n$};

    \draw [black, dot diameter=1pt, dot spacing=0.15cm, dots] ($(r1) + (.25, 0)$) -- ($(r2) + (-0.25, 0)$);
    \draw [black, dot diameter=1pt, dot spacing=0.15cm, dots] ($(g1) + (0, -.25)$) -- ($(g2) + (0, 0.25)$);
    \draw [black, dot diameter=1pt, dot spacing=0.15cm, dots] ($(b1) + (.25, 0)$) -- ($(b2) + (-0.25, 0)$);

    \node[black, thick] at ($(u) + (0.25, -3.5)$) {(a) ~ $L(u) = \{\cred\}$};
    
    \coordinate (x) at (0,0);
    \coordinate (u) at ($(x) + (2,0)$);
    \coordinate (g) at ($(u) + (-1,0.5)$);
    \coordinate (b) at ($(u) + (-1,-0.5)$);
    \coordinate (r1) at ($(u) + (0,1.5)$);
    \coordinate (r2) at ($(u) + (1,1.5)$);
    \coordinate (g1) at ($(u) + (1.5,0.5)$);
    \coordinate (g2) at ($(u) + (1.5,-0.5)$);
    \coordinate (b1) at ($(u) + (0,-1.5)$);
    \coordinate (b2) at ($(u) + (1,-1.5)$);
    \coordinate (r1p) at ($(u) + (0,2.5)$);
    \coordinate (r2p) at ($(u) + (1,2.5)$);
    \coordinate (b1p) at ($(u) + (0,-2.5)$);
    \coordinate (b2p) at ($(u) + (1,-2.5)$);

    \draw[black, thick] (u) -- (g);
    \draw[black, thick] (u) -- (b);
    \draw[black, thick] (u) -- (r1);
    \draw[black, thick] (u) -- (r2);
    \draw[black, thick] (u) -- (g1);
    \draw[black, thick] (u) -- (g2);
    \draw[black, thick] (u) -- (b1);
    \draw[black, thick] (u) -- (b2);
    \draw[black, thick] (r1p) -- (r1);
    \draw[black, thick] (r2p) -- (r2);
    \draw[black, thick] (b1p) -- (b1);
    \draw[black, thick] (b2p) -- (b2);

    \draw[black, fill=red, thick] (u) circle (0.1cm);
    \draw[black, fill=green, thick] (g) circle (0.1cm) node[left] {$u_\cgreen$};
    \draw[black, fill=blue, thick] (b) circle (0.1cm) node[left] {$u_\cblue$};
    \draw[black, fill=red, thick] (r1) circle (0.1cm) node[left] {$x_\cred^1$};
    \draw[black, fill=red, thick] (r2) circle (0.1cm) node[right] {$x_\cred^n$};
    \draw[black, fill=green, thick] (g1) circle (0.1cm) node[above] {$x_\cgreen^1$};
    \draw[black, fill=green, thick] (g2) circle (0.1cm) node[below] {$~~~~~~x_\cgreen^{\budget+1}$};
    \draw[black, fill=blue, thick] (b1) circle (0.1cm) node[left] {$x_\cblue^1$};
    \draw[black, fill=blue, thick] (b2) circle (0.1cm) node[right] {$x_\cblue^n$};
    \draw[black, fill=green, thick] (r1p) circle (0.1cm) node[left] {$y_\cred^1$};
    \draw[black, fill=green, thick] (r2p) circle (0.1cm) node[right] {$y_\cred^n$};
    \draw[black, fill=red, thick] (b1p) circle (0.1cm) node[left] {$y_\cblue^1$};
    \draw[black, fill=red, thick] (b2p) circle (0.1cm) node[right] {$y_\cblue^n$};

    \draw [black, dot diameter=1pt, dot spacing=0.15cm, dots] ($(r1) + (.25, 0)$) -- ($(r2) + (-0.25, 0)$);
    \draw [black, dot diameter=1pt, dot spacing=0.15cm, dots] ($(g1) + (0, -.25)$) -- ($(g2) + (0, 0.25)$);
    \draw [black, dot diameter=1pt, dot spacing=0.15cm, dots] ($(b1) + (.25, 0)$) -- ($(b2) + (-0.25, 0)$);

    \node[black, thick] at ($(u) + (0.25, -3.5)$) {(b) ~ $L(u) = \{\cred, \cblue\}$};

    \coordinate (x) at (5,0);
    \coordinate (u) at ($(x) + (2,0)$);
    \coordinate (g) at ($(u) + (-1,0.5)$);
    \coordinate (b) at ($(u) + (-1,-0.5)$);
    \coordinate (r1) at ($(u) + (0,1.5)$);
    \coordinate (r2) at ($(u) + (1,1.5)$);
    \coordinate (g1) at ($(u) + (1.5,0.5)$);
    \coordinate (g2) at ($(u) + (1.5,-0.5)$);
    \coordinate (b1) at ($(u) + (0,-1.5)$);
    \coordinate (b2) at ($(u) + (1,-1.5)$);
    \coordinate (r1p) at ($(u) + (0,2.5)$);
    \coordinate (r2p) at ($(u) + (1,2.5)$);
    \coordinate (g1p) at ($(u) + (2.5,0.5)$);
    \coordinate (g2p) at ($(u) + (2.5,-0.5)$);
    \coordinate (b1p) at ($(u) + (0,-2.5)$);
    \coordinate (b2p) at ($(u) + (1,-2.5)$);

    \draw[black, thick] (u) -- (g);
    \draw[black, thick] (u) -- (b);
    \draw[black, thick] (u) -- (r1);
    \draw[black, thick] (u) -- (r2);
    \draw[black, thick] (u) -- (g1);
    \draw[black, thick] (u) -- (g2);
    \draw[black, thick] (u) -- (b1);
    \draw[black, thick] (u) -- (b2);
    \draw[black, thick] (r1p) -- (r1);
    \draw[black, thick] (r2p) -- (r2);
    \draw[black, thick] (g1p) -- (g1);
    \draw[black, thick] (g2p) -- (g2);
    \draw[black, thick] (b1p) -- (b1);
    \draw[black, thick] (b2p) -- (b2);

    \draw[black, fill=red, thick] (u) circle (0.1cm);
    \draw[black, fill=green, thick] (g) circle (0.1cm) node[left] {$u_\cgreen$};
    \draw[black, fill=blue, thick] (b) circle (0.1cm) node[left] {$u_\cblue$};
    \draw[black, fill=red, thick] (r1) circle (0.1cm) node[left] {$x_\cred^1$};
    \draw[black, fill=red, thick] (r2) circle (0.1cm) node[right] {$x_\cred^n$};
    \draw[black, fill=green, thick] (g1) circle (0.1cm) node[above] {$x_\cgreen^1$};
    \draw[black, fill=green, thick] (g2) circle (0.1cm) node[below] {$~~x_\cgreen^{n}$};
    \draw[black, fill=blue, thick] (b1) circle (0.1cm) node[left] {$x_\cblue^1$};
    \draw[black, fill=blue, thick] (b2) circle (0.1cm) node[right] {$x_\cblue^n$};
    \draw[black, fill=green, thick] (r1p) circle (0.1cm) node[left] {$y_\cred^1$};
    \draw[black, fill=green, thick] (r2p) circle (0.1cm) node[right] {$y_\cred^n$};
    \draw[black, fill=red, thick] (b1p) circle (0.1cm) node[left] {$y_\cblue^1$};
    \draw[black, fill=red, thick] (b2p) circle (0.1cm) node[right] {$y_\cblue^n$};
    \draw[black, fill=red, thick] (g1p) circle (0.1cm) node[above] {$y_\cblue^1$};
    \draw[black, fill=red, thick] (g2p) circle (0.1cm) node[below] {$y_\cblue^n$};

    \draw [black, dot diameter=1pt, dot spacing=0.15cm, dots] ($(r1) + (.25, 0)$) -- ($(r2) + (-0.25, 0)$);
    \draw [black, dot diameter=1pt, dot spacing=0.15cm, dots] ($(g1) + (0, -.25)$) -- ($(g2) + (0, 0.25)$);
    \draw [black, dot diameter=1pt, dot spacing=0.15cm, dots] ($(b1) + (.25, 0)$) -- ($(b2) + (-0.25, 0)$);

    \node[black, thick] at ($(u) + (0.75, -3.5)$) {(c) ~ $L(u) = \{\cred, \cgreen, \cblue\}$};
    
    \end{tikzpicture}
    \caption{Gadgets for different list size of the vertices, and $k=3$}
    \label{fig:k3hardness}
\end{figure}
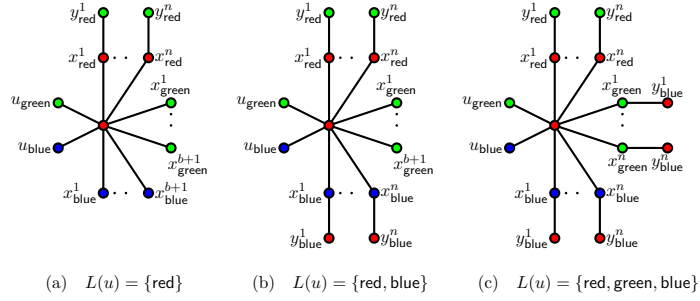
\end{proof}

%

We now consider the parameterized complexity of the \textsc{Coloring Discovery} problem with respect to the parameters $k$, $\budget$, and treewidth. The next result follows immediately from \Cref{thm-col-npcomp} and \Cref{prop-in-np}.


\begin{corollary}\label{cor-col-parahard}
\textsc{Coloring Discovery} parameterized by \mbox{$k \geq 3$} is \textsf{para-NP}-complete in all three models, even when restricted to planar bipartite graphs.
\end{corollary}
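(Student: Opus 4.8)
The plan is to derive the corollary directly from the definitional characterization of para-\textsf{NP}-completeness together with the two cited results, so that no new combinatorial construction is required. Recall that a parameterized problem lies in para-\textsf{NP} whenever it is solvable nondeterministically in time $f(\kappa)\cdot|x|^c$, and that it is para-\textsf{NP}-hard precisely when its restriction to some single fixed value of the parameter is already \textsf{NP}-hard (see, e.g.,~\cite{FlumGrohe2006}). Accordingly, I would split the argument into a membership part and a hardness part, both of which are immediate given the preceding results.

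For membership, I would observe that \textsc{Coloring Discovery} is in \textsf{NP} in each of the three models: for color sliding this is exactly \Cref{prop-in-np}, while for color flipping and color swapping membership in \textsf{NP} is part of the \textsf{NP}-completeness asserted in \Cref{thm-col-npcomp}. Since any problem in \textsf{NP} trivially lies in para-\textsf{NP} (take $f$ to be a constant and ignore the parameter entirely), the problem belongs to para-\textsf{NP} for the parameter $k$ in all three models.

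For hardness, I would invoke \Cref{thm-col-npcomp}, which shows that \textsc{Coloring Discovery} is \textsf{NP}-hard for \emph{every} fixed $k \geq 3$, even on planar bipartite graphs; in particular the single slice $k = 3$ is \textsf{NP}-hard on this graph class. By the characterization above, \textsf{NP}-hardness of one fixed slice of the parameter immediately yields para-\textsf{NP}-hardness, and the restriction to planar bipartite graphs carries over because the slice is already hard on exactly that class.

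Combining the two parts gives para-\textsf{NP}-completeness in all three models, even on planar bipartite graphs. The only ``obstacle'' here is conceptual rather than technical: one must use the correct formulation of para-\textsf{NP}-hardness (hardness of a fixed slice of the parameter) instead of attempting an fpt-reduction from scratch. The entire combinatorial content has already been supplied by \Cref{thm-col-npcomp} and \Cref{prop-in-np}, so the corollary is a bookkeeping consequence of those two statements.
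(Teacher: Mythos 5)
Your proposal is correct and matches the paper's intent: the paper derives this corollary directly from \Cref{thm-col-npcomp} and \Cref{prop-in-np} without further argument, and you have simply made explicit the standard fact that para-\textsf{NP}-hardness follows from \textsf{NP}-hardness of a single fixed slice of the parameter (here $k=3$ on planar bipartite graphs) while membership in para-\textsf{NP} follows from plain \textsf{NP}-membership. Nothing is missing.
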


\begin{theorem}\label{thm-col-whard}
\textsc{Coloring Discovery} parameterized by $k+\budget$ is \textsf{W[1]}-hard in the color sliding model.
\end{theorem}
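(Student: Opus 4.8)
The plan is to give a parameterized reduction from \textsc{Multicolored Clique}, which is \textsf{W[1]}-hard parameterized by the number $\kappa$ of color classes~\cite{cygan2015parameterized}. Given an instance $(G,\kappa)$ with classes $V_1,\dots,V_\kappa$, I would construct a graph $H$ together with an initial (non-proper) $k$-coloring $\phi$ so that $H$ decomposes into $\kappa$ \emph{selection gadgets}, one per class, and $\binom{\kappa}{2}$ \emph{verification gadgets}, one per pair of classes. The number of colors would be $k = \Oof(\kappa^2)$ (a private color for each class and each pair, plus a constant number of global colors) and the budget would be set tightly to $\budget = \kappa + c\binom{\kappa}{2}$ for the appropriate constant $c$. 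Thus $k+\budget = \Oof(\kappa^2)$ is a function of $\kappa$ alone, as required for an \textsf{FPT}-reduction, and membership of the instance in \textsf{NP} is already guaranteed by \Cref{prop-in-np}.

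The selection gadget for class $i$ would carry a monochromatic conflict that \emph{cannot} be repaired by sliding within the gadget's core, so that the only cheap repair is to slide in a color from one of the $|V_i|$ ``choice'' neighbours, each corresponding to a vertex of $V_i$. Exactly one slide then fixes the conflict, and which choice neighbour is used encodes the selected vertex $v_i \in V_i$; the unavoidable side effect (the choice neighbour inheriting the displaced color) is what I would route into the verification gadgets. The verification gadget for a pair $(i,j)$ would contain a conflict resolvable within its allotted slides only when the colors delivered from the two selection gadgets are consistent with an actual edge $\{v_i,v_j\}\in E(G)$; for a non-adjacent pair of choices the gadget would retain a conflict whose repair exceeds the remaining budget, enforced by the ``$\budget+1$ forbidden neighbours'' padding trick already used in \Cref{thm-col-npcomp}. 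The forward direction is then routine: a multicolored clique $\{v_1,\dots,v_\kappa\}$ prescribes exactly one intended slide per selection gadget and the intended slides per verification gadget, reaching a proper coloring within budget.

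The reverse direction is where the real work lies, and I expect it to be the main obstacle. Since a single color slide is a transposition along an edge, every useful slide has a side effect that could in principle be exploited to repair something elsewhere; controlling these interactions is the crux. The key lemma to establish is a \emph{tightness} statement: each selection gadget needs at least one slide and each verification gadget at least $c$ slides merely to become conflict-free, so any proper coloring reached within $\budget = \kappa + c\binom{\kappa}{2}$ slides must spend precisely the intended number in each gadget, leaving no spare moves. Given this, the absence of spare moves forces each selection to be a single honest choice and each verification to certify an edge, so the selected vertices are pairwise adjacent and form a multicolored clique. The two delicate points are (i) designing the gadget cores so that the forbidden repairs genuinely cost more than the global budget, and (ii) showing that the side effects of the honest slides are absorbed locally rather than propagating new conflicts across gadget boundaries, which I would secure by allotting private colors to each gadget so that the information-carrying interface edges are bichromatic exactly under the intended slide sequence.
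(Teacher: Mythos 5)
Your proposal is a plan rather than a proof: the object on which everything hinges, the verification gadget, is never constructed, and you yourself flag the reverse direction as the unresolved ``main obstacle.'' The difficulty is real, not a routine detail. With only $k=\Oof(\kappa^2)$ colors, a verification gadget for a pair $(i,j)$ cannot read off \emph{which} vertex of $V_i$ was selected from the color it receives --- the class-private color is the same for every vertex of $V_i$ --- so the identity must be carried positionally, by which gadget vertex the displaced color lands on. In the sliding model that information can only travel along edges, one transposition at a time, each costing budget and each leaving a side effect that a dishonest solution could exploit elsewhere. Your tightness lemma (each gadget needs at least its allotted slides, hence no spare moves) is exactly the right kind of statement, but it is asserted rather than proved, and it cannot be proved before the gadgets exist: whether a non-edge leaves a conflict whose repair ``exceeds the remaining budget'' depends entirely on distances and color multiplicities you have not specified. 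As written, the argument does not establish \textsf{W[1]}-hardness.

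The paper sidesteps the entire difficulty by reducing from \textsc{Multicolored Independent Set} instead of \textsc{Multicolored Clique}, so that no verification gadgets are needed at all. It attaches a clique $w_1,\dots,w_{\kappa+1}$ carrying only two colors, which forces $\kappa$ distinct colors to be stolen from the classes $V_1,\dots,V_\kappa$, one per class, at a cost of two slides each (through intermediate vertices $u_i$), exhausting the budget $\budget=2\kappa$ exactly. Each donating vertex $v_i$ is left holding color $\kappa+1$, and properness of the final coloring by itself forces these $\kappa$ vertices to be pairwise non-adjacent: the graph $G$, kept as a subgraph of $H$, does the verification for free, with $k=\kappa+2$. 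If you want to salvage a clique-based reduction you must design and analyze the edge gadgets in full; otherwise, the independent-set route is the lesson to take away here.
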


\begin{proof}
We reduce from the \textsc{Multicolored Independent Set} problem, known to be \textsf{W[1]}-complete. Let $(G,\kappa)$ be an instance of the problem where $V(G)$ is partitioned into independent sets $V_1,\ldots, V_{\kappa}$. We assume, without loss of generality, that each set $V_i$ has size at least $3\kappa + 1$; we can vertices appropriately otherwise. We construct an instance $(H,\phi,\budget)$ of \textsc{CD} as follows. First, let $H$ be a copy of~$G$. Then, we add new vertices $u_i$, for $1 \leq i\leq \kappa$, and new vertices $w_j$, for $1\leq j\leq \kappa + 1$. We connect $u_i$ with $w_j$, for $1 \leq i\leq \kappa$ and $1 \leq j\leq \kappa + 1$. We further add edges to turn the $w_j$ vertices, $1 \leq j\leq \kappa + 1$, into a clique of size $\kappa + 1$. For each $i \in [\kappa]$, we connect $u_i$ with every vertex in~$V_i$. We color all vertices of $V_i$ using color $i$, the vertices~$u_i$, $i \in [\kappa]$, using color~$\kappa+1$, the vertices~$w_j$, $j \in [\kappa]$, using color~$\kappa+2$, and vertex $w_{\kappa + 1}$ using color $\kappa + 1$. Hence, we have  $k=\kappa+2$ and we set the budget $\budget$ to $2\kappa$.  

Let $I = \{v_1, v_2, \ldots v_\kappa\}$ denote a multicolored independent set in $G$, where each vertex $v_i$ belongs to $V_i$, for $i \in [\kappa]$. To arrive at a proper coloring starting from $\phi$ and using $\budget = 2\kappa$ slides we proceed as follows. We first perform the slides $(v_i, u_i)$, for $i \in [\kappa]$. Next, we perform the slides $(u_i, w_i)$, for $i \in [\kappa]$. Note that every vertex $w_j$, $j \in [\kappa]$, now has a different color in $[\kappa]$, all the vertices $u_i$, $i \in [\kappa]$, have color $\kappa + 2$, vertex $w_{\kappa + 1}$ has color $\kappa + 1$, and the vertices of $I$ have color~$\kappa + 1$, which is a proper coloring of $H$.   

For the reverse direction, assume that $(H,\phi,\budget)$ is a yes-instance of \textsc{CD}. The only way to arrive at a proper coloring within $\budget = 2\kappa$  slides is to color the vertices~$\{w_j \mid j \in [\kappa + 1]\}$, using $\kappa + 1$ different colors; as these vertices form a clique. As each of those vertices has $\kappa$ neighbors colored~$\kappa+1$, we need to ``steal'' exactly $\kappa$ colors from $G$. As we must steal $\kappa$ distinct colors, each color must come from a different~$V_i$ set, $i \in [\kappa]$. Note that sliding a single color from some $V_i$ to some vertex $w_j$ requires a budget of $2$. Moreover, in the resulting proper coloring, a vertex $u_i$ cannot be colored using color $i$ since $u_i$ is adjacent to at least $3\kappa$ vertices of color $i$ (by our assumption that $|V_i| \geq 3\kappa + 1$). Moreover, no two vertices $w_j$ and $w_j'$ can receive the same color and no two vertices $u_i$ and $w_j$ can receive the same color. Putting it all together, it follows that there must exist $\kappa$ vertices, one from each $V_i$ set, that can receive color $\kappa+1$ therefore forming an independent set. This corresponds to a multicolored independent set in $G$. 
\end{proof}

\begin{theorem}\label{thm-col-tw-whard}
\textsc{Coloring Discovery} parameterized by the treewidth of the input graph is \textsf{W[1]}-hard in the color sliding model.
\end{theorem}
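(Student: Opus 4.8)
The plan is to reduce from the \textsc{List Coloring} problem parameterized by the treewidth of the input graph, which is \textsf{W[1]}-hard~\cite{FellowsFLRSST11}. This is the natural source: in \Cref{thm-col-whard} the hardness was driven by the number of colors and the budget, both bounded by the parameter, while the treewidth of the constructed graph was large because of the clique on the $w_j$ vertices. Here I want to flip the roles, keeping the treewidth bounded by the parameter while letting the number of colors $k$ and the budget $\budget$ grow with the instance size. Recall that an instance of \textsc{List Coloring} is a graph $G$ on $n$ vertices together with lists $L(u)\subseteq[k]$, and the question is whether there is a proper coloring $\phi$ with $\phi(u)\in L(u)$ for all $u$; the parameter is $\mathrm{tw}(G)$.

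First I would reuse, essentially verbatim, the list-enforcing gadget from the proof of \Cref{thm-col-npcomp}, the sole difference being that the number of colors $k$ is now the (unbounded) number of colors of the \textsc{List Coloring} instance rather than a fixed constant. Concretely, starting from a copy of $G$ with every vertex colored with color $k$, I attach to each $u\in V(G)$ the colored neighbors $u_1,\dots,u_{k-1}$, for each $c\in L(u)$ the $n$ two-vertex pendant paths $x_c^i\,y_c^i$, and for each $c\notin L(u)$ the $\budget+1$ penalty leaves $x_c^i$, and I set $\budget=n(n+1)$. Correctness in the color sliding model is then exactly as in \Cref{thm-col-npcomp}: a valid list coloring $\psi$ is realized by swapping $u$ with $u_{\psi(u)}$ and swapping the $n$ pairs $x_{\psi(u)}^i\,y_{\psi(u)}^i$ (each swap being a single slide along an edge), for a total of $n(n+1)=\budget$ slides; conversely, assigning some vertex $u$ a color $c\notin L(u)$ would force recoloring all $\budget+1$ leaves $x_c^i$, and since these are pairwise non-adjacent each one requires its own slide, exceeding the budget.

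The one genuinely new point, and the step I expect to be the crux, is bounding the treewidth of the constructed graph $H$ in terms of $\mathrm{tw}(G)$ alone, independently of $k$ and $\budget$. The key observation is that every gadget vertex is attached to $G$ as part of a tree hanging off a single vertex $u$: the $u_i$ and the penalty leaves are leaves adjacent only to $u$, and each pair $x_c^i, y_c^i$ forms a path of length two rooted at $u$. Hence, starting from any tree decomposition of $G$ of width $\mathrm{tw}(G)$, I can introduce for each gadget vertex a fresh bag containing it together with its unique neighbor in $H$, attached below a bag that already contains $u$; this never creates a bag larger than the maximum of the largest original bag and $2$, so $\mathrm{tw}(H)\le\max(\mathrm{tw}(G),1)$. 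Since the parameter is thus preserved up to an additive constant and the reduction is clearly polynomial, this yields \textsf{W[1]}-hardness of \textsc{Coloring Discovery} in the color sliding model parameterized by treewidth. The only remaining care is to verify that the budget accounting of \Cref{thm-col-npcomp} is robust to $k$ being unbounded, which it is, since both directions depend only on $n$ and not on the number of colors.
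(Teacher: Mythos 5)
Your reduction is correct, but it follows a different route from the paper's. You reduce from \textsc{List Coloring} parameterized by treewidth and recycle the list-enforcing gadget of \Cref{thm-col-npcomp}, observing that the only thing that changes is that $k$ is no longer constant and that the gadget consists solely of pendant trees of depth at most two hanging off vertices of $G$, so that $\mathrm{tw}(H)\le\max(\mathrm{tw}(G),1)$. All the individual steps check out: the forward direction costs at most $1+n$ slides per vertex, hence at most $n(n+1)=\budget$ in total; in the reverse direction the $\budget+1$ penalty leaves for a forbidden color are pairwise non-adjacent, so recoloring them all needs at least $\budget+1$ slides; and the correctness argument is indeed independent of $k$. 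The paper instead reduces from \textsc{Precoloring Extension} (from the same reference, also \textsf{W[1]}-hard for treewidth) with a leaner gadget: each precolored vertex $v$ receives $(r-1)n$ pendants covering all colors other than $c(v)$, which locks its color under the budget $\budget=n-|W|$, while each uncolored vertex receives $r-1$ pendants colored $1,\dots,r-1$ and is itself colored $r$, so that a single swap per uncolored vertex realizes any proper extension. Your approach buys uniformity -- one gadget serves both the \textsf{NP}-completeness and the treewidth hardness proofs -- at the cost of a quadratic budget and a slightly bulkier construction; the paper's approach mirrors the strategy used for \textsc{$k$-Fix} and \textsc{$k$-Swap} in the prior literature and keeps the budget linear. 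One cosmetic remark: in your tree-decomposition argument, the bag $\{x_c^i,y_c^i\}$ for a depth-two pendant should be attached below the bag $\{u,x_c^i\}$ (which contains $x_c^i$), not directly below a bag containing $u$; this does not affect the width bound.
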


\begin{proof}
We follow the same strategy developed for the \textsc{$k$-Fix} and \textsc{$k$-Swap} problems in~\cite{de2019complexity}. 
In the \textsc{Precoloring Extension (PrExt)} problem, we are given a graph~$G = (V, E)$, a set $W \subseteq V(G)$ of precolored vertices, and a precoloring $c : W \rightarrow [r]$, $r \geq 2$, of the vertices in $W$. The goal is to decide whether
there is a proper $r$-coloring $c'$ of $G$ extending the coloring $c$ (i.e., $c'(v) = c(v)$ for every $v \in W$). \textsc{PrExt} is known to be \textsf{W[1]}-hard when parameterized by the treewidth of the input graph~\cite{DBLP:journals/iandc/FellowsFLRSST11}.

Let $(G, W, c)$ be an instance of \textsc{PrExt}, where we assume, without loss of generality, that $c$ is a proper $r$-coloring of $G[W]$. We construct an instance $(H,\phi,\budget)$ of \textsc{CD} as follows. First, let $H$ be a copy of $G$ and let $k = r$ denote the number of available colors. Now, we choose $\phi(v) = c(v)$ for every vertex $v \in W$. For each vertex $v \in W$, we add $(r - 1)n$ pendent vertices, where each group of $n$ vertices receives one of the colors in $[r] \setminus c(v)$. Note that, as long as we set our budget to $\budget \leq n$, the added pendent vertices guarantee that no vertex of $W$ can change colors. 
Finally, for each vertex $u \in V(G) \setminus W$, we add $r - 1$ pendent vertices, colored from $1$ to $r - 1$, and we assign vertex $u$ color $r$, i.e., all vertices $u \in V(G) \setminus W$ receive color~$r$. We set $\budget = n - |W|$, which concludes the construction. It is not hard to see that both instances are indeed equivalent and then the result follows from the well-known fact that the
addition of vertices of degree at most one does not increase the treewidth of a graph. 
\end{proof}

\subsection{Tractability}
We consider the \textsc{Coloring Discovery} problem with $k = 2$ colors. To simplify notation, we assume in this subsection the color set to be $\{\cred, \cgreen\}$. It is known that the \textsc{Coloring Discovery} problem for $k=2$ is solvable in polynomial time under the color flipping model \cite{garnero2018fixing}. We show that this is also true for the other two reconfiguration models, i.e., color swapping and color sliding.
A graph is $2$-colorable if and only if it is bipartite. We can check by a graph search in polynomial time whether a given graph $G$ is non-bipartite and declare a no-instance in this case. Hence, in what follows we restrict ourselves to bipartite graphs.

\begin{theorem}\label{thm_twoColors_swap}
\textsc{Coloring Discovery} with $k=2$ is solvable in polynomial time 
in the color swapping model. 
\end{theorem}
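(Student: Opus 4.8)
The plan is to exploit a simple invariant of the color swapping model: a swap exchanges the colors of two vertices and therefore preserves the total number of \cred{} and \cgreen{} vertices. Hence, writing $R$ for the number of vertices colored \cred{} by the initial coloring $\phi$, every coloring reachable from $\phi$ --- in particular every reachable proper coloring --- has exactly $R$ red vertices. Conversely, if $\psi$ is any coloring with exactly $R$ red vertices, then $\phi$ and $\psi$ disagree on an even number $d(\phi,\psi)$ of vertices, with exactly $d(\phi,\psi)/2$ vertices that are red in $\phi$ and green in $\psi$ and equally many that are green in $\phi$ and red in $\psi$. Pairing these two groups and swapping each pair transforms $\phi$ into $\psi$ using $d(\phi,\psi)/2$ swaps, and since a single swap alters at most two vertices, this is optimal. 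Thus the minimum number of swaps needed to reach a target coloring with the correct color count equals half the Hamming distance to it.

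It therefore remains to minimize $d(\phi,\psi)$ over all \emph{proper} $2$-colorings $\psi$ with exactly $R$ red vertices and to test whether this minimum is at most $2\budget$. I would first verify in polynomial time that $G$ is bipartite (declaring a no-instance otherwise) and compute its connected components $G_1,\dots,G_c$ together with their bipartition classes $(A_i,B_i)$. A proper $2$-coloring assigns, independently in each component, color \cred{} to one side and \cgreen{} to the other; so component $G_i$ offers exactly two options, contributing either $|A_i|$ or $|B_i|$ red vertices together with a correspondingly determined number of local disagreements with $\phi$. Crucially, because swaps may pair arbitrary vertices, the $d(\phi,\psi)/2$ correcting swaps may cross component boundaries, so it suffices to control the \emph{global} red count rather than balancing each component separately.

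This yields a knapsack-style dynamic program. I would build a table $T[i][\rho]$ holding the minimum total number of disagreements achievable from the first $i$ components under the constraint that they contribute exactly $\rho$ red vertices, updating it by trying both options for $G_i$ (adding $|A_i|$ or $|B_i|$ to $\rho$ and the associated local disagreement count to the cost). Since $\rho$ ranges over $\{0,\dots,n\}$ and there are $c \le n$ components, the table has $\mathcal{O}(n^2)$ entries, each computed in constant time. The instance is a yes-instance if and only if $T[c][R] \le 2\budget$; if no proper coloring attains red count $R$ (for instance when $\phi$ is monochromatic yet $G$ has an edge), then $T[c][R]=\infty$ and we correctly report a no-instance.

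The only genuinely nontrivial point is the reduction from reachability by swaps to a purely static count-and-distance condition. Everything hinges on verifying that the count invariant is not merely necessary but, together with the half-Hamming-distance optimality, exactly characterizes both reachability and its cost; the subtlety that the optimal swap-pairs need not respect the component decomposition is precisely what makes the \emph{global} (rather than per-component) red count the right quantity to optimize. Once this is established, the dynamic program is routine.
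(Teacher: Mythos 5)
Your proof is correct and follows essentially the same route as the paper's: both exploit the invariant that swaps preserve the number of \cred{} vertices, reduce the problem to choosing one of two orientations per connected component, and run a polynomial-size dynamic program over components tracking the global red count (your $T[i][\rho]$ with target $R$ is equivalent to the paper's running ``excess'' with target $0$, and your half-Hamming-distance cost equals the paper's count of red-to-green corrections). No gaps.
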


\begin{proof} 
For each connected component $G_i$ of the bipartite input graph $G$ 
we find a bipartition $(L_i,R_i)$ of $V(G_i)$.
%
Any proper coloring of $G_i$ colors all vertices of $L_i$ in color $\cred$ and all vertices of $R_i$ in color $\cgreen$, or vice versa. 

Both the needed budget for recoloring as well as the number of vertices colored in~$\cred$ and~$\cgreen$ in a proper coloring might be different for the two choices. 
For a given choice, both the needed number of swaps and number of colors can be calculated straightforwardly. 
As a swap always swaps colors~$\cred$ and~$\cgreen$, the overall needed budget is given by the number of vertices colored in $\cred$ that have to be recolored to $\cgreen$. Dependent on the two possibilities, we denote the needed budget by $b^i_1$ or $b^i_2$ to recolor the connected component $G_i$ properly. Additionally, for each choice we define the excess of a connected component $G_i$ by the number of vertices that are initially colored in color $\cred$ minus those that have to be colored in color $\cred$, i.e.,
        \begin{align*}
        e_1(G_i) &= |\{v \in V(G_i) \mid \phi(v) = \cred\}| - |L_i|\;, \text{ and}\\
        e_2(G_i) &= |\{v \in V(G_i) \mid \phi(v) = \cred\}| - |R_i|\;.
        \end{align*}
        
        Thus, dependent on the two possible choices we either need a budget $b^i_1$ and realize an excess~$e_1(G_i)$, or a budget of $b^i_2$ with excess $e_2(G_i)$. Overall, we can recolor the whole graph with the given budget if and only if there is a set of choices such that the overall excess is 0 and the needed budget is at most $\budget$. More formally, we can properly recolor a graph $G$ with $q$ connected components if and only if there is a vector $\vec x \in \{1,2\}^q$ with
        \begin{align*}
        &\sum_{i=1}^{q} e_{x_i}(G_i) = 0 \;, \text{ and}\\
        &\sum_{i=1}^{q} b^i_{x_i} \leq \budget\;.
        \end{align*}
        
        In the remainder of the proof, we describe a dynamic program to solve the decision problem. We define a state of the dynamic program by a triple $(x, y, z)$ of integers. This triple corresponds to the fact that there are recoloring choices for the connected components $G_1, \dots, G_x$ such that the sum of excesses is $y$ and the needed budget is $z$. We say state $(0,0,0)$ is \emph{possible}. Based on the two possible choices for each component, we reach 2 other possible states from each state. If $(x,y,z)$ is possible, $(x+1, y+e_1(G_{x+1}), z+b^{x+1}_1)$ and $(x+1, y+e_2(G_{x+1}), z+b^{x+1}_2)$ are possible. Note that there are at most $|V(G)|^3$ possible states as the number of connected components, the overall excess, and the budget can never exceed $|V(G)|$, respectively.

        By graph search for some state $(q,0,z)$, where $q$ is the number of connected components and $z\leq\budget$, we solve the problem.
\end{proof}

\begin{theorem}\label{thm_twoColors}
\textsc{Coloring Discovery} with $k = 2$ is solvable in polynomial time under  the color sliding model. 
\end{theorem}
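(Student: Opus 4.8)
The plan is to exploit an invariant that is specific to the sliding model and then reduce each connected component to a minimum-weight perfect matching computation, exactly as in \Cref{thm-vc-fpt-k} and \Cref{thm-is-fpt-covering}. Since we have already reduced to bipartite graphs, fix a bipartition $(L_i,R_i)$ of each connected component $G_i$ of $G$, so that the only two proper $2$-colorings of $G_i$ color $L_i$ with $\cred$ and $R_i$ with $\cgreen$, or vice versa. The crucial observation is that a $(u,v)$-color-slide merely exchanges the colors of two \emph{adjacent} vertices, which always lie in the same component; hence the number of $\cred$-colored vertices of each component $G_i$ is invariant under sliding. In contrast to the color swapping model of \Cref{thm_twoColors_swap}, colors can therefore not be transferred between components, so the components are completely independent and no global balancing (and hence no dynamic program) is needed.

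Next I would determine, for each component, which of its two proper colorings are reachable. Reusing the excess notation $e_1,e_2$ from \Cref{thm_twoColors_swap}, the choice ``$L_i$ is $\cred$'' (resp.\ ``$R_i$ is $\cred$'') can be reached only if the number of initially red vertices of $G_i$ equals $|L_i|$ (resp.\ $|R_i|$), that is, $e_1(G_i)=0$ (resp.\ $e_2(G_i)=0$), because of the invariant above. If neither equality holds for some component, the instance is a no-instance. Conversely, whenever one of these equalities holds the corresponding proper coloring is reachable, since $G_i$ is connected and the number of reds already matches the target.

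The technical heart is to show that the minimum number of slides needed to turn the restriction of $\phi$ to $G_i$ into a fixed reachable target coloring $\psi$ equals the minimum weight of a perfect matching between the set of currently red vertices and the set of target-red vertices (that is, $L_i$ or $R_i$), where the weight of a pair is the length of a shortest path between them in $G_i$. For the lower bound I would discard all slides that swap two equally colored vertices (they act as the identity on the coloring) and assign identities to the red tokens; each remaining slide advances exactly one red token along one edge, so the number of slides equals the total length of the walks traced by the red tokens, which is at least the cost of the induced start-to-end matching and hence at least the matching optimum. A standard exchange argument using the triangle inequality shows that correctly placed reds may be matched to themselves at zero cost, so this optimum coincides with the matching between the genuinely misplaced reds and the vacant target positions. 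For the upper bound I would route the matched reds along shortest paths, resolving collisions by the destination-switching argument already used in \Cref{thm-vc-fpt-k}; since same-colored tokens are indistinguishable this never exceeds the matching weight.

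Finally, for each component I take the minimum matching weight over its (one or two) reachable target colorings; the instance is a yes-instance if and only if the sum of these minima over all components is at most $\budget$. Bipartiteness testing, shortest-path computations, and minimum-weight perfect matching (e.g.\ via the blossom algorithm in $\mathcal{O}(n^3)$) are all polynomial, so the overall procedure runs in polynomial time. I expect the main obstacle to be the third step: carefully justifying that the per-component optimum is exactly the matching value in both directions, which is precisely where the indistinguishability of same-colored tokens and the switching argument are essential.
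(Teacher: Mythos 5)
Your proof is correct, and the high-level decomposition (restrict to bipartite graphs, treat connected components independently, try both bipartition orientations per component, sum the per-component minima) matches the paper's. Where you genuinely diverge is in the core computation and its justification. You observe that with $k=2$ the color sliding model is exactly unlabeled token sliding (red vertices carry tokens, same-color swaps are no-ops and can be discarded), so the per-component, per-orientation optimum is the minimum-weight perfect matching between current red positions and target red positions under shortest-path distances; the lower bound comes from tracking token walks and the upper bound from the destination-switching argument of Calinescu--Dumitrescu--Pach, exactly as the paper already does for \textsc{Vertex Cover Discovery} and \textsc{Independent Set Discovery}. The paper instead sets up a minimum-cost maximum-flow instance from the miscolored vertices of $L$ to the miscolored vertices of $R$ and proves the correspondence between flow cost and slide count from scratch: one direction via an explicit alternating swap schedule along shortest flow-carrying paths (which fixes only the two endpoints and costs exactly the path length), the other via an id-tracking argument that extracts flow paths from an arbitrary slide sequence. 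Since a min-cost flow with unit supplies and demands is an assignment problem, the two formulations compute the same value; your route buys uniformity with the token-sliding machinery used elsewhere in the paper and lets you cite the switching argument rather than reprove it, while the paper's route is self-contained and yields an explicit constructive recoloring procedure without appealing to the matching characterization. One small point worth making explicit in your write-up: your feasibility test ($e_1(G_i)=0$ or $e_2(G_i)=0$) and your claim that a red-count-matching target is always reachable both follow from the matching argument itself (the matching has finite cost in a connected component), so you may as well fold them into the main lemma rather than argue them separately.
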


\begin{proof}
In the color sliding model we can treat connected components independently. Thus, we assume that $G$ is connected. 
First, we compute a bipartition $(L,R)$ of $G$. There are, in principle, two possible 
proper $2$-colorings of $G$, either a vertex is colored in $\cred$ if and only if it is in $L$ or it is colored in $\cgreen$ if and only if it is in $R$. We treat both cases separately and choose the one with smallest cost. Summing over all connected components yields the result. 
We assume now that all vertices in~$L$ should be colored using color $\cred$. The other case is analogous.

We will solve the \textsc{CD} problem by computing a minimum cost maximum flow in an adjusted network. 
First, note that the number of vertices that need to be recolored in $L$ must be the same as the number of vertices that need to be recolored in $R$. 
Otherwise, the \textsc{CD} instance is not solvable. Let this number be denoted by~$q$. Given the graph $G=(V,E)$ with an initial coloring, we design the \textsc{Minimum Cost Maximum Flow} instance as follows. 
We copy the graph $G$, add a source $s$ and connect it by an arc with cost $0$ and capacity $1$ to all $v\in L$ that are colored in color~$\cgreen$, i.e., that have to be recolored. 
We add a sink $t$ with incoming arcs from each $v\in R$, that is wrongly colored, i.e., colored with $\cred$. 
These arcs have capacity $1$ and cost $0$. 
For all original edges $\{u,v\}$, we add an arc $(u,v)$ and $(v,u)$ both with cost $1$ and capacity $q$. 
We then compute a maximum flow of minimum cost from $s$ to $t$. We will show that there is a flow with flow value~$q$ and cost at most $\budget$ if and only if the \textsc{Coloring Discovery} instance is a yes-instance. 
The reduction also provides a constructive proof, i.e., we can compute a sequence of swaps yielding the recoloring. 
Note that computing a maximum flow of minimum cost can be accomplished in polynomial time~\cite{DBLP:journals/combinatorica/Tardos85}.

First, we show that if there is a flow $f$ with $|f| = q$ and cost $\budget$, we can recolor the graph and obtain a proper $2$-coloring using $\budget$ swaps. For a given flow $f$, we calculate a shortest path from $s$ to $t$ using only edges with positive flow. Let $(s, v_1, v_2, v_3, \dots, v_\ell, t)$ denote such a shortest path. As it is a shortest path it has the property that initially $v_1, v_2, v_4, v_6, \dots, v_{\ell - 2}$ are colored in color $\cgreen$, and $v_3, v_5, \dots, v_{\ell-1}, v_\ell$ are colored in color~$\cred$. Otherwise if any $v_i$, $i \geq 3$ being odd, is wrongly colored, then we have an edge from $s$ to $v_i$. Similarly, for $i \leq \ell - 2$ and $i$ being even, if~$v_i$ is wrongly colored, then we have an edge from $v_i$ to $t$, also a contradiction. 
Hence, we can swap the colors as follows. First, swap colors along the edges $(v_2, v_3), \dots, (v_{\ell-2}, v_{\ell-1})$. After these $(\ell-2)/2$ swaps, nodes $v_1, v_3, v_5, \dots, v_{\ell-1}$ are colored in $\cred$ and $v_2, v_4, \dots, v_{\ell}$ in color $\cgreen$. Now, perform color swaps along the edges $(v_1,v_2), (v_3,v_4), \dots, (v_{\ell -1}, v_\ell)$. 
After these $\ell/2$ swaps, the color of vertices $v_i$ is $\cred$ if and only if $i$ is odd. Thus, the total number of performed swaps is $(\ell-2)/2 + \ell/2 = \ell-1$, i.e., the length of the path. 
Furthermore, the swaps exactly change the colors of vertices~$v_1$ and~$v_\ell$ and all other vertices preserve their original color. Thus, all vertices on the path are colored correctly after the swaps and~$q$ is reduced by $1$. We reduce the flow by 1 along the path and continue with a next shortest path. Overall, we obtain a proper coloring after performing exactly $\budget$ swaps.

Second, we show that if we can recolor the graph by $\budget$ slides, there is a flow $f$ with $|f| = q$ and cost $\budget$ in the constructed network. Suppose we are given a slide sequence of $\budget$ slides that recolors the graph into a proper $2$-coloring. Given this sequence, we create the flow step by step by raising flow along paths from $s$ to $t$. We start with the $0$-flow. Let $e_1, e_2, \dots, e_\budget$ denote the edges of the slide  sequence. Note that some edges might appear multiple times in the sequence. Before performing the slides, we assume vertices are equipped with ids and we attach the id of the vertex to it's color. 
When sliding colors, we swap the ids together with the colors. That is, consider some vertex $v_i$ with some initial color, say $\cgreen$ in $L$. After a slide $(v_i,v_j)$ the vertex $v_i$ is colored in color~$\cred$. The id of this new color of $v_i$ points to $v_j$ that was initially colored in this color. By tracking the id during the slide procedure, we obtain a path from each vertex to the origin of the new color. For a vertex $v_i$ that received a color from $v_j$ in the sliding procedure we denote the path by $p(v_i,v_j)$. 
If $v_i \in L$ either $v_j$ was initially a wrongly colored node in $R$, or a correctly colored node in $L$. If $v_j \in R$, we stop and raise the flow along the path $(s, v_i),p(v_i,v_j), (v_j, t)$. Otherwise, $v_j \in L$, i.e., is now colored with color $\cred$. Tracking the id of the new color of $v_j$ analogously to the description above yields a path to some node $v_\ell$. We continue this procedure until we reach a node $v_{\ell'}$ in $R$. Note that the ids are unique, i.e., this procedure cannot cycle. 
By tracking the whole concatenated path $(s,v_i),p(v_i,v_j), p(v_j,v_\ell), \dots, (v_\ell', t)$ we obtain a path from $s$ to $t$ along which we raise the flow by $1$. 
We continue with the next node in $L$ that was initially colored with color $\cgreen$. Overall, we obtain a flow with flow value $q$ and cost $\budget$ as each edge with positive cost used by some flow unit corresponds to a slide in the slide procedure. 
\end{proof}

Using similar techniques to those used in \Cref{thm-vc-nowhere}, we can show the following result. Note that, unlike \Cref{thm-vc-nowhere}, here we need $k$ as part of the parameter so that our first-order formula can access all $k$ colors in an instance. We add the colors as unary predicates so that they are accessible to first-order logic. Again we quantify the $q\leq \max\{k,\budget\}$ vertices that should change their color and hardcode with a formula of length $f(\budget)$ how to arrive at the new coloring. 

\begin{theorem}
The \textsc{Coloring Discovery} problem (in all
three models) is fixed-parameter tractable when parameterized by $k + \budget$ and restricted to structurally nowhere dense classes of graphs. 
\end{theorem}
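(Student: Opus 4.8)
The plan is to reduce \textsc{Coloring Discovery} to the first-order model-checking problem and invoke its fixed-parameter tractability on structurally nowhere dense classes~\cite{dreier2023first}, exactly as in \Cref{thm-vc-nowhere}. Given an instance $(G,\phi,\budget)$, the first step is to make the initial coloring visible to first-order logic by marking each color class with a unary predicate: introduce symbols $C_1,\dots,C_k$ and set $C_c(v)$ to hold precisely when $\phi(v)=c$. It then suffices to construct a sentence $\Psi$ of length $f(k,\budget)$ that holds in this enriched structure if and only if the instance is a yes-instance; the assumption that $k$ is part of the parameter is exactly what allows $\Psi$ to name all $k$ colors.

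The structural key is that only few vertices can be affected. A single flip recolors one vertex and a single swap or slide recolors at most two, so a reconfiguration sequence of length at most $\budget$ touches a set $T$ of at most $2\budget$ vertices (at most $\budget$ in the flipping model), and every vertex outside $T$ retains its original color. I would therefore existentially quantify over the vertices of $T$ and encode, as one big disjunction of length bounded by a function of $k+\budget$, the following guesses: the initial color of each $t_i\in T$ (a conjunct $C_{c_i}(t_i)$ verifies consistency with $\phi$), the exact sequence of at most $\budget$ operations performed among the $t_i$, and the resulting color of each $t_i$ obtained by simulating that sequence. Since $|T|\le 2\budget$, the number of candidate sequences and color assignments depends only on $k+\budget$.

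It remains to certify that the target coloring $\chi$ — equal to the guessed colors on $T$ and to $\phi$ elsewhere — is proper. This decomposes into three first-order checks: monochromatic edges inside $T$ are excluded by the guessed colors; for each $t_i$ with target color $c$, a clause $\forall w\bigl(E(t_i,w)\wedge\bigwedge_j w\neq t_j \to \neg C_c(w)\bigr)$ forbids a clash with an untouched neighbour; and one clause over two vertices forbids any monochromatic edge with both endpoints outside $T$, since such an edge could never be repaired. The three models differ only in which operation sequences are admissible in the simulation: in \emph{color flipping} no reachability constraint beyond $|T|\le\budget$ is needed, as each chosen vertex is recolored in a single step; in \emph{color swapping} each operation is an unrestricted transposition of two vertices of $T$; and in \emph{color sliding} each operation is a transposition of $t_i,t_j$ for which $\Psi$ additionally asserts the adjacency $E(t_i,t_j)$.

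The main obstacle is to argue faithfully that confining the whole reconfiguration to the quantified set $T$ is without loss of generality, and in particular to handle the sliding model, where a color may have to travel along a long path to reach a distant vertex. Here the touched-set bound does the work: the entire path is touched and hence quantified, and each elementary slide is a transposition along an edge joining two of the $t_i$, which the adjacency conjuncts verify; the net effect on colors outside $T$ is none. Once this encoding is shown correct, $\Psi$ has length $f(k,\budget)$ and the theorem follows by the meta-theorem on structurally nowhere dense classes just as in \Cref{thm-vc-nowhere}.
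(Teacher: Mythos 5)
Your proposal is correct and follows essentially the same route as the paper: mark the $k$ color classes with unary predicates, observe that a sequence of at most $\budget$ steps touches at most $2\budget$ vertices, existentially quantify over those vertices, hardcode the (boundedly many) operation sequences and color assignments as a disjunction of length $f(k,\budget)$, verify properness with first-order clauses, and invoke the model-checking meta-theorem on structurally nowhere dense classes as in \Cref{thm-vc-nowhere}. The paper's own argument is only a brief sketch, and your write-up fills in the same details one would expect.
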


\section{Conclusion and future work}\label{sec:conclusion}
We have proved many positive and hardness results concerning solution discovery variants of fundamental graph problems. 
While some problems resulting from our framework have already been considered in the literature (albeit under different names), we believe that viewing such problems from a unified perspective can lead to more global insights and, hopefully, more unifying results. 

In future work, we plan to combine our framework with other established frameworks, e.g., dynamic problems, to categorize more of these dynamic real-world applications of decision-making problems. 
For instance, one can augment the solution discovery framework by allowing two input graphs~$G$ and $G'$ for each instance, where~$G'$ is obtained from $G$ by a small number of edge/vertex addition/deletions. 
Moreover, we require the initial state $S$ to satisfy certain properties in $G$. Now the question becomes whether we can transform $S$ to $S'$ using at most $\budget$ reconfiguration steps such that~$S'$ satisfies some/similar properties in $G'$. 
This generalization increases the degrees of freedom in which we can analyze the problems and pushes towards multivariate analyses, where the changes in the graph can now also be part of the parameter. 

Another possible avenue to explore is to relax the constraints that we impose on token positions. For instance, for graph vertex-subset problem, we can allow multiple tokens to occupy the same vertex. This, in turn, allows us to decouple the size of the starting configuration from the size of the desired target feasible solution. As an example, the \textsc{Independent Set Discovery} problem could be reformulated as follows. We are given a graph $G$, a starting configuration $S$, an integer $k \leq |S|$, and a budget $b$. The goal is now to decide whether we can reach an independent set of size at least $k$ starting from $S$ and using at most $b$ steps. Similarly, for the \textsc{Coloring Discovery} problem in the flipping model, one can add an additional input $q < k$ and ask whether we can reach a proper $q$-coloring starting from a (non-proper) $k$-coloring.

\bibliographystyle{plain}
\bibliography{ref}

\end{document}